\newif\ifignore 
\newcommand{\auxproof}[1]{
  \ifignore\mbox{}\newline
  \textbf{BEGIN: AUX-PROOF} \dotfill\newline
  {#1}\mbox{}\newline
  \textbf{END: AUX-PROOF}\dotfill\newline
  \fi}
\renewcommand{\marginpar}[1]{}    
\def\pb#1{\save[]+<20 pt,0 pt>:a(#1)\ar@{pb{}}[]\restore}
\newcommand*{\algrule}[1][\algorithmicindent]{%
   \makebox[#1][l]{%
       \hspace*{.2em}
       \vrule height .75\baselineskip depth .25\baselineskip
   }
}
\def\ALG@printindent{%
    \ifnum \theALG@nested>0
    \ifx\ALG@text\ALG@x@notext
    \else
    \unskip
    \ALG@printindent@tempcnta=1
    \loop
    \algrule[\csname ALG@ind@\the\ALG@printindent@tempcnta\endcsname]%
    \advance \ALG@printindent@tempcnta 1
    \ifnum \ALG@printindent@tempcnta<\numexpr\theALG@nested+1\relax
    \repeat
    \fi
    \fi
}
\patchcmd{\ALG@doentity}{\noindent\hskip\ALG@tlm}{\ALG@printindent}{}{\errmessage{failed to patch}}
\patchcmd{\ALG@doentity}{\item[]\nointerlineskip}{}{}{} 
\newtheorem{mytheorem}{Theorem}[section]
\newtheorem{mylemma}[mytheorem]{Lemma}
\newtheorem{myproposition}[mytheorem]{Proposition}
\newtheorem{myremark}[mytheorem]{Remark}
\newtheorem{mydefinition}[mytheorem]{Definition}
\newtheorem{myassumption}[mytheorem]{Assumption}
\spnewtheorem*{myproof}{Proof}{\itshape}{\rmfamily}
\def\myqed{\qed}
\definecolor{dgreen}{rgb}{0, .6, 0}
\newcommand{\cmark}{\ding{51}}%
\newcommand{\xmark}{\ding{55}}%
\newcommand{\ttrue}{\mathrm{t{\kern-1.5pt}t}}
\newcommand{\ffalse}{\mathrm{f{\kern-1.5pt}f}}
\newif\iftikzgnuplot
\pgfplotsset{compat=1.12}
\setlist[itemize]{label=\textbullet}
\newcommand{\integer}{\mathrm{int}}
\newcommand{\fractional}{\mathrm{frac}}
\newcommand{\Rev}{\mathit{Rev}}
\newcommand{\Opt}{\mathit{Opt}}
\newcommand{\Run}{\mathit{Runs}}
\newcommand{\pref}{\mathrm{pref}}
\newcommand{\str}{\mathit{str}}
\newcommand{\pat}{\mathit{pat}}
\newcommand{\reset}{\mathrm{reset}}
\newcommand{\eval}{\mathrm{eval}}
\newcommand{\Init}{\mathit{Immd}}
\newcommand{\solConstr}{\mathrm{solConstr}}
\newcommand{\CurrConf}{\mathit{CurrConf}}
\newcommand{\PrevConf}{\mathit{PrevConf}}
\newcommand{\Conf}{\mathit{Conf}}
\newcommand{\rhoEmpty}{\rho_{\emptyset}}
\newcommand{\noredundancy}[1]{#1^{\mathrm{r}}}
\newcommand{\regionstate}[4][]{
 \node[state,region] (#2) [#1] {
 \begin{tabular}{c}
  #3\\
  #4\\
 \end{tabular}
 }}
\tikzset{
region/.style={
rectangle,
rounded corners,
draw=black,very thick
},
accepting/.style={double distance=2pt}
}
\newcommand{\drawexectable}[1]{
 \pgfplotstabletypeset[
 multicolumn names, 
 display columns/0/.style={
 column name=$|w|$, 
 fixed,fixed zerofill,precision=0,
 },
 display columns/1/.style={
 sci,
 sci zerofill,
 column name=Naive Time\,(ms.)},
 display columns/2/.style={
 sci,
 sci zerofill,
 column name=BM Time\,(ms.)},
 display columns/3/.style={
 sci,
 sci zerofill,
 column name=BM + Preproc. Time\,(ms.)},
 display columns/4/.style={
 fixed,fixed zerofill,precision=0,
 column name=$|Z|$},
 every head row/.style={
 before row={\toprule}, 
 after row={\midrule} 
 },
 every last row/.style={after row=\bottomrule}, 
 ]{#1} 
 }
 \title{A Boyer-Moore Type Algorithm\\ for Timed Pattern Matching
}
 \author{
Masaki Waga
\inst{1}
%
  \and
Takumi Akazaki
\inst{1,2}
  \and
Ichiro Hasuo
\inst{1}
 }
 \institute{
     University of Tokyo, Tokyo, Japan
     \and
   JSPS Research Fellow, Tokyo, Japan
}
\begin{document}

\maketitle

\begin{abstract}
The \emph{timed pattern matching} problem is formulated 
by Ulus et al.\ and has been actively studied since, with its evident application in monitoring real-time
 systems. The problem takes as input a \emph{timed word/signal} and a
 \emph{timed pattern} (specified either by a \emph{timed regular
 expression} or by a \emph{timed automaton}); and it returns the set of
 those intervals for which the given timed word, when restricted to the
 interval, matches the given pattern.  We contribute a
 \emph{Boyer-Moore} type optimization in timed pattern matching, relying
 on the classic Boyer-Moore string matching algorithm and its extension
 to (untimed) pattern matching by Watson and Watson. We assess its
 effect through experiments; for some problem instances our Boyer-Moore type
 optimization achieves speed-up by two times, indicating its potential
 in real-world monitoring tasks where data sets tend to be massive.
\end{abstract}

\section{Introduction}
 Importance of systems' \emph{real-time} properties is ever growing,
 with rapidly diversifying applications of computer
 systems---cyber-physical systems, health-care systems, automated
 trading, etc.---being increasingly pervasive in every human activity.
 For real-time properties, besides classic problems in theoretical
 computer science such as \emph{verification} and \emph{synthesis}, the
 problem of \emph{monitoring} already turns out to be
 challenging. Monitoring asks, given an execution log and a
 specification, whether the log satisfies the specification; sometimes
 we are furthermore interested in \emph{which segment} of the log
 satisfies/violates the specification. In practical deployment scenarios
 where we would deal with a number of very long logs, finding matching 
 segments in a computationally tractable manner is therefore a pressing
 yet challenging matter.

 In this context, inspired by the problems of \emph{string} and
 \emph{pattern matching} of long research histories, Ulus et al.\
 recently formulated the problem of \emph{timed pattern matching}~\cite{Ulus2014}. 
 In their formalization, the problem takes as input a \emph{timed signal} 
 $w$ (values that change over the continuous notion of time) and a
 \emph{timed regular expression (TRE)} 
 $\mathcal{R}$
(a real-time extension of regular expressions); and it returns the
 \emph{match set} $\mathcal{M}(w,\mathcal{R})=\{(t,t')\mid t<t',
 w|_{(t,t')}\in L(\mathcal{R})\}$, where $w|_{(t,t')}$ is the
 restriction of $w$ to the time interval $(t,t')$ and $L(\mathcal{R})$
 is the set of signals that match $\mathcal{R}$.

 Since its formulation timed pattern matching has been actively
 studied. The first offline algorithm is introduced in~\cite{Ulus2014};
 its application in conditional performance evaluation is pursued
 in~\cite{Ferrere2015}; and in~\cite{Ulus2016} an online algorithm is
 introduced based on \emph{Brzozowski derivatives}. Underlying these
 developments is the fundamental observation~\cite{Ulus2014} that 
 the match set $\mathcal{M}(w,\mathcal{R})$---an uncountable subset of
 $\mathbb{R}_{\ge 0}^{2}$---allows a finitary symbolic representation by 
 inequalities.

\noindent\textbf{Contributions}\quad In this paper we are concerned with
\emph{efficiency} in timed pattern matching, motivated by our
collaboration with the automotive industry on various light-weight
verification techniques. Towards that goal we introduce optimization
that extends the classic \emph{Boyer-Moore} algorithm for string
matching (finding a pattern string $\pat$ in a given word
$w$). Specifically we rely on the extension of the latter to
\emph{pattern matching} (finding subwords of $w$ that is accepted by an
NFA $\mathcal{A}$) by Watson \& Watson~\cite{Watson2003}, and introduce
its \emph{timed} extension.  

We evaluate its efficiency through a series of experiments; in some
cases (including an automotive example) our Boyer-Moore type algorithm
outperforms a naive algorithm (without the optimization) by twice.  This
constant speed-up may be uninteresting from the complexity theory point
of view. However, given that in real-world monitoring scenarios the
input set of words $w$ can be literally \emph{big data},\footnote{For
example, in~\cite{DBLP:conf/rv/ColomboP12}, a payment transaction record
of 300K users over almost a year is monitored---against various
properties, some of them timed and others not---and they report the task
took hundreds of hours.}  halving the processing time is a substantial
benefit, we believe.

Our technical contributions are concretely as follows: 1) a (naive)
algorithm for timed pattern matching (\S{}\ref{sec:naiveAlgo}); 2) its
online variant (\S{}\ref{sec:naiveAlgo}); 3) a proof that the match set
allows a finitary presentation
(Thm.~\ref{thm:terminationAndCorrectnessOfNaiveAlg}), much like
in~\cite{Ulus2014}; and 4) an algorithm with Boyer-Moore type
optimization (\S\ref{201300_31Jan16}).
Throughout the paper we let (timed) patterns expressed as \emph{timed
automata (TA)}, unlike  timed regular expressions (TRE)
in~\cite{Ulus2014,Ferrere2015,Ulus2016}. Besides TA is known to be
strictly more expressive than TRE (see~\cite{Herrmann1999} and also
Case~2 of~\S{}\ref{222310_24Jan16}), our principal reason for choosing TA is so that the Boyer-Moore type pattern matching algorithm
in~\cite{Watson2003} smoothly extends.

\noindent \textbf{Related and Future Work}\quad 
The context of the current work
 is \emph{run-time verification} and \emph{monitoring} of cyber-physical
 systems, a field of growing research activities (see e.g.\
 recent~\cite{DBLP:conf/rv/GeistRS14, DBLP:conf/rv/KaneCDK15}).
One promising application  is in
\emph{conditional quantitative analysis}~\cite{Ferrere2015}, e.g.\ of 
fuel consumption of a car during acceleration, from a large data set of
driving record. Here our results can be used to
efficiently isolate the acceleration
phases.


Aside from timed automata and TREs, \emph{metric} and \emph{signal temporal logics (MTL/STL)}
are commonly used for
specifying  continuous-time signals. Monitoring against
these formalisms has been actively studied,
too~\cite{DBLP:conf/rv/HoOW14, DBLP:conf/rv/DeshmukhDGJJS15,
DBLP:conf/rv/DokhanchiHF14,DBLP:conf/cav/DonzeFM13
}. 
It is known that an MTL formula can be translated to
a timed alternating automaton~\cite{DBLP:journals/corr/abs-cs-0702120}. 
MTL/STL tend to be used against ``smooth'' signals whose changes are
continuous, however, and it is not clear how our current results (on
timed-stamped finite words) would apply to such a situation. One
possible
practical approach would be to quantize continuous-time signals.

Being \emph{online}---to process a long timed word $w$ one can already
 start with its prefix---is obviously a big advantage in monitoring
algorithms. In~\cite{Ulus2016} an online   timed pattern
matching algorithm (where a specification is a TRE) is given, relying on the timed
 extension of  \emph{Brzozowski derivative}.  We
shall aim at an online version of our Boyer-Moore type algorithm
(our online algorithm in~\S{}\ref{sec:naiveAlgo} is without the
Boyer-Moore type optimization), although it seems hard already for the
prototype problem of string matching.

It was suggested by multiple reviewers that use of \emph{zone automata}
can further enhance our Boyer-Moore type algorithm for timed pattern matching. See
Rem.~\ref{rem:zoneAutom}.


\noindent \textbf{Organization of the Paper}\quad 
We introduce necessary backgrounds in~\S{}\ref{sec:prelim}, on: the
basic theory of timed automata, and  the previous Boyer-Moore
algorithms (for string matching, and the one
in~\cite{Watson2003} for (untimed) pattern matching). The latter will
pave the way to our main contribution of the timed Boyer-Moore algorithm. We formulate the
timed pattern matching problem in~\S{}\ref{103806_9Apr16}; and a (naive)
algorithm is presented in~\S{}\ref{sec:naiveAlgo} together with its
online variant. In~\S{}\ref{201300_31Jan16} a Boyer-Moore algorithm for
timed pattern matching is described, drawing intuitions from the untimed
one and emphasizing  where are the
differences. In~\S{}\ref{222310_24Jan16} we present the experiment
results; they indicate the potential of the proposed algorithm in
real-world monitoring applications.

Most proofs are deferred to the appendix due to lack of space.

\auxproof{
\subsection{Timed Automata and Timed Regular Expressions}
Two representations of running information of real-time systems are
introduced in~\cite{Asarin2002}: they are \emph{timed words} and
\emph{timed signals}.
A timed word is a sequence of events with timestamps.
For a finite set of events $\Sigma$, a timed word $(\overline{a},\overline{\tau})$ is an
element of $(\Sigma \times \mathbb{R}_{> 0})^*$.
Fig.~\ref{003016_31Jan16} is an example of a timed word.
The domain of the values of timestamps is the set of positive reals.
In this sense, timed words represent time as a dense set.
A timed signal is a function from a point of time to a value.
The domain of timed signal is the set of positive reals.
Timed signals describe states of each point of time.
For a finite set of states $\Sigma$, a timed signal $p$ is a function
$p : \mathbb{R}_{\geq 0} \to \Sigma$.
The state of each point of time can change arbitrarily.
Fig.~\ref{003025_31Jan16} is an example of a timed signal.
Both representations are dense-time models, in which time is a dense
set. 

\begin{figure}[h]
 \begin{minipage}[b]{.5\linewidth}
 \centering
 \begin{tikzpicture} 
  \draw [thick, -stealth](-0.5,0)--(5,0) node [anchor=north]{$t$};
  \draw (0,0.1) -- (0,-0.1) node [anchor=north]{$0$};

  \draw (1,0.1) node [anchor=south]{$a$} -- (1,-0.1) node [anchor=north]{$1.0$};
  \draw (1.9,0.1) node[anchor=south]{$b$} -- (1.9,-0.1) node[anchor=north]{$1.9$};
  \draw (3.4,0.1) node[anchor=south]{$a$} -- (3.4,-0.1) node[anchor=north]{$3.4$};
 \end{tikzpicture}
 \label{003016_31Jan16}
 \end{minipage}
 \begin{minipage}[b]{.5\linewidth}
  \centering
  \begin{tikzpicture} 
   \draw [thick, -stealth](-0.5,0)--(5,0) node [anchor=north]{$t$};
   \draw [thick, -stealth](0,-0.5)--(0,2) node [anchor=east]{$\Sigma$};
   \draw [fill=pink](0,0) rectangle (1,1);
   \draw [fill=cyan](1,0) rectangle (1.9,1);
   \draw [fill=pink](1.9,0) rectangle (4.9,1);
   \node at (0.5,0.5) {$a$};
   \node at (1.45,0.5) {$b$};
   \node at (3.4,0.5) {$a$};

   \draw (1,0) -- (1,-0.1) node [anchor=north]{$1.0$};
   \draw (1.9,0) -- (1.9,-0.1) node[anchor=north]{$1.9$};
  \end{tikzpicture}
  \label{003025_31Jan16}
 \end{minipage}
 \caption{A timed word and a timed signal}
\end{figure}

\begin{wrapfigure}[8]{r}{0pt}
 \begin{tikzpicture}[shorten >=1pt,node distance=2.3cm,on grid,auto] 
  \node[state,initial] (s_0)   {$s_0$}; 
  \node[state] (s_1) [right=of s_0] {$s_1$}; 
  \node[state,accepting] (s_2) [right=of s_1] {$s_2$}; 
  \path[->] 
  (s_0) edge  [above] node {
  \begin{tabular}{c}
   $a,t=1$\\
   $/t:=0$\\
  \end{tabular}
  } (s_1)
  (s_1) edge [loop above] node  {
  \begin{tabular}{c}
   $b,0<t<1$\\
   $/t:=0$
  \end{tabular}
  } (s_1)
  (s_1) edge  [above] node {$a,t>1$} (s_2);
 \end{tikzpicture}
 \caption{A timed automaton.}
 \label{235528_30Jan16}
\end{wrapfigure}
\emph{Timed automata}~\cite{Alur1994} can represent real-time systems.
Timed automata have the concept of time as clock variables.
Fig.~\ref{235528_30Jan16} is an example of a timed automaton.
Each edge of timed automata has a clock constraint $\delta$ and a set
$\lambda$ of reset clock variables.
We can move through the edge when the clock variables satisfy the clock
constraint.
When we move through an edge, the values of clock variables in $\lambda$
are reset.
In timed automata, the value of each clock variable increases at the
same rate.

Consider a timed word in Fig.~\ref{003016_31Jan16} and a timed automaton
in Fig.~\ref{235528_30Jan16}.
First we are in $s_0$ and the value of $t$ is 0.
At 1.0, we read $a$ and go to $s_1$.
Since $t$ is reset, the value of $t$ is 0.
At 1.9, we read $b$ and go to $s_1$.
The variable $t$ is reset again.
At 3.4, we read $a$ and go to $s_2$.
Since $s_2$ is an accepting state, this timed word is accepted.
}

\auxproof{
\paragraph{Timed Pattern Matching}
The \emph{timed pattern matching} is a problem introduced
in~\cite{Ulus2014}.
An input timed signal and a timed regular expression are given, and we
answer the set of all the intervals matching the given timed
expression. 
Since the domain of timed signals is a dense set, the answer of the
timed pattern matching can be an infinite set.
However, it can be represented as a finite union of \emph{zones}.
A zone is a special case of a convex polyhedron.
The offline pattern matching algorithm is introduced in~\cite{Ulus2014}
and~\cite{Ferrere2015}, and the online algorithm is introduced
in~\cite{Ulus2016}.

However, to the best of our knowledge, the timed pattern matching with
timed automata is not studied yet.
Since timed automata have strictly more expressive power than timed
regular expressions, algorithms solving timed pattern matching problem
with timed automata is also applicable to timed regular expressions, but
not conversely.
}

\auxproof{
\subsection{Our Contributions}
In this paper, we construct two algorithms to solve the timed pattern
matching problem with automata; a \emph{naive algorithm} and a
\emph{timed generalized Boyer-Moore type algorithm}.
We also implement these algorithms and compare their performance.
}

\auxproof{\paragraph{A Naive Algorithm for Timed Pattern Matching Problem}
Given an input timed word and a timed automaton, our algorithm answers
the set of all the intervals matching the given timed automaton. 
The naive algorithm solves the time constraints for any start and end
indices of the given timed word. 
Let $w = (\overline{a},\overline{\tau})$ be the input timed word and $i$ be the index
that points current beginning character.
For each $i \leq j \leq |w|$, we read $w_j$ step by step.
We calculate the reachable states, the possible values of clock
variables, and the set of possible start time $T \subseteq
[\tau_{i-1},\tau_i)$ after reading $w_j$ by examining the clock
constraints of the edges. 

For some technical reasons, we use timed automata such that any edge to
an accepting state is labeled with an extra character \$ to denote the
end of events. 
We do not take sub-words but insert the terminal character after the
last event.
More precisely, given a timed word $w = (\overline{a},\overline{\tau})$ and a timed
automaton $\mathcal{A}$, we search the set of pair $(t,t')$ such that
$w|_{(t,t')}$ is accepted by $\mathcal{A}$.
A timed word segment $w|_{(t,t')} = (\overline{a'},\overline{\tau'})$ is defined as
follows, where $i$ is the least index such that $t < \tau_i$ and $j$ is
the greatest index such that $\tau_j < t'$.
}

\auxproof{
\begin{wrapfigure}[8]{r}{0pt}
 \begin{tikzpicture}[shorten >=1pt,node distance=2.3cm,on grid,auto]
  \node[state,initial] (s_0)   {$s_0$}; 
  \node[state] (s_1) [right=of s_0] {$s_1$}; 
  \node[state,accepting] (s_2) [right=of s_1] {$s_2$}; 
  \path[->] 
  (s_0) edge  [above] node {
  \begin{tabular}{c}
   $a,0<t<1$\\
   $/t:=0$\\
  \end{tabular}
  } (s_1)
  (s_1) edge [loop above] node  {
  \begin{tabular}{c}
   $b,0<t<1$\\
   $/t:=0$
  \end{tabular}
  } (s_1)
  (s_1) edge  [above] node {$\$,t=1$} (s_2);  
 \end{tikzpicture}
 \caption{A timed automaton.}
 \label{114312_31Jan16}
\end{wrapfigure}

Assume the input timed word is $w = ((1.5,2.2,3.4,4.0,4.6),(a,b,a,b,b))$
and the timed automaton is as in Fig.~\ref{114312_31Jan16}.
In our naive algorithm, we read a timed word from the end in order to
extend to the timed generalized Boyer-Moore type algorithm later.
While the first character we read is $b$, we cannot move from $s_0$.
When $i = 3$, we read $a$ at 3.4.
Considering the clock constraint, we can fix the domain of $t$ to be
$(2.4,3.4)$.
After reading $(a_3,\tau_3)$, we try to insert \$ at 4.4, but we
cannot do it because $\tau_4 = 4.0$ is greater than 4.4.
We try to insert \$ at 5.0 after reading $(a_4,\tau_4)$, but
we cannot do it again. 
After considering the rest part in the same manner, the answer is
$\{(t,t') \mid t \in (0.5,1.5), t' = 3.2\} \cup \{(t,t') \mid t \in
(2.4,3.4), t' = 5.6)\}$.

The algorithm introduced in~\cite{Ulus2014} is a timed regular
expression based, inductive algorithm. 
Our algorithm is a timed automata based, essentially sequential
algorithm. 
We also construct an online algorithm with a few modifications.
}

\auxproof{
\paragraph{A Boyer-Moore Type Algorithm for Timed Pattern Matching}

Inspired by the original Boyer-Moore algorithm for
strings~\cite{Boyer1977} and the generalized Boyer-Moore type algorithm
for strings and regular languages~\cite{Watson2003}, we construct another
algorithm that avoids reading unnecessary characters. 
The timed generalized Boyer-Moore type algorithm is equipped with the
skip value functions. 
The skip value functions examine whether the already read substring can
be a substring of some accepting words. 
We skip to the next point that can be the beginning point of some
matching segments.
In our algorithm, we employs mainly two skip value functions: one of the
skip value functions is similar to that of the Boyer-Moore algorithm for
regular languages (we call them $\Delta_2$ in
section~\ref{201300_31Jan16}); and the other function is based on the
skip
value function in the Boyer-Moore algorithm for strings ($\Delta_1$ in
section~\ref{201300_31Jan16}).
We mainly use the former skip value function.
Adaptation of $\Delta_1$ to the (not timed) pattern matching problem is
also the novelty of our work. 

Again, assume the input timed word is $w =
((1.5,2.2,3.4,4.0,4.6),(a,b,a,b,b))$ and the timed automaton is as in
Fig.~\ref{114312_31Jan16}. 
After examining any sub-words when $i = 3$, we can move to $i = 1$.
It is because the event $a_3 = a$ does not appear except the first
event in the language of $\mathcal{A}$.

Our experiments show that our Boyer-Moore type optimization can make timed
pattern matching at most twice faster. While this constant speed-up is certainly
uninteresting from the complexity theory point of view, we believe 
it helps some real-world monitoring tasks. The latter tend to deal with literally
\emph{big data}: for example, in~\cite{DBLP:conf/rv/ColomboP12} a
 payment
transaction record of 300K users over almost a year is monitored---against various
 properties, some of them timed and others not---and they report the task took
hundreds of hours. Halving the processing time would then be a substantial benefit.
}

\auxproof{
\paragraph{Implementations of the Algorithms}
We show the practicality of our algorithms by implementing and testing
both the naive algorithm and the timed generalized Boyer-Moore type
algorithm. 
We test with some benchmarks and monitor the output torque of the
simulation of the engine of a car. 
We compare the performance of both implementations. 
}

\auxproof{
\subsection{Organization of the Paper}

The remainder of this paper is organized as follows.
In Subsection~\ref{221705_24Jan16}, we show formalisms used in this paper
such as timed words, timed automata, and region automata.
In Subsection~\ref{subsec:stringMatchingAndOriginalBoyerMoore}, we
review the ``generalized'' Boyer-Moore type algorithm
in~\cite{Watson2003} for pattern matching.
We construct an naive algorithm in Section~\ref{103806_9Apr16}, and
an Boyer-Moore type algorithm in Section~\ref{201300_31Jan16} for timed
pattern matching.
In Section~\ref{222310_24Jan16} we do some experiments, and in
Section~\ref{sec:conclusions} we discuss our contributions and future
work. 

Most proofs are deferred to the appendix due to lack of space. 

}

\section{Preliminaries}\label{sec:prelim}
\subsection{Timed Automata}
\label{221705_24Jan16}
Here we follow~\cite{Alur1994,Asarin2002}, possibly with a fix to accept finite
words instead of infinite.
For a sequence 
 $\overline{s}=s_1 s_2 \dotsc s_{n}$ we write $|\overline{s}|=n$; and 
for $i,j$ such that  $1 \le i \leq j \leq |s|$,
$\overline{s} (i)$ denotes the element $s_i$ and $\overline{s}(i,j)$ denotes the subsequence
$s_i s_{i+1} \dotsc s_j$.



\vspace{.3em}
\noindent
\begin{minipage}{\textwidth}
 \begin{mydefinition}[timed word]\label{def:timedWord}
 A \emph{timed word} over an alphabet $\Sigma$ is 
 an element of $(\Sigma \times \mathbb{R}_{> 0})^*$---which is denoted by
 $(\overline{a},\overline{\tau})$ using $\overline{a}\in\Sigma^{*}$,
 $\overline{\tau}\in (\mathbb{R}_{> 0})^{*}$ via the  embedding 
 $(\Sigma \times \mathbb{R}_{>0})^* \hookrightarrow \Sigma^* \times (\mathbb{R}_{>0})^*$---such that
 for any
 $i\in [1,|\overline{\tau}|-1]$ we have 
 $0 < \tau_i < \tau_{i+1}$.

 Let $(\overline{a},\overline{\tau})$  be 
 a timed word and $t \in \mathbb{R}$ be such that
 $-\tau_1 < t$. 
 The \emph{$t$-shift}
 $(\overline{a},\overline{\tau}) + t$
 of $(\overline{a},\overline{\tau})$ is the timed word
 $(\overline{a},\overline{\tau}+t)$, where $\overline{\tau}+t$ is the sequence $\tau_1+t,\tau_2+t,\dotsc,\tau_{|\overline{\tau}|}+t$.

  Let $(\overline{a},\overline{\tau})$ and $(\overline{a'},\overline{\tau'})$ be timed words over
  $\Sigma$ such that  $\tau_{|\tau|} < \tau'_{1}$. Their
 \emph{absorbing
  concatenation} $(\overline{a},\overline{\tau}) \circ (\overline{a'},\overline{\tau'})$ is 
  defined by
 \begin{math}
   (\overline{a},\overline{\tau}) \circ (\overline{a'},\overline{\tau'}) =
  (\overline{a}\circ\overline{a'}, \overline{\tau}\circ\overline{\tau'})
 \end{math}, where $\overline{a}\circ\overline{a'}$ and $\overline{\tau}\circ\overline{\tau'}$ denote
 (usual) concatenation of sequences over $\Sigma$ and $\mathbb{R}_{>0}$,
 respectively.

 Now let $(\overline{a},\overline{\tau})$ and $(\overline{a''},\overline{\tau''})$ be arbitrary timed
 words over $\Sigma$. 
 Their
  \emph{non-absorbing concatenation} $(\overline{a},\overline{\tau}) \cdot
  (\overline{a''},\overline{\tau''})$ is defined by $(\overline{a},\overline{\tau}) \cdot (\overline{a''},\overline{\tau''}) = (\overline{a},\overline{\tau}) \circ
  ((\overline{a''},\overline{\tau''}) + \tau_{|\overline{\tau}|})$. 

 A \emph{timed language} over an alphabet $\Sigma$ is  a set
 of timed words over  $\Sigma$.
\end{mydefinition}\end{minipage}

\vspace{.3em}
\noindent
\begin{minipage}{\textwidth} 
\begin{myremark}[signal]\label{rem:signal}
 \emph{Signal} is another formalization
  of records with a notion of time, used e.g.\ in~\cite{Ulus2014}; a signal
  over $\Sigma$ is a function $\mathbb{R}_{\geq 0} \to \Sigma$.  A timed
  word describes a time-stamped sequence of events, while a signal
  describes values of $\Sigma$ that change over time.  In this paper we
  shall work with timed words. This is for technical reasons and not
 important from the applicational point of view: when we restrict to those signals which
  exhibit only finitely many changes, there is a natural correspondence
  between such signals and timed words.
\end{myremark}
\end{minipage}
\vspace{.3em}


Let $C$ be a (fixed) finite set of \emph{clock variables}. 
The set $\Phi(C)$ of
\emph{clock constraints} 
is defined by the following BNF notation. 
 \[
\Phi(C) \;\ni\; \delta\; =\; x < c \mid x > c \mid x \leq c \mid x \geq
 c \mid \textbf{true}\mid \delta
 \land \delta \quad\text{where $x \in C$ and $c \in \mathbb{Z}_{\geq 0}$.}
 \]
 Absence of $\lor$ or $\lnot$ does not harm expressivity: $\lor$ can be
 emulated with nondeterminism (see Def.~\ref{def:timedAutom}); and $\lnot$ can be propagated
 down to atomic formulas by the de Morgan laws. Restriction to
 $\textbf{true}$ and $\land$ is technically useful, too, when we deal
 with intervals and zones (Def.~\ref{def:zone}). 

 A \emph{clock interpretation} $\nu$ over the set $C$ of clock variables
 is
 a function $\nu : C \to \mathbb{R}_{\geq 0}$. 
 Given a clock interpretation $\nu$ and $t \in \mathbb{R}_{\geq 0}$, $\nu
 + t$ denotes the clock interpretation that maps a clock variable $x\in C$ to
 $\nu (x) + t$.

\begin{mydefinition}[timed automaton]\label{def:timedAutom}
 A \emph{timed automaton (TA)} $\mathcal{A}$ is  a tuple
 $(\Sigma,S,S_0,C,E,F)$ where: $\Sigma$ is a finite alphabet; $S$ is a finite
 set of states; $S_0 \subseteq S$ is the set of initial states;
 $C$ is the set of clock variables; $E
 \subseteq S \times S \times \Sigma \times \mathcal{P}(C) \times
 \Phi(C)$ is the set of transitions; and
 $F \subseteq S$ is the set of accepting states.
\end{mydefinition}
The intuition for   $(s,s',a,\lambda,\delta)\in E$ is:
from $s$, also assuming that the clock constraint $\delta$ is
satisfied, we can move to the state $s'$ conducting the action $a$
and resetting the value of each clock variable $x\in \lambda$ 
to $0$. Examples of TAs are
in~(\ref{eq:TAExampleJustifyingDollar}) and
 Fig.~\ref{fig:exampleOfATimedAutom}--\ref{fig:case5} later.

The above notations (as well as the ones below) follow those 
in~\cite{Alur1994}. In the following definition~(\ref{eq:runOfTimedAutom}) of run, for example, 
the first transition occurs at (absolute) time $\tau_{1}$ and the second occurs at
time $\tau_{2}$; it is implicit that we stay at the state $s_{1}$ for
time
$\tau_{2}-\tau_{1}$.


\begin{mydefinition}[run]\label{def:runOfTimedAutom}
 A \emph{run} of a timed automaton $\mathcal{A} = (\Sigma,S,S_0,C,E,F)$
 over a timed word $(\overline{a},\overline{\tau}) \in (\Sigma \times \mathbb{R}_{>0})^*$ is
 a pair $(\overline{s},\overline{\nu}) \in S^* \times
 ((\mathbb{R}_{\geq 0})^C)^*$ of a sequence $\overline{s}$ of states and a sequence
 $\overline{\nu}$ of clock interpretations, subject to the following
 conditions: 1)
 $|\overline{s}| = |\overline{\nu}| = |\overline{a}| + 1$;
 2)
 $s_0 \in S_0$, and for any $x \in C$,
 $\nu_0 (x) = 0$; and 3)
 for any
 $i\in[0,|\overline{a}|-1]$
there exists a transition
 $(s_i,s_{i+1},a_{i+1},\lambda,\delta)\in E$
 such that the clock constraint $\delta$ holds under the clock interpretation
 $\nu_i + (\tau_{i+1} - \tau_{i})$ (here $\tau_{0}$ is
 defined to be $0$), and the clock interpretation $\nu_{i+1}$ has it that
 $\nu_{i+1} (x) = \nu_i (x) + \tau_{i+1} -
 \tau_{i}$  (if $x \notin \lambda$) and $\nu_{i+1} (x)
 = 0$ (if $x\in \lambda$). This run is depicted as follows.
%
\begin{equation}\label{eq:runOfTimedAutom}
 (s_0,\nu_0)
 \stackrel{(a_1,\tau_1)}{\longrightarrow}
 (s_1,\nu_1)
 \stackrel{(a_2,\tau_2)}{\longrightarrow}
 \cdots
 \longrightarrow
 (s_{|\overline{a}|-1},\nu_{|\overline{\tau}|-1})
 \stackrel{(a_{|\overline{a}|},\tau_{|\overline{\tau}|})}{\longrightarrow}
 (s_{|\overline{a}|},\nu_{|\overline{\tau}|})
%
\end{equation}

Such  a run $(\overline{s},\overline{\nu})$ 
of $\mathcal{A}$ is
 \emph{accepting} if $s_{|\overline{s}|-1} \in F$. The \emph{language}
 $L (\mathcal{A})$ of $\mathcal{A}$ is defined by
 $L (\mathcal{A}) = \{w
 \mid \text{there
 is an accepting run of  $\mathcal{A}$ over $w$}\}$. 
\end{mydefinition}

There is another 
specification formalism for timed languages called 
\emph{timed regular expressions (TREs)}~\cite{EugeneAsarin,Asarin2002}. 
Unlike in the classic Kleene theorem, in the timed case timed automata
are strictly more expressive than TREs. See~\cite[Prop.~2]{Herrmann1999}. 



\emph{Region automaton} is an important theoretical gadget in the theory
of timed automaton: it reduces the domain 
$S \times (\mathbb{R}_{\geq 0})^C$
of pairs $(s,\nu)$ in~(\ref{eq:runOfTimedAutom})---that is an \emph{infinite}
set---to its \emph{finite} abstraction, the latter being amenable to
algorithmic treatments. Specifically it relies on an equivalence
relation $\sim$ over clock interpretations.  Given 
a timed automaton $\mathcal{A} = (\Sigma,S,S_0,C,E,F)$---where, without
loss of generality,  we assume
that each clock variable $x \in C$ appears in at least one
clock constraint in $E$---let $c_x$ denote the greatest
number that is  compared with $x$ in the clock constraints in $E$.
(Precisely: $c_{x}=\max\{c\in\mathbb{Z}_{\ge 0}\mid \text{$x\bowtie c$
occurs in $E$, where ${\bowtie}\in\{<,>,\le,\ge\}$}\}$.)
Writing $\integer (\tau)$ and $\fractional (\tau)$ 
for the integer and fractional parts of 
 $\tau \in \mathbb{R}_{\geq 0}$, an
equivalence relation $\sim$ over clock interpretations $\nu,\nu'$ is
defined as follows. 
We have $\nu \sim \nu'$ if: 
\begin{itemize}
 \item for each $x\in C$ we have  $\integer(\nu(x)) =
       \integer(\nu'(x))$ or ($\nu (x) > c_x$ and $\nu' (x) > c_x$); 
 \item 
for any $x,y \in C$ such that $\nu (x) \leq c_x$ and $\nu (y) \leq
c_y$, $\fractional(\nu(x)) < \fractional(\nu(y))$ if and only if
$\fractional(\nu'(x)) < \fractional(\nu'(y))$; and
 \item 
for any $x \in C$ such that $\nu (x) \leq c_x$, $\fractional(\nu(x)) = 0$ if
and only if $\fractional(\nu'(x)) = 0$.
\end{itemize}
%
A \emph{clock region} 
is an  equivalence class of clock interpretations
modulo $\sim$; as usual the equivalence class of $\nu$ is denoted by
$[\nu]$. 
Let $\alpha,\alpha'$ be clock regions. We say 
 $\alpha'$ is a \emph{time-successor} of
$\alpha$ if for any $\nu \in \alpha$, there exists $t \in \mathbb{R}_{>
0 }$ such that $\nu + t \in \alpha'$. 



%
%
%

\vspace{.3em}
\noindent
\begin{minipage}{\textwidth} 
\begin{mydefinition}[region automaton]
\label{def:regionAutom}
 For a timed automaton $\mathcal{A} = (\Sigma,S,S_0,C,E,F)$, the
 \emph{region automaton} $R (\mathcal{A})$ is the NFA
 $(\Sigma,S',S'_0,E',F')$ defined as follows:
  $S' = S \times \bigl((\mathbb{R}_{\geq 0})^{C}/{\sim}\bigr)$;
on initial states $S'_0 = \{(s,[\nu]) \mid s \in S_0, \nu (x) = 0 \text{
       for each $x\in C$}\}$; on accepting states
$F' = \{(s,\alpha) \in S' \mid s \in F\}$. The transition relation
 $E'\subseteq S'\times S'\times \Sigma$ 
 is defined as follows:
       $((s,\alpha),(s',\alpha'),a) \in E'$ if there exist a
       clock region $\alpha''$ and $(s,s',a,\lambda,\delta) \in E$
       such that
\begin{itemize}
 \item  $\alpha''$ is a time-successor of $\alpha$, and
 \item for each $\nu \in \alpha''$, 1) $\nu$ satisfies $\delta$, and 2)
       there exists $\nu'
       \in \alpha'$ such that $\nu (x) = \nu' (x)$ (if $x\notin\lambda$)
       and $\nu' (x) = 0$ (if $x\in\lambda$).
\end{itemize}
\end{mydefinition}
\end{minipage}

\vspace{.3em}
\noindent
It is known~\cite{Alur1994} that the region automaton $R(\mathcal{A})$
 indeed has finitely many states.

The following notation for NFAs will be used later.

\vspace{.3em}
\noindent
\begin{minipage}{\textwidth} 
\begin{mydefinition}[$\Run_{\mathcal{A}} (s,s')$]\label{def:runNFA}
Let $\mathcal{A}$ be an NFA over $\Sigma$, and  $s$ and $s'$ be its states.
We let $\Run_{\mathcal{A}} (s,s')$ denote the set of runs from $s$ to
 $s'$, that is,
\begin{math}
 \Run_{\mathcal{A}} (s,s')
 =
 \{s_{0}s_{1}\dotsc s_{n}\mid n\in \mathbb{Z}_{\ge 0}, s_{0}=s,
 s_{n}=s', \forall i
.\,\exists a_{i+1}
.\,
 s_{i}\stackrel{a_{i+1}}{\rightarrow} s_{i+1} \text{ in $\mathcal{A}$}\}
\end{math}.
\end{mydefinition}
\end{minipage}

%

\subsection{String Matching and the Boyer-Moore Algorithm}
\label{subsec:stringMatchingAndOriginalBoyerMoore}
In~\S\ref{subsec:stringMatchingAndOriginalBoyerMoore}--\ref{subsec:patternMatchingAndWatsonWatsonAlgorithm}
we shall revisit the Boyer-Moore algorithm and its adaptation for
pattern matching~\cite{Watson2003}. We do so in 
considerable details, so as to provide both technical and intuitional bases
 for our timed adaptation.

\begin{wrapfigure}{r}{0pt}
\scriptsize
 \vspace*{-1em}
 \begin{tabular}{ccccccccccccccccccccccccccccccccccccccccccccccc}
  &\tiny 1&\tiny 2&\tiny 3&\tiny 4&\tiny 5&\tiny 6&\tiny 7&\tiny 8&\tiny
  9&\tiny 10&\tiny 11&\tiny 12&\tiny 13&\tiny 14&\tiny 15&\tiny 16&\tiny
  17&\tiny 18&\tiny 19&\tiny 20&\tiny 21&\tiny 22&\tiny 23&\tiny 24
  \\[-.3em]
  $\str=$&H&E&R&E&\ &I&S&\ &A&\ &S&I&M&P&L&E&\ &E&X&A&M&P&L&E\\
  $\pat=$&&&&&&&&&&&&&&&&&& E&X&A&M&P&L&E\\[-.5em]
  &&&&&&&&&&&&&&&&&& \tiny 1&\tiny 2&\tiny 3&\tiny 4&\tiny 5&\tiny 6&\tiny 7
 \end{tabular}
  \caption{The string matching problem}
  \label{fig:stringMatch}
\end{wrapfigure}
\emph{String matching} is a fundamental operation on strings:
given an input string $\str$ and a pattern string
$\pat$, it asks for the
\emph{match set} $\bigl\{(i,j)\,\bigl|\bigr.\,\str(i,j)=\pat \bigr\}$.
An example (from~\cite{MooreExample}) is in
Fig.~\ref{fig:stringMatch}, where the answer is $\{(18,24)\}$.


 A brute-force algorithm has the complexity
 $O(|\str||\pat|)$; known optimizations include ones by
Knuth, Morris, and Pratt~\cite{Knuth1977} and by 
Boyer and Moore~\cite{Boyer1977}. The former performs better in the
 worst case, but for practical
 instances the latter is commonly used. 
Let us now demonstrate how the Boyer-Moore algorithm for string
matching works, using the example in Fig.~\ref{fig:stringMatch}. Its
 main idea is to skip unnecessary matching of characters, using
 two \emph{skip value functions} $\Delta_{1}$ and $\Delta_{2}$ (that we
 define later).

The bottom line in the Boyer-Moore algorithm is that the pattern string
$\pat$ moves \emph{from left to right}, and matching between the
input string $\str$ and $\pat$ is conducted \emph{from
right to left}. In~(\ref{eq:config1}) is the initial configuration, and
we set out with comparing the characters $\str(7)$ and $\pat(7)$. They
turn out to be different.

\begin{wrapfigure}{r}{0pt}
\scriptsize
\begin{tabular}{c}
   \begin{tabular}[c]{ccccccccccccccccccccccccccccccccccccccccccccccc}
 \tiny 1&\tiny 2&\tiny 3&\tiny 4&\tiny 5&\tiny 6&\tiny 7&\tiny 8&\tiny
  9&\tiny 10&\tiny 11&\tiny 12&\tiny 13&\tiny 14&\tiny 15&\tiny 16&\tiny
  17&\tiny 18&\tiny 19&\tiny 20&\tiny 21&\tiny 22&\tiny 23&\tiny 24
  \\[-.3em]
 H&E&R&E&\ &I&S&\ &A&\ &S&I&M&P&L&E&\ &E&X&A&M&P&L&E\\
 E&X&A&M&P&L&E\\[-.5em]
 \tiny 1&\tiny 2&\tiny 3&\tiny 4&\tiny 5&\tiny 6&\tiny 7\\
 \end{tabular}
\hspace{-.6cm}\text{\begin{minipage}{1cm}\begin{equation}\label{eq:config1}\end{equation}\end{minipage}}\\
\scriptsize
 \begin{tabular}{ccccccccccccccccccccccccccccccccccccccccccccccc}
 \tiny 1&\tiny 2&\tiny 3&\tiny 4&\tiny 5&\tiny 6&\tiny 7&\tiny 8&\tiny
  9&\tiny 10&\tiny 11&\tiny 12&\tiny 13&\tiny 14&\tiny 15&\tiny 16&\tiny
  17&\tiny 18&\tiny 19&\tiny 20&\tiny 21&\tiny 22&\tiny 23&\tiny 24
  \\[-.3em]
 H&E&R&E&\ &I&S&\ &A&\ &S&I&M&P&L&E&\ &E&X&A&M&P&L&E\\
 &&&&&&& E&X&A&M&P&L&E\\[-.5em]
 &&&&&&& \tiny 1&\tiny 2&\tiny 3&\tiny 4&\tiny 5&\tiny 6&\tiny 7\\
 \end{tabular}
\hspace{-.6cm}\text{\begin{minipage}{1cm}\begin{equation}\label{eq:config2}\end{equation}\end{minipage}}\\
\scriptsize
 \begin{tabular}{ccccccccccccccccccccccccccccccccccccccccccccccc}
 \tiny 1&\tiny 2&\tiny 3&\tiny 4&\tiny 5&\tiny 6&\tiny 7&\tiny 8&\tiny
  9&\tiny 10&\tiny 11&\tiny 12&\tiny 13&\tiny 14&\tiny 15&\tiny 16&\tiny
  17&\tiny 18&\tiny 19&\tiny 20&\tiny 21&\tiny 22&\tiny 23&\tiny 24
  \\[-.3em]
 H&E&R&E&\ &I&S&\ &A&\ &S&I&M&P&L&E&\ &E&X&A&M&P&L&E\\
 &&&&&&&&& E&X&A&M&P&L&E\\[-.5em]
 &&&&&&&&& \tiny 1&\tiny 2&\tiny 3&\tiny 4&\tiny 5&\tiny 6&\tiny 7\\
 \end{tabular}
\hspace{-.6cm}\text{\begin{minipage}{1cm}\begin{equation}\label{eq:config3}\end{equation}\end{minipage}}
\end{tabular}
\end{wrapfigure}
A naive algorithm would then move the pattern to the right by one
position.  We can do better, however, realizing that the character
$\str(7)=\mathrm{S}$ (that we already read for comparison) never occurs
in the pattern $\pat$. This means the position $7$ cannot belong to
any matching interval $(i,j)$, and we thus jump to the
configuration~(\ref{eq:config2}).  Formally this argument is expressed
by the value $\Delta_{1}(\mathrm{S},7)=7$ of the first skip value
function $\Delta_{1}$, as we will see later.

Here again we compare characters from right to left, in~(\ref{eq:config2}), realizing
immediately that $\str(14)\neq\pat(7)$. It is time to shift the pattern;
given that $\str(14)=\mathrm{P}$ occurs as $\pat(5)$, we shift the
pattern by 
$\Delta_{1}(\mathrm{P},7)=7-5=2$.

We are now in the configuration~(\ref{eq:config3}), and some 
initial matching
succeeds ($\str(16)=\pat(7)$, $\str(15)=\pat(6)$, and so on). 
The matching  fails for $\str(12)\neq\pat(3)$. Following the same reasoning
as above---the character $\str(12)=\mathrm{I}$ does not occur in
$\pat(3)$, $ \pat(2)$ or $\pat(1)$---we would then shift
the pattern
by $\Delta_{1}(\mathrm{I},3)=3$. 

\begin{wrapfigure}[8]{r}{0pt}
\scriptsize
 \begin{tabular}{c|cccccccccccccccccccccccccccccccccccccccccccccc}
 & \tiny 1&\tiny 2&\tiny 3&\tiny 4&\tiny 5&\tiny 6&\tiny 7\\[-.3em]
 &E&X&A&\bf\underline M&\bf\underline P&\bf\underline L&\bf\underline E\\[-.0em]\hline
 \color{red}\xmark&*&E&X&\bf\underline A&\bf\underline M&\bf\underline P&\bf\underline L&E\\[-.0em]
 \color{red}\xmark&*&*&E&\bf\underline X&\bf\underline A&\bf\underline M&\bf\underline P&L&E\\[-.0em]
 $\vdots$&&&&$\vdots$\\[-.0em]
  \color{red}\xmark&*&*& *&\bf\underline *&\bf\underline *&\bf\underline E&\bf\underline X&A&M&P&L&E\\[-.0em]
 \color{dgreen}\cmark&*&*&*&\bf\underline *&\bf\underline *&\bf\underline *&\bf\underline E&X&A&M&P&L&E
 \end{tabular}
  \caption{Table for computing $\Delta_{2}$}
  \label{fig:tableForDelta2}
\end{wrapfigure}
However we can do even better. Consider the table on the right, where
we forget about the input $\str$ and shift the pattern $\pat$ one by one, trying to match it with $\pat$
itself. We are specifically interested in the segment $\mathrm{MPLE}$
from $\pat(4)$ to $\pat(7)$ (underlined in the first row)---because it is the partial match we
have discovered
in the configuration~(\ref{eq:config3}). 
The table shows that we need to shift at least by $6$ to get a potential
match (the last row); hence from the configuration~(\ref{eq:config3}) we can shift the
pattern $\pat$ by $6$, which is more than 
the skip value in the above ($\Delta_{1}(\mathrm{I},3)=3$).
 This argument---different from the one for $\Delta_{1}$---is
formalized as the second skip value function $\Delta_{2}(3)=6$.


\begin{wrapfigure}[3]{r}{0pt}
\scriptsize
 \begin{tabular}{ccccccccccccccccccccccccccccccccccccccccccccccc}
 \tiny 1&\tiny 2&\tiny 3&\tiny 4&\tiny 5&\tiny 6&\tiny 7&\tiny 8&\tiny
  9&\tiny 10&\tiny 11&\tiny 12&\tiny 13&\tiny 14&\tiny 15&\tiny 16&\tiny
  17&\tiny 18&\tiny 19&\tiny 20&\tiny 21&\tiny 22&\tiny 23&\tiny 24
  \\[-.3em]
 H&E&R&E&\ &I&S&\ &A&\ &S&I&M&P&L&E&\ &E&X&A&M&P&L&E\\
 &&&&&&&&&&&&&&& E&X&A&M&P&L&E\\[-.5em]
 &&&&&&&&&&&&&&& \tiny 1&\tiny 2&\tiny 3&\tiny 4&\tiny 5&\tiny 6&\tiny 7\\
 \end{tabular}
\hspace{-.6cm}\text{\begin{minipage}{1cm}\begin{equation}\label{eq:config4}\end{equation}\end{minipage}}
\end{wrapfigure}
We are led to the configuration on the right, only to find that the
first matching trial fails ($\str(22)\neq\pat(7)$). Since
$\str(22)=\mathrm{P}$ occurs in $\pat$ as $\pat(5)$, we shift $\pat$ by 
$\Delta_{1}(\mathrm{P},7)= 2$. This finally brings us to the
configuration
in Fig.~\ref{fig:stringMatch} and the match set $\{(18,24)\}$. 

Summarizing, the key in the Boyer-Moore algorithm is to use two skip
value
functions
$\Delta_{1},\Delta_{2}$ to shift the pattern faster than one-by-one. 
The precise definition of $\Delta_{1},\Delta_{2}$ is in
Appendix~\ref{appendix:skipValueFunctions}, for reference.

\subsection{Pattern Matching and a Boyer-Moore type Algorithm
}
\label{subsec:patternMatchingAndWatsonWatsonAlgorithm}
\newcommand{\calloutabove}[2]{\node[rectangle callout,draw,
inner sep=2pt,rounded corners=2pt,
callout absolute pointer={(#1.north)},above=1.5cm of #1]{#2};}

\emph{Pattern matching} is another fundamental operation that
generalizes string matching: given an input string $\str$ and a
regular language $L$ as a \emph{pattern}, it asks for the \emph{match
set} $\bigl\{(i,j)\,\bigl|\bigr.\,\str(i,j)\in L \bigr\}$.  For
example, for $\str$ in Fig.~\ref{fig:stringMatch} and the pattern
$\mathrm{[A\mathchar`-Z]}^*\mathrm{MPLE}$, the match set is $\{(11,16),(18,24)\}$.
In~\cite{Watson2003} an algorithm for pattern matching is introduced that
employs ``Boyer-Moore type'' optimization, much like the use of
$\Delta_{2}$ in~\S{}\ref{subsec:stringMatchingAndOriginalBoyerMoore}.

\begin{figure}[tbp]
\begin{minipage}[c]{.7\textwidth}
  \centering
 \scalebox{0.7}{
 \begin{tikzpicture}[shorten >=1pt,node distance=3cm,on grid,auto]
 \node[state,initial right] at (0,0) (s_0) {$s_0$};
 \node[state] (s_1) at (-2.2,1) {$s_1$};
 \node[state] (s_2) at (-2.2,-1) {$s_2$};
 \node[state] (s_3) at (-4.4,0) {$s_3$};
 \node[state,accepting] (s_4) at (-7.6,0) {$s_4$};

  \node[rectangle callout,draw,
  inner sep=2pt,rounded corners=2pt,
  callout absolute pointer={(s_0.north east)}] at (1.83,1.3) {
   \begin{tabular}{c}
    $m_{s_0} = 0$\\
    $L'_{s_0} = \{\varepsilon\}$\\
    $\Delta_2(\{s_0\}) = 1$\\
   \end{tabular}};
  \node[rectangle callout,draw,
  inner sep=2pt,rounded corners=2pt,
  callout absolute pointer={(s_1.north east)}] at (-0.5,1.5) {
   \begin{tabular}{c}
    $m_{s_1} = 1$\\
    $L'_{s_1} = \{\text{``a''}\}$\\
    $\Delta_2(\{s_1\}) = 3$\\
   \end{tabular}};
  \node[rectangle callout,draw,
  inner sep=2pt,rounded corners=2pt,
  callout absolute pointer={(s_2.east)}
  ] at (0.6,-1.18) {
  \begin{tabular}{c}
   $m_{s_2} = 1$\\
   $L'_{s_2} = \{\text{``c''}\}$\\
   $\Delta_2(\{s_2\}) = 2$\\
  \end{tabular}};
  \node[rectangle callout,draw,
  inner sep=2pt,rounded corners=2pt,
  callout absolute pointer={(s_3.south west)},below left=1.7cm of s_3]{
   \begin{tabular}{c}
    $m_{s_3} = 2$\\
    $L'_{s_3} = \{\text{``ba'',``dc''}\}$\\
    $\Delta_2(\{s_3\}) = 2$\\
   \end{tabular}};
  \calloutabove{s_4}{
   \begin{tabular}{c}
    $m_{s_4} = 3$\\
    $L'=L'_{s_4} = \{\text{``dba'',``ddc'',``cba'',``cdc''}\}$\\
    $\Delta_2(\{s_4\}) = 2$\\
   \end{tabular}}

 \path[->] 
  (s_0) edge [above] node {a} (s_1)
  (s_0) edge [above] node {c} (s_2)
  (s_1) edge [above] node {b} (s_3)
  (s_2) edge [above] node {d} (s_3)
  (s_3) edge [loop above] node {c} (s_3)
  (s_3) edge [above] node {d} (s_4);
 \end{tikzpicture}}
 \caption{The automaton $\mathcal{A}$ and skip values}
 \label{192552_1Feb16}
\end{minipage} 
\hfill
\begin{minipage}[c]{.23\textwidth}
\centering
\scalebox{.6}{
\begin{pgfpicture}
\pgfpathmoveto{\pgfpoint{85.848145bp}{3229.09292bp}}
\pgfpathlineto{\pgfpoint{186.071289bp}{3287.332666bp}}
\pgfusepath{use as bounding box}
\begin{pgfscope}
\pgftransformcm{1.0}{0.0}{0.0}{1.0}{\pgfpoint{100.0bp}{3280.393701bp}}
\pgftext[left,base]{\normalsize\color[rgb]{0.0,0.0,0.0}c}
\end{pgfscope}
\begin{pgfscope}
\pgftransformcm{1.0}{0.0}{0.0}{1.0}{\pgfpoint{110.0bp}{3280.393701bp}}
\pgftext[left,base]{\normalsize\color[rgb]{0.0,0.0,0.0}b}
\end{pgfscope}
\begin{pgfscope}
\pgftransformcm{1.0}{0.0}{0.0}{1.0}{\pgfpoint{120.0bp}{3280.393701bp}}
\pgftext[left,base]{\normalsize\color[rgb]{0.0,0.0,0.0}a}
\end{pgfscope}
\begin{pgfscope}
\pgftransformcm{1.0}{0.0}{0.0}{1.0}{\pgfpoint{130.0bp}{3280.393701bp}}
\pgftext[left,base]{\normalsize\color[rgb]{0.0,0.0,0.0}d}
\end{pgfscope}
\begin{pgfscope}
\pgftransformcm{1.0}{0.0}{0.0}{1.0}{\pgfpoint{140.0bp}{3280.393701bp}}
\pgftext[left,base]{\normalsize\color[rgb]{0.0,0.0,0.0}c}
\end{pgfscope}
\begin{pgfscope}
\pgftransformcm{1.0}{0.0}{0.0}{1.0}{\pgfpoint{150.0bp}{3280.393701bp}}
\pgftext[left,base]{\normalsize\color[rgb]{0.0,0.0,0.0}d}
\end{pgfscope}
\begin{pgfscope}
\pgftransformcm{1.0}{0.0}{0.0}{1.0}{\pgfpoint{160.0bp}{3280.393701bp}}
\pgftext[left,base]{\normalsize\color[rgb]{0.0,0.0,0.0}c}
\end{pgfscope}
\begin{pgfscope}
\pgftransformcm{1.0}{0.0}{0.0}{1.0}{\pgfpoint{135.0bp}{3270.393701bp}}
\pgftext[left,base]{\normalsize\color[rgb]{0.0,0.0,0.0}$\vdots$}
\end{pgfscope}
\begin{pgfscope}
\pgfsetlinewidth{1.0bp}
\pgfsetrectcap
\pgfsetmiterjoin
\pgfsetmiterlimit{10.0}
\pgfpathmoveto{\pgfpoint{110.0bp}{3265.393701bp}}
\pgfpathlineto{\pgfpoint{135.0bp}{3265.393701bp}}
\pgfpathlineto{\pgfpoint{135.0bp}{3255.393701bp}}
\pgfpathlineto{\pgfpoint{110.0bp}{3255.393701bp}}
\definecolor{strokepaint}{rgb}{0.0,0.0,0.0}\pgfsetstrokecolor{strokepaint}
\pgfusepath{stroke}
\end{pgfscope}
\begin{pgfscope}
\pgfsetlinewidth{1.0bp}
\pgfsetrectcap
\pgfsetmiterjoin
\pgfsetmiterlimit{10.0}
\pgfpathmoveto{\pgfpoint{120.0bp}{3250.393701bp}}
\pgfpathlineto{\pgfpoint{145.0bp}{3250.393701bp}}
\pgfpathlineto{\pgfpoint{145.0bp}{3240.393701bp}}
\pgfpathlineto{\pgfpoint{120.0bp}{3240.393701bp}}
\definecolor{strokepaint}{rgb}{0.0,0.0,0.0}\pgfsetstrokecolor{strokepaint}
\pgfusepath{stroke}
\end{pgfscope}
\begin{pgfscope}
\pgftransformcm{1.0}{0.0}{0.0}{1.0}{\pgfpoint{135.0bp}{3230.393701bp}}
\pgftext[left,base]{\sffamily\mdseries\upshape\normalsize\color[rgb]{0.0,0.0,0.0}$\vdots$}
\end{pgfscope}
\begin{pgfscope}
\pgftransformcm{1.0}{0.0}{0.0}{1.0}{\pgfpoint{95.0bp}{3255.393701bp}}
\pgftext[left,base]{\sffamily\mdseries\upshape\normalsize\color[rgb]{0.0,0.0,0.0}$\leftarrow$}
\end{pgfscope}
\begin{pgfscope}
\pgftransformcm{1.0}{0.0}{0.0}{1.0}{\pgfpoint{105.0bp}{3240.393701bp}}
\pgftext[left,base]{\sffamily\mdseries\upshape\normalsize\color[rgb]{0.0,0.0,0.0}$\leftarrow$}
\end{pgfscope}
\begin{pgfscope}
\pgftransformcm{1.0}{0.0}{0.0}{1.0}{\pgfpoint{100.0bp}{3260.393701bp}}
\pgftext[left,base]{\sffamily\mdseries\upshape\normalsize\color[rgb]{0.0,0.0,0.0}$i$}
\end{pgfscope}
\begin{pgfscope}
\pgftransformcm{1.0}{0.0}{0.0}{1.0}{\pgfpoint{110.0bp}{3245.393701bp}}
\pgftext[left,base]{\sffamily\mdseries\upshape\normalsize\color[rgb]{0.0,0.0,0.0}$i$}
\end{pgfscope}
\begin{pgfscope}
\pgftransformcm{1.0}{0.0}{0.0}{1.0}{\pgfpoint{170.0bp}{3258.393701bp}}
\pgftext[left,base]{\sffamily\mdseries\upshape\normalsize\color[rgb]{0.0,0.0,0.0}$j=4$}
\end{pgfscope}
\begin{pgfscope}
\pgftransformcm{1.0}{0.0}{0.0}{1.0}{\pgfpoint{170.0bp}{3243.393701bp}}
\pgftext[left,base]{\sffamily\mdseries\upshape\normalsize\color[rgb]{0.0,0.0,0.0}$j=5$}
\end{pgfscope}
\end{pgfpicture}
}
 \caption{A brute-force algorithm}
 \label{fig:bruteforce}
\end{minipage}
\end{figure}

Let $\str=\mathrm{cbadcdc}$ be an input string and
$\mathrm{dc}^*\{\mathrm{ba} \mid \mathrm{dc}\}$ be a pattern $L$, for
example.  We can solve pattern matching by the following brute-force
algorithm.  
\begin{itemize}
 \item 
 We express the pattern as an NFA
$\mathcal{A}=(\Sigma,S,S_{0},E,F)$
in Fig.~\ref{192552_1Feb16}. We reverse words---$w\in
L$ if and only if $w^{\Rev}\in L(\mathcal{A})$---following the 
Boyer-Moore algorithm
(\S{}\ref{subsec:stringMatchingAndOriginalBoyerMoore}) that matches a
       segment of  input and a pattern \emph{from right to left}.
 \item 
Also following~\S{}\ref{subsec:stringMatchingAndOriginalBoyerMoore} we
``shift the pattern from left to right.'' This technically means:
we conduct the following for $j=1$ first, then $j=2$, and so on.
 For fixed $j$ we search for $i\in[1,j]$ such that $\str(i,j)\in
 L$. This is done by computing the set $S_{(i,j)}$ of reachable states of $\mathcal{A}$
 when it is fed with $\str(i,j)^{\Rev}$. The computation is done
 step-by-step, decrementing $i$ from $j$ to  $1$:
 \begin{displaymath}
	S_{(j,j)}=\{s\mid \exists
 s'\in S_{0}.\, s'\stackrel{\str(j)}{\to} s\}\enspace,
 \;\text{and}\;
 S_{(i,j)}=\{s\mid \exists s'\in S_{(i+1,j)}.\, 
s'\stackrel{\str(i)}{\to} s\}\enspace.
       \end{displaymath}
\end{itemize}
Then $(i,j)$ is a matching interval if and only if $S_{(i,j)}\cap
F\neq\emptyset$. See Fig.~\ref{fig:bruteforce}. 

The Boyer-Moore type optimization in~\cite{Watson2003} tries to hasten
the shift of $j$. A key observation is as follows. Assume $j=4$;
then the above procedure would feed
$\str(1,4)^{\Rev}=\mathrm{dabc}$ to the automaton $\mathcal{A}$
(Fig.~\ref{192552_1Feb16}). We instantly see that this would not 
yield any matching interval---for a word to be accepted by $\mathcal{A}$ 
it must start with abd, cdd, abc or cdc. 

Precisely the algorithm in~\cite{Watson2003} works as follows. 
We first observe that the shortest word accepted by $\mathcal{A}$ is
$3$; therefore we can start right away with $j=3$, skipping $j=1,2$
in the above brute-force algorithm. 
Unfortunately $\str(1,3)=\mathrm{cba}$ does not match  $L$, with 
$\str(1,3)^{\Rev}=\mathrm{abc}$ only leading to $\{s_{3}\}$ in
$\mathcal{A}$. 

\begin{wrapfigure}[11]{r}{0pt}
\scriptsize
 \begin{tabular}{c|cccccccccccccccccccccccccccccccccccccccccccccc}
  & \tiny 2&\tiny 3&\tiny 4&\tiny 5\\[-.3em]
\multirow{2}{*}{$L'_{s_{3}}$}
 &b&a\\[-.0em]
 &d&c\\[-.0em]
 \hline\hline
 \multirow{4}{*}{
\begin{minipage}{3em}
  $L'$ shifted 
  \\ by $1$\\
 ({\color{red}\xmark})
\end{minipage}
}
 &d&b&a\\[-.0em]
 &d&d&c\\[-.0em]
 &c&b&a\\[-.0em]
 &c&d&c\\[-.0em]
 \hline
 \multirow{4}{*}{
\begin{minipage}{3em}
  $L'$ shifted 
  \\ by $2$\\
 ({\color{dgreen}\cmark})
\end{minipage}
}
 &*&d&b&a\\[-.0em]
 &*&d&d&c\\[-.0em]
 &\underline{*}&\underline{c}&b&a\\[-.0em]
  &\underline{*}&\underline{c}&d&c\\[-.0em]
 \end{tabular}
  \caption{Table for $\Delta_{2}(s_{3})$}
  \label{fig:tableForDelta2InPatternMatching}
\end{wrapfigure}
We now shift $j$ by $2$, from $3$ to $5$, following
Fig.~\ref{fig:tableForDelta2InPatternMatching}. Here 
\begin{equation}\label{eq:LPrimeExample}
 L'=\bigl\{\,\bigl(w(1,3)\bigr)^{\Rev}\,\bigl|\bigr.\, w\in L(\mathcal{A})\,\bigr\}=\{
 \mathrm{dba},
 \mathrm{ddc},
 \mathrm{cba},
 \mathrm{cdc}
\}\enspace;
\end{equation}
that is, for $\str(i,j)$ to match the pattern $L$ its last three
characters must match $L'$. Our previous ``key observation'' now
translates to the fact that
$\str(2,4)=\mathrm{bad}$ does not belong to $L'$; in the actual
algorithm in~\cite{Watson2003}, however, we do not use the string $\str(2,4)$ itself. Instead 
we \emph{overapproximate} it with the information that feeding
$\mathcal{A}$ with $\str(1,3)^{\Rev}=\mathrm{abc}$ led to $\{s_{3}\}$. 
Similarly to the case with $L'$, this implies that the last two
characters
of $\str(1,3)$ must have been in
$L'_{s_{3}}=\{\mathrm{ba},\mathrm{dc}\}$. 
The table shows that none in $L'_{s_{3}}$ matches any of $L'$ when
$j$ is shifted by $1$; when $j$ is shifted by $2$, we have matches
(underlined). Therefore we jump from $j=3$ to $j=5$. 

This is how the algorithm in~\cite{Watson2003} works: 
it accelerates the brute-force algorithm in Fig.~\ref{fig:bruteforce}, 
skipping some $j$'s,
with the help of a skip value function $\Delta_{2}$. The
overapproximation in the last paragraph allows  
 $\Delta_{2}$ to rely only on a pattern $L$ (but not on an
input string $\str$); this means that pre-processing is done once we fix
the pattern $L$, and it is reused for various input strings $\str$. This
is an
advantage in monitoring applications where one would deal with a number
of input strings $\str$, some of which are yet to come.
\auxproof{
 Concretely the pre-processing part computes: $m_{s}$ (the
 length of the shortest word that leads to a state $s$ in $\mathcal{A}$);
 $L'_{s}=\{w(1,m_{s})^{\Rev}\mid \text{$w$ leads to $s$ in
 $\mathcal{A}$}\}$, like in~(\ref{eq:LPrimeExample}); $m=\min_{s\in F}
 m_{s}$ (i.e.\ the length of the shortest accepted word);
 $L'=\{w(1,m)^{\Rev}\mid w\in L(\mathcal{A})\}$; and finally the skip
 value $\Delta_{2}(s)$ for each state $s$, computed like in
 Fig.~\ref{fig:tableForDelta2InPatternMatching}. In fact $\Delta_{2}$ is
 defined for each set $C\subseteq S$ of states; see
 Appendix~\ref{appendix:skipValueFunctionForWatsonsAlgorithm} for
 details.
}
See Appendix~\ref{appendix:skipValueFunctionForWatsonsAlgorithm} for
the precise definition of the skip value function $\Delta_{2}$.

In Fig.~\ref{192552_1Feb16} we annotate each state $s$ with the values
$m_{s}$ and $L'_{s}$ that is used in computing
$\Delta_{2}(\{s\})$. Here $ m_{s}
$ is
the length of a shortest word that leads to $s$; 
\begin{math}
  m
 =
 \min_{s\in F}m_{s}
\end{math}
(that is $3$ in the above example); 
and 
$ L'_{s}
 =
 \{w(1,\min\{m_{s},m\})^{\Rev}\mid w\in L(\mathcal{A}_{s})\}
$.

It is not hard to generalize 
the other skip value function $\Delta_1$
in~\S{}\ref{subsec:stringMatchingAndOriginalBoyerMoore} for pattern
matching, too: in place of $\pat$ we use the set $L'$
in the above~(\ref{eq:LPrimeExample}). See
Appendix~\ref{appendix:skipValueFunctionForWatsonsAlgorithm}.

\section{The Timed Pattern Matching  Problem
}
\label{103806_9Apr16}
Here we formulate our problem, following
the notations in~\S{}\ref{221705_24Jan16}. 

Given a timed word $w$, the timed word segment 
$w|_{(t,t')}$ is the result of clipping the parts outside the open interval
 $(t,t')$. For example, for 
 $w = \bigl((a,b,c),(0.7,1.2,1.5)\bigr)$, 
 we have
 $w|_{(1.0,1.7)} = \bigl((b,c,\$),(0.2,0.5,0.7)\bigr)$,
 $w|_{(1.0,1.5)} = \bigl((b,\$),(0.2,0.5)\bigr)$ and
 $w|_{(1.2,1.5)} = \bigl((\$),(0.3)\bigr)$.
 Here the (fresh) \emph{terminal character} $\$$ designates the end of a
 segment. Since we use open intervals $(t,t')$, for example, the word
 $w|_{(1.2,1.5)}$ does not contain the character $c$ at time $0.3$. 
 The formal definition is as follows.

\vspace{.3em}
\noindent
\begin{minipage}{\textwidth}
\begin{mydefinition}[timed word segment]\label{def:timedWordSegment}
 Let $w = (\overline{a},\overline{\tau})$ be a timed word over $\Sigma$,
 $t$ and $t'$ be reals such that $0 \leq t < t'$, and
 $i$ and $j$ be  indices such that $\tau_{i-1} \leq t < \tau_{i}$
 and $\tau_j < t' \leq \tau_{j+1}$ (we let $\tau_0=0$ and
 $\tau_{|\overline{\tau}|+1}=\infty$).
 The \emph{timed word segment} $w|_{(t,t')}$ of $w$ on the interval
 $(t,t')$ is the timed word
 $(\overline{a'},\overline{\tau'})$, over the extended alphabet $\Sigma
 \amalg \{\$\}$, 
 defined as follows:
 1) $\left| w|_{(t,t')} \right|= j-i+2$;  2) we have $a'_k =
 a_{i+k-1}$ and $\tau'_k = \tau_{i+k-1} - t$ for $k\in[1, j - i + 1]$;
 and 3)
  $a'_{j-i+2} = \$ $ and $\tau'_{j-i+2} = t'-t$.
 \end{mydefinition}
\end{minipage}
\vspace{.3em}
\noindent
\begin{minipage}{\textwidth}
\begin{mydefinition}[timed pattern matching] \label{TimedPatternMatching}
The  \emph{timed pattern matching} problem (over an alphabet $\Sigma$)
takes (as input)
 a timed word $w$ over $\Sigma$ and a timed automaton $\mathcal{A}$
 over $\Sigma \amalg \{\$\}$;  and
it requires (as output) the \emph{match set}
 $\mathcal{M} (w,\mathcal{A}) = \bigl\{(t,t') \in (\mathbb{R}_{\ge 0})^{2} \mid t < t', w|_{(t,t')}
 \in L (\mathcal{A})\bigr\}$.
\end{mydefinition}
\end{minipage}

\begin{wrapfigure}{r}{0pt}
 \parbox{0.3\textwidth}{
 \vspace{-1.0em}
 \scalebox{0.6}{
 \begin{tikzpicture}[shorten >=1pt,node distance=2cm,on grid,auto]
 \node[state,initial] (s_0) {$s_0$};
 \node[state] (s_1) [right of=s_0] {$s_1$};
 \node[state,accepting] (s_2) [right of=s_1]{$s_2$};

 \path[->] 
  (s_0) edge [above] node {
  \begin{tabular}{c}
   $a,\mathbf{true}$\\
   $/t:=0$\\
  \end{tabular}} (s_1)
  (s_1) edge [above] node {$\$,t\ge 2$} (s_2);
 \end{tikzpicture}}
}
\hspace{-.6cm}\text{\begin{minipage}{1cm}\begin{equation}\label{eq:TAExampleJustifyingDollar}\end{equation}\end{minipage}}
\end{wrapfigure}
\noindent Our formulation in Def.~\ref{TimedPatternMatching} slightly differs from 
that in~\cite{Ulus2014} in that: 1) we use timed words in place of
signals (Rem.~\ref{rem:signal}); 2) for specification we use timed
automata
rather than timed regular expressions; and 3) we use an 
explicit terminal character $\$$. While none of these differences is major,
introduction of $\$$ enhances expressivity, e.g.\ in specifying
 ``an event $a$ occurs, and after that, no
other event occurs within 2s.''
(see~(\ref{eq:TAExampleJustifyingDollar})). It is also easy to ignore
$\$$---when one is not interested in it---by having the clock constraint
\textbf{true} on the $\$$-labeled transitions leading to the accepting states.
\vspace{.3em}
\noindent
\begin{minipage}{\textwidth}
\begin{myassumption}
 In what follows we assume the following. Each  timed automaton $\mathcal{A}$ over the alphabet
 $\Sigma\amalg\{\$\}$ is such that: every $\$$-labeled transition is
 into an accepting state; and no other transition is $\$$-labeled.
 And there exists no transition from any accepting states.
\end{myassumption}
\end{minipage}

\section{
 A Naive Algorithm and Its Online Variant } \label{sec:naiveAlgo} 

Here
we present a naive algorithm for timed pattern matching (without a
Boyer-Moore type optimization), also indicating how to make it into an
online one.  Let us fix a timed word $w$ over $\Sigma$ and a timed
automaton $\mathcal{A} = (\Sigma \amalg \{\$\},S,S_0,C,E,F)$ as the
input.

First of all, a match set (Def.~\ref{TimedPatternMatching}) is in
general an infinite set, and we need its finitary representation 
for an algorithmic treatment. We follow~\cite{Ulus2014} and use 
(2-dimensional) zones for that purpose.

\vspace{.3em}
\noindent
\begin{minipage}{\textwidth}
\begin{mydefinition}[zone]\label{def:zone}
 Consider the 2-dimensional plane $\mathbb{R}^{2}$ whose axes are denoted
 by $t$ and $t'$. A \emph{zone} is a convex polyhedron  specified by 
constraints of the form
 $t\bowtie c$, 
 $t'\bowtie c$ and  
 $t'-t\bowtie c$, where 
 ${\bowtie} \in
\{<,>,\leq,\geq\}$ and $c \in \mathbb{Z}_{\geq 0}$.

 It is not hard to see that each zone is specified by three intervals
 (that may or may not include their endpoints):
 $T_{0}$ for $t$, $T_{f}$ for $t'$ and $T_{\Delta}$ for $t'-t$. 
 We let a triple $(T_{0},T_{f},T_{\Delta})$ represent a zone, and write
  $(t,t') \in (T_0,T_f,T_{\Delta})$ if  $t \in
T_0$, $t' \in T_f$ and $t' - t \in T_{\Delta}$.
\end{mydefinition}
\end{minipage}


In our algorithms we shall use the following constructs.

\vspace{.3em}
\noindent
\begin{minipage}{\textwidth}
\begin{mydefinition}[$\reset,\eval,\solConstr,\rhoEmpty,\Conf$]\label{def:alg_funcs}
 Let $\rho\colon C\rightharpoonup\mathbb{R}_{>0}$ be a partial function 
 that carries a clock variable $x\in C$ to a positive real;
 the intention is that $x$ was reset at time $\rho(x)$ (in the absolute
 clock). 
Let $x\in C$ and $t_{r}\in\mathbb{R}_{>0}$; then the partial function
 $\reset(\rho,x,t_r)\colon C\rightharpoonup\mathbb{R}_{>0}$ 
 is defined by: $\reset(\rho,x,t_r)(x)=t_{r}$ and 
$\reset(\rho,x,t_r)(y)=\rho(y)$ for each $y\in C$ such that
 $y\neq x$. (The last is Kleene's equality between partial functions, to be precise.)

Now let $\rho$ be as above, and $t,t_{0}\in\mathbb{R}_{\ge 0}$,
with the intention that $t$ is the current (absolute) time and $t_{0}$
 is the epoch (absolute) time for a timed word segment
 $w|_{(t_{0},t')}$. We further assume $t_{0}\le t$
 and $t_{0}\le \rho(x)\le t$ for each $x\in C$ for which $\rho(x)$ is
 defined. 
The clock interpretation 
 $\eval (\rho,t,t_0)\colon C\to \mathbb{R}_{\ge 0}$
 is defined by: $\eval (\rho,t,t_0)(x)=t-\rho(x)$ (if $\rho(x)$
 is defined); and 
$\eval (\rho,t,t_0)(x)=t-t_{0}$ (if $\rho(x)$ is undefined).

For intervals $T,T'\subseteq\mathbb{R}_{\ge 0}$, a partial function 
$\rho\colon C\rightharpoonup\mathbb{R}_{>0}$ and a clock constraint
 $\delta$ (\S{}\ref{221705_24Jan16}), we define
 $\solConstr(T,T',\rho,\delta)=\bigl\{\,(t,t')\,\bigl|\bigr.\,t\in T, t'\in T',
 \eval(\rho,t',t)\models \delta\,\bigr\}$.

We let $\rhoEmpty\colon C\rightharpoonup\mathbb{R}_{>0}$ denote the
 partial function that is nowhere defined.

For a timed word $w$, a timed automaton $\mathcal{A}$ and each $i,j\in
 [1, |w|]$, we 
define the  set of ``configurations'':
 \begin{math}
  \Conf (i,j) = \bigl\{ (s,\rho,T) \,\bigl|\bigr.\, \forall t_0 \in
  T.\, \exists (\overline{s},\overline{\nu}).\,  
  (\overline{s},\overline{\nu}) \text{ is a run  over } w (i,j) -
  t_0 ,
   s_{|\overline{s}|-1} = s \text{, and }
  \nu_{|\overline{\nu}|-1} = \eval (\rho,\tau_j,t_0)\bigr\}
 \end{math}.
Further details are in
 Appendix~\ref{appendix:futherExplOfNaiveAlgorithm}. 
\end{mydefinition}
\end{minipage}


\begin{wrapfigure}[8]{r}{0pt}
\centering
\scalebox{.7}{
\begin{pgfpicture}
\begin{pgfscope}
\pgftransformcm{1.0}{0.0}{0.0}{1.0}{\pgfpoint{220.0bp}{3170.393701bp}}
\pgftext[left,base]{\mdseries\upshape\normalsize\color[rgb]{0.0,0.0,0.0}$a_2$}
\end{pgfscope}
\begin{pgfscope}
\pgftransformcm{1.0}{0.0}{0.0}{1.0}{\pgfpoint{200.0bp}{3170.393701bp}}
\pgftext[left,base]{\mdseries\upshape\normalsize\color[rgb]{0.0,0.0,0.0}$a_1$}
\end{pgfscope}
\begin{pgfscope}
\pgftransformcm{1.0}{0.0}{0.0}{1.0}{\pgfpoint{240.0bp}{3170.393701bp}}
\pgftext[left,base]{\mdseries\upshape\normalsize\color[rgb]{0.0,0.0,0.0}$a_3$}
\end{pgfscope}
\begin{pgfscope}
\pgftransformcm{1.0}{0.0}{0.0}{1.0}{\pgfpoint{260.0bp}{3170.393701bp}}
\pgftext[left,base]{\mdseries\upshape\normalsize\color[rgb]{0.0,0.0,0.0}$a_4$}
\end{pgfscope}
\begin{pgfscope}
\pgftransformcm{1.0}{0.0}{0.0}{1.0}{\pgfpoint{280.0bp}{3170.393701bp}}
\pgftext[left,base]{\mdseries\upshape\normalsize\color[rgb]{0.0,0.0,0.0}$a_5$}
\end{pgfscope}
\begin{pgfscope}
\pgftransformcm{1.0}{0.0}{0.0}{1.0}{\pgfpoint{300.0bp}{3170.393701bp}}
\pgftext[left,base]{\mdseries\upshape\normalsize\color[rgb]{0.0,0.0,0.0}$a_6$}
\end{pgfscope}
\begin{pgfscope}
\pgftransformcm{1.0}{0.0}{0.0}{1.0}{\pgfpoint{200.0bp}{3180.393701bp}}
\pgftext[left,base]{\mdseries\upshape\normalsize\color[rgb]{0.0,0.0,0.0}$\tau_1$}
\end{pgfscope}
\begin{pgfscope}
\pgftransformcm{1.0}{0.0}{0.0}{1.0}{\pgfpoint{220.0bp}{3180.393701bp}}
\pgftext[left,base]{\mdseries\upshape\normalsize\color[rgb]{0.0,0.0,0.0}$\tau_2$}
\end{pgfscope}
\begin{pgfscope}
\pgftransformcm{1.0}{0.0}{0.0}{1.0}{\pgfpoint{240.0bp}{3180.393701bp}}
\pgftext[left,base]{\mdseries\upshape\normalsize\color[rgb]{0.0,0.0,0.0}$\tau_3$}
\end{pgfscope}
\begin{pgfscope}
\pgftransformcm{1.0}{0.0}{0.0}{1.0}{\pgfpoint{260.0bp}{3180.393701bp}}
\pgftext[left,base]{\mdseries\upshape\normalsize\color[rgb]{0.0,0.0,0.0}$\tau_4$}
\end{pgfscope}
\begin{pgfscope}
\pgftransformcm{1.0}{0.0}{0.0}{1.0}{\pgfpoint{280.0bp}{3180.393701bp}}
\pgftext[left,base]{\mdseries\upshape\normalsize\color[rgb]{0.0,0.0,0.0}$\tau_5$}
\end{pgfscope}
\begin{pgfscope}
\pgftransformcm{1.0}{0.0}{0.0}{1.0}{\pgfpoint{300.0bp}{3180.393701bp}}
\pgftext[left,base]{\mdseries\upshape\normalsize\color[rgb]{0.0,0.0,0.0}$\tau_6$}
\end{pgfscope}
\begin{pgfscope}
\pgftransformcm{1.0}{0.0}{0.0}{1.0}{\pgfpoint{250.0bp}{3150.393701bp}}
\pgftext[left,base]{\mdseries\upshape\normalsize\color[rgb]{0.0,0.0,0.0}$\vdots$}
\end{pgfscope}
\begin{pgfscope}
\pgfsetlinewidth{1.0bp}
\pgfsetrectcap
\pgfsetmiterjoin
\pgfsetmiterlimit{10.0}
\pgfpathmoveto{\pgfpoint{290.0bp}{3148.393701bp}}
\pgfpathlineto{\pgfpoint{250.0bp}{3148.393701bp}}
\pgfpathlineto{\pgfpoint{250.0bp}{3136.393701bp}}
\pgfpathlineto{\pgfpoint{290.0bp}{3136.393701bp}}
\definecolor{strokepaint}{rgb}{0.0,0.0,0.0}\pgfsetstrokecolor{strokepaint}
\pgfusepath{stroke}
\end{pgfscope}
\begin{pgfscope}
\pgftransformcm{1.0}{0.0}{0.0}{1.0}{\pgfpoint{170.0bp}{3140.393701bp}}
\pgftext[left,base]{\mdseries\upshape\normalsize\color[rgb]{0.0,0.0,0.0}$i=4$}
\end{pgfscope}
\begin{pgfscope}
\pgftransformcm{1.0}{0.0}{0.0}{1.0}{\pgfpoint{295.0bp}{3140.393701bp}}
\pgftext[left,base]{\mdseries\upshape\normalsize\color[rgb]{0.0,0.0,0.0}$\stackrel{j}{\rightarrow}$}
\end{pgfscope}
\begin{pgfscope}
\pgfsetlinewidth{1.0bp}
\pgfsetrectcap
\pgfsetmiterjoin
\pgfsetmiterlimit{10.0}
\pgfpathmoveto{\pgfpoint{270.0bp}{3130.393701bp}}
\pgfpathlineto{\pgfpoint{230.0bp}{3130.393701bp}}
\pgfpathlineto{\pgfpoint{230.0bp}{3118.393701bp}}
\pgfpathlineto{\pgfpoint{270.0bp}{3118.393701bp}}
\definecolor{strokepaint}{rgb}{0.0,0.0,0.0}\pgfsetstrokecolor{strokepaint}
\pgfusepath{stroke}
\end{pgfscope}
\begin{pgfscope}
\pgftransformcm{1.0}{0.0}{0.0}{1.0}{\pgfpoint{170.0bp}{3122.393701bp}}
\pgftext[left,base]{\mdseries\upshape\normalsize\color[rgb]{0.0,0.0,0.0}$i=3$}
\end{pgfscope}
\begin{pgfscope}
\pgftransformcm{1.0}{0.0}{0.0}{1.0}{\pgfpoint{275.0bp}{3122.393701bp}}
\pgftext[left,base]{\mdseries\upshape\normalsize\color[rgb]{0.0,0.0,0.0}$\stackrel{j}{\rightarrow}$}
\end{pgfscope}
\begin{pgfscope}
\pgftransformcm{1.0}{0.0}{0.0}{1.0}{\pgfpoint{250.0bp}{3100.393701bp}}
\pgftext[left,base]{\mdseries\upshape\normalsize\color[rgb]{0.0,0.0,0.0}$\vdots$}
\end{pgfscope}
\end{pgfpicture}
}
\caption{$i,j$ in our algorithms for timed pattern
 matching}
\label{fig:orderOfIJInTheNaiveAlgorithm}
\end{wrapfigure}
Our first (naive) algorithm for timed pattern matching is 
in
Alg.~\ref{naive_alg}.
We conduct a brute-force breadth-first search, computing
$\bigl\{\,(t,t')\in\mathcal{M}(w,\mathcal{A})\,\bigl|\bigr.\,
  \tau_{i-1}\le t < \tau_{i}, \tau_{j} < t'\le \tau_{j+1}\,\bigr\}$
for each $i,j$, with the aid of $\Conf(i,j)$ in
Def.~\ref{def:alg_funcs}. 
(The singular case of $\forall i.\, \tau_{i}\not\in(t,t')$ is separately
taken care
of  by $\Init$.)
We do so  in the order illustrated in
Fig.~\ref{fig:orderOfIJInTheNaiveAlgorithm}: we decrement $i$, and for
each $i$ we increment $j$. This order---that flips the one in
Fig.~\ref{fig:bruteforce}---is for the purpose of the Boyer-Moore
type optimization later in~\S{}\ref{201300_31Jan16}. In
Appendix~\ref{appendix:futherExplOfNaiveAlgorithm} we provide further details.



\vspace{.3em}
\noindent
\begin{minipage}{\textwidth}
\begin{mytheorem}[termination and correctness of Alg.~\ref{naive_alg}]
\label{thm:terminationAndCorrectnessOfNaiveAlg}
\begin{enumerate}
 \item Alg.~\ref{naive_alg} terminates and its answer $Z$
is a finite union of zones. 
 \item  For any $t, t' \in \mathbb{R}_{>0}$ such that $t < t'$, the following
 are equivalent: 1) there is a zone $(T_0,T_f,T_{\Delta}) \in Z$ such that
        $(t,t') \in (T_0,T_f,T_{\Delta})$; and 2)
there is an accepting run $(\overline{s},\overline{\nu})$ over
        $w|_{(t,t')}$ of $\mathcal{A}$.
  \myqed
\end{enumerate}
\end{mytheorem}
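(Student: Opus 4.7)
The plan is to establish both parts of the theorem together by showing that the inductive computation of $\Conf(i,j)$ is well-defined and finite, and that the zones extracted from it correspond exactly to the accepting runs. Since the loops in Alg.~\ref{naive_alg} range over $i \in [1, |w|]$ (decreasing) and $j \in [i, |w|]$ (increasing), termination reduces to showing that each step produces a finite representation of $\Conf(i,j)$ and that the zone-manipulation primitives $\reset$, $\eval$, and $\solConstr$ are themselves computable on finite inputs.

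For the finiteness of $\Conf(i,j)$, the key observation is that the apparently real-valued ingredients admit a finite partitioning. First, $S$ is finite by assumption. Second, each $\rho(x)$ is either undefined or equal to some $\tau_k$ with $i \le k \le j$, because $\rho$ is updated only by $\reset(\rho,x,\tau_k)$ at transitions taken while reading $w(k)$; hence there are at most $(|w|+1)^{|C|}$ partial functions $\rho$ to consider. Third, the interval $T \subseteq [\tau_{i-1},\tau_i)$ admits a finite symbolic description: modulo the region equivalence $\sim$ applied to the clock interpretation $\eval(\rho,\tau_j,t_0)$ as a function of $t_0 \in T$, only finitely many sub-intervals of $[\tau_{i-1},\tau_i)$ arise, since the relative order of fractional parts of expressions of the form $\tau_j - \tau_k$ and $\tau_j - t_0$ and the integer parts thereof are piecewise constant in $t_0$. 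Combining these, $\Conf(i,j)$ is representable by finitely many zone-like triples, so the output $Z$ is a finite union of zones.

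For correctness, I would argue by induction on $j-i$ that $(s,\rho,T) \in \Conf(i,j)$ exactly captures the condition in Def.~\ref{def:alg_funcs}: for every $t_0 \in T$ there is a run of $\mathcal{A}$ over $w(i,j) - t_0$ ending in state $s$ with clock interpretation $\eval(\rho,\tau_j,t_0)$. The base case $j=i-1$ (empty segment) is the initialization via $\rhoEmpty$ and $S_0$. The step from $j$ to $j+1$ is by case-analysis on each transition $(s,s',a_{j+1},\lambda,\delta) \in E$, where $\solConstr$ carves out the sub-interval of $T$ on which $\delta$ is satisfied and $\reset$ updates $\rho$ accordingly; by inductive hypothesis this exactly enumerates the extended runs. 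Finally, $(t,t')$ lies in one of the output zones iff, for the unique indices with $\tau_{i-1}\le t<\tau_i$ and $\tau_j<t'\le\tau_{j+1}$, some $(s,\rho,T) \in \Conf(i,j)$ admits a $\$$-labelled transition to an accepting state whose clock constraint $\delta$ is satisfied at $(t,t')$ with $t \in T$; this is precisely the existence of an accepting run over $w|_{(t,t')}$ by Def.~\ref{def:timedWordSegment} and the assumption on the shape of $\mathcal{A}$.

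The main obstacle I expect is the finiteness argument for the interval component $T$: one has to check carefully that the relevant refinements of $[\tau_{i-1},\tau_i)$ remain consistent across the inductive step, so that $\reset$, $\eval$, and $\solConstr$ applied to a single triple yield only finitely many triples rather than an ever-growing family. Resolving this amounts to observing that $\solConstr$ intersects $T$ with a clock constraint whose boundaries are affine in $t_0$ with integer slopes and offsets derived from $\overline{\tau}$, so the resulting interval is still of the same symbolic form. A secondary subtlety is the boundary case where $(t,t')$ contains no $\tau_k$, which is handled separately by the $\Init$ branch and must be shown to combine with the main induction to cover every $(t,t')$ exactly once.
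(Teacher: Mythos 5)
Your proposal is correct, and on the correctness half it is essentially a more rigorous rendering of the paper's own argument: the paper proves the equivalence by associating to each zone added on line~\ref{naive_alg:add_interval} the sequence of transitions visited on line~\ref{naive_alg:readWordBegin} (line~\ref{naive_alg:update_start_interval} guaranteeing that all clock constraints along the sequence are satisfied, lines~\ref{naive_alg:insertTermBegin}--\ref{naive_alg:insertTermEnd} supplying the accepting $\$$-step), and conversely extracting the transition sequence from an accepting run and locating the zone added along that sequence; your induction on $j-i$ over configurations is the explicit version of this correspondence. Where you genuinely diverge is termination and finiteness. The paper's proof is a one-line counting argument: configurations are created only by branching over transitions in $E$, so line~\ref{naive_alg:add_interval} executes at most $|w||E|^{|w|+1}$ times, and finiteness of $Z$ is immediate. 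You instead bound the symbolic state space of triples $(s,\rho,T)$ directly---finitely many $\rho$ because resets only ever record time-stamps $\tau_k$, finitely many intervals $T$ because their endpoints come from a finite set of affine expressions in the $\tau_k$---which is valid but heavier than needed: the ``ever-growing family'' you worry about cannot arise because each triple in $\PrevConf$ spawns at most one narrowed triple per transition, and $T'$ remains an interval precisely because clock constraints exclude $\lor$ and $\lnot$ (a point made in Appendix~\ref{appendix:futherExplOfNaiveAlgorithm}, and which you do note); the region-equivalence machinery plays no role in either the algorithm or the paper's proof. One cosmetic slip, which the paper's in-algorithm comments also commit: $\Conf(i,j)$ as defined in Def.~\ref{def:alg_funcs} is downward-closed in the interval component and hence infinite as a set of triples, so your induction should be phrased about the finite family of triples the algorithm maintains (which covers $\Conf(i,j)$) rather than about $\Conf(i,j)$ itself. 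Your separate treatment of the no-event case via $\Init$ matches the paper's.
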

\end{minipage}

%

As an immediate corollary, we conclude that a match set
 $\mathcal{M} (w,\mathcal{A})$ always allows representation by finitely
 many zones.

Changing the order of examination of $i,j$
(Fig.~\ref{fig:orderOfIJInTheNaiveAlgorithm}) gives us an \emph{online}
variant of Alg.~\ref{naive_alg}. It is presented in 
Appendix~\ref{appendix:onlinenNaiveAlgorithm}; nevertheless our Boyer-Moore type
algorithm is based on the original
Alg.~\ref{naive_alg}.

\section{A Timed Boyer-Moore Type Algorithm}
\label{201300_31Jan16}
Here we describe our main contribution, namely a Boyer-Moore type
algorithm for timed pattern matching. Much like the algorithm
in~\cite{Watson2003} skips some $j$'s in Fig.~\ref{fig:bruteforce} (\S{}\ref{subsec:patternMatchingAndWatsonWatsonAlgorithm}), we
wish to skip some $i$'s in Fig.~\ref{fig:orderOfIJInTheNaiveAlgorithm}. 
Let us henceforth fix a timed word $w = (\overline{a},\overline{\tau})$ and a timed automaton 
$\mathcal{A} = (\Sigma \amalg \{\$\},S,S_0,C,E,F)$ as the input of
the problem.


Let us define the \emph{optimal} skip value function by
 $\Opt (i) = \min \{n \in \mathbb{R}_{>0} \mid \exists t \in
 [\tau_{i-n-1},\tau_{i-n}).\, \exists t' \in (t,\infty).\,(t,t') \in \mathcal{M}
 (w,\mathcal{A})\}\label{optimum_amount_of_decrease}$;
the value $\Opt(i)$ designates the biggest skip value, at each $i$ in
 Fig.~\ref{fig:orderOfIJInTheNaiveAlgorithm}, that does not change the
 outcome of the algorithm. Since the function  $\Opt$ is not amenable to 
efficient computation in general, our goal is its underapproximation 
that is easily computed. 

Towards that goal we follow the (untimed) pattern matching
 algorithm in~\cite{Watson2003};
 see~\S{}\ref{subsec:patternMatchingAndWatsonWatsonAlgorithm}. 
 In applying the same idea as in
 Fig.~\ref{fig:tableForDelta2InPatternMatching} to define a skip value, however, the first obstacle
 is that the language $L'_{s_{3}}$---the set of (suitable prefixes of) all words
 that lead to $s_{3}$---becomes an \emph{infinite} set in the current timed
 setting. Our countermeasure is to use a \emph{region automaton}
 $R(\mathcal{A})$ (Def.~\ref{def:regionAutom}) for representing the set.



We shall first introduce some constructs used in our algorithm.
\vspace{-1em}
\begin{mydefinition}[$\mathcal{W}(r),\mathcal{W}(\overline{s},\overline{\alpha})$]
\label{def:mathcalW}
 Let $r$ be a set of runs of the timed automaton $\mathcal{A}$. We
 define a timed language $\mathcal{W}(r)$  by:
\begin{math}
 \mathcal{W}(r)=
  \bigl\{\,(\overline{a},\overline{\tau})\,\bigl|\bigr.\,\text{in $r$ there is a run of
 $\mathcal{A}$ over $(\overline{a},\overline{\tau})$}\,\bigr\}
\end{math}.

 For the region automaton $R(\mathcal{A})$, each run 
$(\overline{s},\overline{\alpha})$ of $R(\mathcal{A})$---where $s_{k}\in S$ and
 $\alpha_{k}\in (\mathbb{R}_{\geq 0})^{C}/{\sim}$, recalling the state
 space of $R(\mathcal{A})$ from Def.~\ref{def:regionAutom}---is
 naturally identified with a set of runs of $\mathcal{A}$, namely
$\{ (\overline{s},\overline{\nu})\in \bigl(S\times
(\mathbb{R}_{\geq 0})^{C}\bigr)^{*}  \mid
 \nu_k \in \alpha_k \text{ for each $k$}\}$. Under this identification
 we shall sometimes write $\mathcal{W}(\overline{s},\overline{\alpha})$ for
 a suitable timed language, too.

 The above definitions of $\mathcal{W}(r)$ and
 $\mathcal{W}(\overline{s},\overline{\alpha})$ naturally extends to a
 set $r$ of \emph{partial runs} of $\mathcal{A}$, and to a \emph{partial
 run}  $(\overline{s},\overline{\alpha})$ of $R(\mathcal{A})$,
 respectively. Here a partial run is a run but we do not require:
 it 
 start at an initial state; or its initial clock interpretation be $0$. 
\end{mydefinition}
\vspace{-1em}
The next optimization of  $R(\mathcal{A})$ is similar to so-called \emph{trimming},
but we leave those states that do not lead to any final state (they become
necessary later).

\vspace{.3em}
\noindent
\begin{minipage}{\textwidth}
\begin{mydefinition}[$\noredundancy{R}(\mathcal{A})$]
\label{def:regionAutomWithoutRedundancy}
For a timed automaton $\mathcal{A}$, we let
$\noredundancy{R}(\mathcal{A})$ denote its \emph{reachable region automaton}. It is the NFA
 $\noredundancy{R}(\mathcal{A})=(\Sigma,
\noredundancy{S},
\noredundancy{S_{0}},
\noredundancy{E},
\noredundancy{F})$
 obtained from 
$R(\mathcal{A})$ (Def.~\ref{def:regionAutom}) by removing all those
 states which are unreachable from any initial state.
\end{mydefinition}
\end{minipage}


%

We are ready to describe our Boyer-Moore type algorithm.
We use a skip value function $\Delta_{2}$ that is similar to
the one in~\S{}\ref{subsec:patternMatchingAndWatsonWatsonAlgorithm} (see
Fig.~\ref{192552_1Feb16} \&~\ref{fig:tableForDelta2InPatternMatching}),
computed with the aid of 
$m_{s}$ and $L'_{s}$ defined for each state $s$. We define
$m_{s}$ and $L'_{s}$ 
using the NFA $\noredundancy{R}(\mathcal{A})$. Notable differences  are:
1) here $L'_{s}$ and $L'$ are sets of \emph{runs}, not  of
\emph{words}; and 2) since the orders are flipped between
Fig.~\ref{fig:bruteforce} \&~\ref{fig:orderOfIJInTheNaiveAlgorithm}, 
$\Rev$ e.g.\ in~(\ref{eq:LPrimeExample}) is gone now.

The precise definitions are as follows. Here 
$s \in S$ is a state of the (original) timed automaton $\mathcal{A}$;
and we let
\begin{math}
 \noredundancy{R} (s) = \{ (s,\alpha) \in \noredundancy{S}
   \}
\end{math}.
\begin{equation}\label{eq:mLPrimeForTimedBoyerMoore}
 \small
 \begin{aligned}
 m &= \min
  \{|w'|\mid w' \in L (\mathcal {A})\}
 \quad
 m_s = 
\min \bigl\{\,|r| \,\bigl|\bigr.\,
 \beta_0 \in \noredundancy{S_{0}}, \beta \in \noredundancy{R}(s),
 r \in \Run_{\noredundancy{R} (\mathcal{A})} (\beta_0,\beta)\, \bigr\}
 \!\!\! \!\!\! \!\!\! \!\!\! \!\!\! \!\!\! \!\!\! \!\!\! \!\!\!
\\
 L' &= \bigl\{r (0,m-1) \mid \beta_0 \in \noredundancy{S_{0}}, \beta_f \in \noredundancy{F}, 
  r \in \Run_{\noredundancy{R} (\mathcal{A})} (\beta_0,\beta_f)\bigr\}
\\
  L'_s &= \bigl\{\,r (0,\min\{m,m_s\}-1) \,\bigl|\bigr.\,
 \beta_0 \in \noredundancy{S_{0}}, \beta \in \noredundancy{R}(s), 
 r \in \Run_{\noredundancy{R}(\mathcal{A})} (\beta_0,\beta)\,\bigr\}
\end{aligned}
\end{equation}
\begin{algorithm}[H]
 \caption{Our naive algorithm for timed pattern matching.
 See~Def.~\ref{def:alg_funcs} and
 Appendix~\ref{appendix:futherExplOfNaiveAlgorithm} 
 for details.
%
}
 \label{naive_alg}
 \scalebox{1.0}{
 \parbox{\textwidth}{
 \begin{algorithmic}[1]
  \Require A timed word $w = (\overline{a},\overline{\tau})$, and a
  timed automaton $\mathcal{A} = (\Sigma,S,S_0,C,E,F)$.
  \Ensure $\bigcup Z$ is the match set $\mathcal{M} (w,\mathcal{A})$ in
  Def.~\ref{TimedPatternMatching}.
  \State $i \gets |w|;\; \CurrConf \gets \emptyset;\;
  \Init\gets\emptyset;\; Z\gets\emptyset$
  \label{naive_alg:initialization}
  \Comment{$\Init$ and $Z$ are finite sets of zones.}
  \For{$s \in S_0$} \label{naive_alg:computingInitBegin}
  \Comment{Lines~\ref{naive_alg:computingInitBegin}--\ref{naive_alg:computingInitEnd}
  compute $\Init$.}
  \For{$s_f \in F$} 
  \For{$(s,s_f,\$,\lambda,\delta) \in E$} 
  \State $\Init \gets \Init \cup \solConstr
  ([0,\infty),(0,\infty),\rhoEmpty,\delta)$ 
         \label{naive_alg:computingInitEnd}
  \EndFor
  \EndFor 
  \EndFor
  \State $Z \gets Z \cup \bigl\{\,(T_0 \cap
  [\tau_{|w|},\infty),T_f \cap (\tau_{|w|},\infty),T_{\Delta}) 
  \,\bigl|\bigr.\,
  (T_0,T_f,T_\Delta) \in \Init\,\bigr\}$
  \State{}
  \Comment{We have added
  $\bigl\{\,(t,t')\in\mathcal{M}(w,\mathcal{A})\,\bigl|\bigr.\,
  t,t'\in[\tau_{|w|},\infty)
 \,\bigr\}$ to $Z$.}
  \While{$i > 0$}
  \State $Z \gets Z \cup \{(T_0 \cap
  [\tau_{i-1},\tau_{i}),T_f \cap (\tau_{i-1},\tau_{i}],T_{\Delta})  \mid
  (T_0,T_f,T_\Delta) \in \Init\}$
  \State \Comment{We have added $\bigl\{\,(t,t')\in\mathcal{M}(w,\mathcal{A})\,\bigl|\bigr.\,
  t,t'\in[\tau_{i-1},\tau_{i}]
\,\bigr\}$ to $Z$.}
  \label{naive_alg:specialCaseEnd}
  \State \Comment{Now, for each $j$, we shall add
  $\bigl\{\,(t,t')\in\mathcal{M}(w,\mathcal{A})\,\bigl|\bigr.\,
  t\in[\tau_{i-1},\tau_{i}), t'\in (\tau_{j},\tau_{j+1}]
\,\bigr\}$.}
  \State $j \gets i$
  \State $\CurrConf \gets \{(s,\rhoEmpty,[\tau_{i-1},\tau_i)) \mid s \in S_0\}$
  \label{naive_alg:CurrConfInitialize}
  \While{$\CurrConf \neq \emptyset \And j \leq |w|$}
  \State $(\PrevConf , \CurrConf) \gets (\CurrConf , \emptyset)$
  \Comment{Here $\PrevConf = \Conf (i,j-1)$.}
  \For{$(s,\rho,T) \in \PrevConf$}
  \For{$(s,s',a_j,\lambda,\delta) \in E$}
  \Comment{Read $(a_{j},\tau_{j})$.} \label{naive_alg:readWordBegin}
  \State $T' \gets \{t_0 \in T \mid \eval(\rho,\tau_j,t_0) \models \delta\}$
  \label{naive_alg:update_start_interval}
  \State
  \Comment  {Narrow the interval $T$ to satisfy the clock constraint $\delta$.}
  \If{$T' \neq \emptyset$}
  \State $\rho' \gets \rho$

  \For{$x \in \lambda$}
  \State $\rho' \gets \reset(\rho',x,\tau_j)$ \label{naive_alg:reset_clock}
  \Comment {Reset the clock variables in $\lambda$.}
  \EndFor

  \State $\CurrConf \gets \CurrConf \cup (s',\rho',T')$\label{naive_alg:readWordEnd}

  \For{$s_f \in F$} \label{naive_alg:insertTermBegin}\label{naive_alg:start_of_accepting_state}
  \Comment{Lines~\ref{naive_alg:insertTermBegin}--\ref{naive_alg:insertTermEnd}
  try to insert \$ in $(\tau_{j},\tau_{j+1}]$.}
  \For{$(s',s_f,\$,\lambda',\delta') \in E$} \label{naive_alg:loop_over_e_2}
  \If {$j = |w|$}
  \State $T'' \gets (\tau_j,\infty)$
  \Else
  \State $T'' \gets (\tau_j,\tau_{j+1}]$
  \EndIf
  \State $Z \gets Z \cup \solConstr(T',T'',\rho',\delta')$
  \label{naive_alg:add_interval} \label{naive_alg:insertTermEnd}
  \EndFor
  \EndFor \label{naive_alg:end_of_accepting_state}
  \EndIf
  \EndFor
  \EndFor

  \State $j \gets j + 1$
  \EndWhile
  \State $i \gets i - 1$ \label{naive_alg:decrement_i}
  \EndWhile
 \end{algorithmic}}}
\end{algorithm}

Note again that these data are defined over
$\noredundancy{R}(\mathcal{A})$
(Def.~\ref{def:regionAutomWithoutRedundancy}); 
 $\Run_{\noredundancy{R}(\mathcal{A})} (\beta_0,\beta_f)$ 
 is from Def.~\ref{def:runNFA}.

\begin{figure}[tbp]
 \begin{minipage}{.5\textwidth}
   \scalebox{0.65}{
 \begin{tikzpicture}[shorten >=1pt,node distance=2cm,on grid,auto] 
  \node[state,initial] (s_0)   {$s_0$}; 
  \node[state] (s_1) [right=of s_0] {$s_1$}; 
  \node[state] (s_2) [above right=of s_1] {$s_2$};
  \node[state,node distance=3cm] (s_3) [right=of s_1] {$s_3$};
  \node[accepting,state] (s_4) [right=of s_3] {$s_4$}; 
  \path[->] 
  (s_0) edge  [below] node {
  \begin{tabular}{c}
   $a,\mathbf{true}$\\
   $/y:=0$\\
  \end{tabular}
  } (s_1)
  (s_1) edge node  {$b,y=1$} (s_2)
  (s_1) edge [above,bend left=5,bend right=-5] node {$c,x<1$} (s_3) 
  (s_3) edge [below,bend left=5,bend right=-5] node {
  \begin{tabular}{c}
   $a,y<1$\\
   $/y:=0$\\
  \end{tabular}
  } (s_1) 
  (s_2) edge  node {$c,x<1$} (s_3) 
  (s_3) edge [loop below] node {$d,x>1$}(s_3)
  (s_3) edge node {$\$,x=1$} (s_4);
 \end{tikzpicture}
 }

\vspace{1em}
{\scriptsize  
\begin{minipage}{\textwidth}
 Here
 $m_{s_0}=1$, $m_{s_1}=2$, $m_{s_2}=3$, $m_{s_3}=3$, $m_{s_4}=4$
 and   $m=3$.
\end{minipage}
}
  \scalebox{.85}{ \begin{minipage}{.5\textwidth}
  \begin{align*}
   L' &=
 \left\{
 \begin{array}{l}
  \scalebox{0.70}{
  \tikz[baseline=(s_0.base),auto,node distance=0.5cm]{
   \regionstate{s_0}{$s_0$}{$x=y=0$};
   \regionstate[right=of s_0]{s_1}{$s_1$}{$y=0<x<1$};
   \regionstate[right=of s_1]{s_3}{$s_3$}{$0<y<x<1$};
   \path[->](s_0) edge  node {$a$} (s_1);
   \path[->](s_1) edge  node {$c$} (s_3);
   }}
 \end{array} 
\right\},\\
   L'_{s_3} &=
 \left\{
 \begin{array}{l}
  \scalebox{0.70}{
  \tikz[baseline=(s_0.base),auto,node distance=0.5cm]{
   \regionstate{s_0}{$s_0$}{$x=y=0$};
   \regionstate[right=of s_0]{s_1}{$s_1$}{$y=0<x<1$};
   \regionstate[right=of s_1]{s_3}{$s_3$}{$0<y<x<1$};
   \path[->](s_0) edge  node {$a$} (s_1);
   \path[->](s_1) edge  node {$c$} (s_3);
   }}
 \end{array}
 \right\}\\
   L'_{s_4} &=
 \left\{
 \begin{array}{l}
  \scalebox{0.70}{
  \tikz[baseline=(s_0.base),auto,node distance=0.5cm]{
   \regionstate{s_0}{$s_0$}{$x=y=0$};
   \regionstate[right=of s_0]{s_1}{$s_1$}{$y=0<x<1$};
   \regionstate[right=of s_1]{s_3}{$s_3$}{$0<y<x<1$};
   \path[->](s_0) edge  node {$a$} (s_1);
   \path[->](s_1) edge  node {$c$} (s_3);
   }}
 \end{array} 
\right\}\\
  \end{align*}
 \end{minipage}}
 \end{minipage}
 \hfill
\scalebox{.85}{ \begin{minipage}{.5\textwidth}
\begin{align*}
  L'_{s_0} &=
  \left\{
  \scalebox{0.70}{
  \tikz[baseline=(s_0.base),auto,node distance=0.5cm]{
  \regionstate{s_0}{$s_0$}{$x=y=0$};
  }}
 \right\}\\
 L'_{s_1} &=
 \left\{
 \begin{array}{l}
  \scalebox{0.70}{
  \tikz[baseline=(s_0.base),auto,node distance=0.5cm]{
   \regionstate{s_0}{$s_0$}{$x=y=0$};
   \regionstate[right=of s_0]{s_1}{$s_1$}{$y=0<x<1$};
   \path[->](s_0) edge  node {$a$} (s_1);
   }}, \\
   \scalebox{0.70}{
   \tikz[baseline=(s_0.base),auto,node distance=0.5cm]{
   \regionstate{s_0}{$s_0$}{$x=y=0$};
   \regionstate[right=of s_0]{s_1}{$s_1$}{$y=0,x=1$};
   \path[->](s_0) edge  node {$a$} (s_1);
   }},\\ 
  \scalebox{0.70}{
   \tikz[baseline=(s_0.base),auto,node distance=0.5cm]{
   \regionstate{s_0}{$s_0$}{$x=y=0$};
   \regionstate[right=of s_0]{s_1}{$s_1$}{$y=0,1<x$};
   \path[->](s_0) edge  node {$a$} (s_1);
   }}
 \end{array}
 \right\}\\ 
 L'_{s_2} &=
 \left\{
 \begin{array}{l}
  \scalebox{0.70}{
  \tikz[baseline=(s_0.base),auto,node distance=0.5cm]{
   \regionstate{s_0}{$s_0$}{$x=y=0$};
   \regionstate[right=of s_0]{s_1}{$s_1$}{$y=0<x<1$};
   \regionstate[right=of s_1]{s_3}{$s_3$}{$0<y<x<1$};
   \path[->](s_0) edge  node {$a$} (s_1);
   \path[->](s_1) edge  node {$c$} (s_3);
   }}, \\
  \scalebox{0.70}{
  \tikz[baseline=(s_0.base),auto,node distance=0.5cm]{
   \regionstate{s_0}{$s_0$}{$x=y=0$};
   \regionstate[right=of s_0]{s_1}{$s_1$}{$y=0<x<1$};
   \regionstate[right=of s_1]{s_2}{$s_2$}{$y=1<x$};
   \path[->](s_0) edge  node {$a$} (s_1);
   \path[->](s_1) edge  node {$b$} (s_2);
   }}, \\
   \scalebox{0.70}{
   \tikz[baseline=(s_0.base),auto,node distance=0.5cm]{
   \regionstate{s_0}{$s_0$}{$x=y=0$};
   \regionstate[right=of s_0]{s_1}{$s_1$}{$y=0,x=1$};
   \regionstate[right=of s_1]{s_2}{$s_2$}{$y=1<x$};
   \path[->](s_0) edge  node {$a$} (s_1);
   \path[->](s_1) edge  node {$b$} (s_2);
   }},\\ 
  \scalebox{0.70}{
   \tikz[baseline=(s_0.base),auto,node distance=0.5cm]{
   \regionstate{s_0}{$s_0$}{$x=y=0$};
   \regionstate[right=of s_0]{s_1}{$s_1$}{$y=0,1<x$};
   \regionstate[right=of s_1]{s_2}{$s_2$}{$y=1<x$};
   \path[->](s_0) edge  node {$a$} (s_1);
   \path[->](s_1) edge  node {$b$} (s_2);
   }}
 \end{array}
 \right\}\\ 
 \end{align*}
 \end{minipage}}

\vspace{-1.5em}
 \caption{An example of a timed automaton $\mathcal{A}$, and the values
 $m_{s}, L'_{s}, L'$}
 \label{fig:exampleOfATimedAutom}
\end{figure}

\vspace*{.5em}
\noindent
\begin{minipage}{\textwidth}
 \begin{mydefinition}[$\Delta_{2}$]
 \label{def:Delta2ForTimedBoyerMoore}
 Let $\Conf$ be a set of triples $(s,\rho,T)$ of: a state $s\in S$ of
 $\mathcal{A}$, $\rho\colon C\rightharpoonup \mathbb{R}_{>0}$, and an
 interval $T$. (This is much like $\Conf(i,j)$ in Def.~\ref{def:alg_funcs}.) We define 
 the skip value  $\Delta_2 (\Conf)$  as follows. 
 \end{mydefinition}
\end{minipage}
 \begin{align*}
 \begin{array}{rl}
   d_1 (r) &= \min_{r' \in L'} \min \{ n \in \mathbb{Z}_{>0} \mid
  \mathcal{W}(r) \cap
  \bigl(\,\bigcup_{r''\in \pref \bigl(r'(n,|r'|)\bigr)}\mathcal{W}(r'') \,\bigr)
 \neq \emptyset\,\}\\  
  d_2 (r) &= \min_{r' \in L'} \min \{ n \in \mathbb{Z}_{>0} \mid
  \bigl(\bigcup_{r''\in \pref(r)}\mathcal{W}(r'')\bigr)
  \cap \mathcal{W}\bigl(r'(n,|r'|)\bigr) \neq \emptyset\,\}\\
  \Delta_2 (\Conf) &= \max_{(s,\rho,T) \in \Conf}
  \min_{r \in L'_s} \min\{d_1 (r),d_2 (r)\}
 \end{array} 
 \end{align*}
 Here $r\in \Run_{\noredundancy{R}(\mathcal{A})}$; $L'$ is
 from~(\ref{eq:mLPrimeForTimedBoyerMoore}); $\mathcal{W}$ is from
 Def.~\ref{def:mathcalW}; $r'(n,|r'|)$ is a subsequence
 of $r'$  (that is a partial run); and $\pref(r)$ denotes the set of all
 prefixes of $r$. 

\noindent
\begin{minipage}{\textwidth}
\begin{mytheorem}[correctness of $\Delta_{2}$]
 \label{thm:Delta2_is_skipvalue}
 Let $i\in [1,|w|\,]$, and $j=\max\{j\in [i,|w|\,]\mid
 \Conf(i,j)\neq\emptyset\}$, where $\Conf(i,j)$ is from Def.~\ref{def:alg_funcs}. (In case $\Conf(i,j)$ is everywhere empty
 we let $j=i$.) Then we have $  \Delta_2(\Conf (i,j))\le \Opt (i)$.
 \qed
\end{mytheorem}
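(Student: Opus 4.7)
The plan is to establish the stronger per-configuration inequality: for each $(s, \rho, T) \in \Conf(i, j)$, we have $\min_{r \in L'_s} \min\{d_1(r), d_2(r)\} \leq \Opt(i)$. Taking the maximum over configurations then yields the desired $\Delta_2(\Conf(i, j)) \leq \Opt(i)$. Intuitively, each configuration furnishes an independent lower bound on $\Opt(i)$, and the max aggregates these into the tightest such estimate---which is precisely what the Boyer--Moore-type skip exploits.

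First I fix an arbitrary $(s, \rho, T) \in \Conf(i, j)$ and set $n^* = \Opt(i)$. By definition of $\Opt$ there exist $t \in [\tau_{i-n^*-1}, \tau_{i-n^*})$, $t' > t$, and an accepting run $\rho^*$ of $\mathcal{A}$ over $w|_{(t, t')}$. The required witnesses are produced as follows. Since $(s, \rho, T) \in \Conf(i, j)$, there is a run $\sigma$ of $\mathcal{A}$ starting from an initial state at time $t_0 \in T \subseteq [\tau_{i-1}, \tau_i)$, reading $w(i, j)$ (shifted by $-t_0$), and ending at state $s$; projecting $\sigma$ to $\noredundancy{R}(\mathcal{A})$ and truncating to length $\min(m, m_s)$ produces a candidate $r \in L'_s$. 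Projecting $\rho^*$ to $\noredundancy{R}(\mathcal{A})$ and truncating its initial segment to length $m$ produces a candidate $r' \in L'$. The crux is then to argue that the shifted tail $r'(n^*, |r'|)$ shares a timed-word witness with $r$: when $|r| \leq |r'(n^*, |r'|)|$, a suitable prefix $r'' \in \pref(r'(n^*, |r'|))$ satisfies $\mathcal{W}(r) \cap \mathcal{W}(r'') \neq \emptyset$ and gives $d_1(r) \leq n^*$; in the opposite length regime, a prefix of $r$ plays the role of $r''$ and one obtains $d_2(r) \leq n^*$ instead. The common timed word is extracted from $w$ itself, exploiting the observation that $\sigma$ reads $a_i, \ldots, a_j$ in full while $\rho^*$ also reads this segment after its initial $n^*$ events, so shifting $\tilde{\rho}^*$ by $n^*$ positions aligns its tail with the initial portion of $\tilde{\sigma}$.

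The main obstacle I foresee is the careful handling of clock interpretations between the two runs. Since $\sigma$'s resets are relative to $t_0 \in [\tau_{i-1}, \tau_i)$ while $\rho^*$'s resets are relative to $t \in [\tau_{i-n^*-1}, \tau_{i-n^*})$, the clock valuations at the overlap point differ by a positive offset. The region equivalence $\sim$ tolerates this only when integer parts and fractional orderings align, so I must exhibit a single timed word that simultaneously realizes both region sequences over the overlap. Here the coarse semantics of $\mathcal{W}$ helps---it denotes all timed words consistent with a region sequence rather than a single valuation, and convexity of regions gives enough room to pick a common witness. A secondary subtlety is the degenerate case $n^* \geq i$, where the accepting run begins before $w$ starts; the convention $\tau_0 = 0$ together with the assumption confining $\$$-transitions to accepting states keeps the argument intact, though I would treat this case separately.
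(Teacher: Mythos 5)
Your outline coincides in skeleton with the paper's proof: fix a configuration, extract some $r\in L'_s$ from the configuration run and some $r'\in L'$ from the accepting run by projecting to $\noredundancy{R}(\mathcal{A})$ and truncating at $\min\{m,m_s\}$ resp.\ $m$, shift by $n^*=\Opt(i)$, and split into the $d_1$/$d_2$ cases according to which run is exhausted first; maximizing over the finite set $\Conf(i,j)$ is then harmless. The gap sits in the one step that carries all the difficulty: the claimed ``common timed word extracted from $w$ itself'' does not exist in general. The configuration run reads $w(i,j)-t_0$ with $t_0\in T\subseteq[\tau_{i-1},\tau_i)$, whereas the tail of the matched word $w|_{(t,t')}$ after its first $n^*$ events, once canonically re-based (the decomposition under non-absorbing concatenation shifts the suffix by the timestamp $\tau_{i-1}-t$ of the $n^*$-th event), is $w(i,\cdot)-\tau_{i-1}$. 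These two timed words have the same letters and the same inter-event gaps, but different \emph{initial} gaps, $\tau_i-t_0$ versus $\tau_i-\tau_{i-1}$, and hence may realize different region sequences (e.g.\ $\tau_i-t_0<1<\tau_i-\tau_{i-1}$ against a guard $x<1$). Your proposed repair---``convexity of regions gives enough room to pick a common witness''---is not an argument: regions are convex but not closed under re-basing the time origin, and what you need is nonemptiness of $\mathcal{W}(r)\cap\mathcal{W}(r'')$ for the \emph{specific} pair of region runs extracted from two differently based runs. Exhibiting a word in such an intersection is exactly the nontrivial question that Prop.~\ref{prop:productAutomAndPartialRun} and the product region automaton are introduced to decide, and nothing in your construction shows this particular intersection is inhabited.

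The paper's proof is organized precisely so that this two-witness merging problem never arises. It never juxtaposes two runs: it carries the \emph{single} timed word $w(i-n,|w|)-t$ through a chain of language memberships---this one word lies both in $(\Sigma\times\mathbb{R}_{>0})^{n}\cdot(w(i,j)-\tau_{i-1})\cdot(\Sigma\times\mathbb{R}_{>0})^{*}$ and in $L_{-\$}(\mathcal{A})\cdot(\Sigma\times\mathbb{R}_{>0})^{*}$---then coarsens each membership to region-run languages via the lemma $L_s\subseteq\bigcup_{r\in L'_s}\mathcal{W}(r)\cdot(\Sigma\times\mathbb{R}_{>0})^{*}$ (and the analogous inclusion for $L_{-\$}(\mathcal{A})$ and $L'$), and finally pushes the offset $n$ from the word level into the run index $r'(n,|r'|)$ by two dedicated lemmas, namely $\pref(\mathcal{W}(r))=\mathcal{W}(\pref(r))$ and the shift lemma. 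Every intersection in the resulting disjunction is witnessed by that same word, so no region-level merging is ever required. Your instinct that the $t_0$-versus-$\tau_{i-1}$ offset is the delicate point is sound---indeed, even the paper's chain silently bases the configuration word at $\tau_{i-1}$ rather than at $t_0$ when it substitutes $L_s$---but your route replaces the paper's single-witness bookkeeping with a strictly harder merging claim and then asserts it. (A further small inaccuracy: your claim that $\rho^*$ reads $a_i,\ldots,a_j$ ``in full'' after its first $n^*$ events fails when $t'\le\tau_j$; it is the truncations at $m$ and $\min\{m,m_s\}$ together with the prefix cases that render this harmless, as in the paper.)
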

\end{minipage}

\vspace*{.5em}


The remaining issue in Def.~\ref{def:Delta2ForTimedBoyerMoore} is that
the sets like $\mathcal{W}(r)$ and $\mathcal{W}(r'')$ can be
infinite---we need to avoid their direct computation.  We rely on the
usual automata-theoretic trick: the intersection of languages is
recognized by a product automaton. 

Given two timed automata
$\mathcal{B}$
and $\mathcal{C}$, we let $\mathcal{B}\times\mathcal{C}$ denote their
\emph{product} defined in the standard way (see e.g.~\cite{PandyaS12}). 
The following is straightforward.

\begin{myproposition}
\label{prop:productAutomAndPartialRun}
 Let $r = (\overline{s},\overline{\alpha})$ and $r' =
 (\overline{s'},\overline{\alpha'})$ be partial runs of
 $R(\mathcal{B})$ and $R(\mathcal{C})$, respectively; 
 they are naturally identified with sets of partial runs  of 
 $\mathcal{B}$ and $\mathcal{C}$ (Def.~\ref{def:mathcalW}). 
 Assume further that $|r| =
 |r'|$. 
 Then we have 
 $\mathcal{W} (r) \cap \mathcal{W} (r') = \mathcal{W}
 (r,r')$, where $(r,r')$ is the
 following
 set of runs of $\mathcal{B}\times \mathcal{C}$:
 $(r,r') = \bigl\{\,
 \bigl((\overline{s},\overline{s'}),(\overline{\nu} , \overline{\nu'})\bigr) \mid
 (\overline{s},\overline{\nu}) \in (\overline{s},\overline{\alpha})
 \text{ and  } (\overline{s'},\overline{\nu'}) \in
 (\overline{s'},\overline{\alpha'}) \,\bigr\}$.
\myqed
\end{myproposition}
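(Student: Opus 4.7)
The plan is a standard double-inclusion argument, relying on the fact that the product construction $\mathcal{B}\times\mathcal{C}$ synchronizes the two automata on the action label while keeping their clock variables in disjoint copies. Since $\mathcal{B}$ and $\mathcal{C}$ must both read the \emph{same} timed word $(\overline{a},\overline{\tau})$, the timestamps automatically align, and the hypothesis $|r|=|r'|$ ensures that the two region sequences have compatible length so that the product run has the right shape.

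For the inclusion $\mathcal{W}(r,r') \subseteq \mathcal{W}(r)\cap\mathcal{W}(r')$, I would take a run $\bigl((\overline{s},\overline{s'}),(\overline{\nu},\overline{\nu'})\bigr)$ of $\mathcal{B}\times\mathcal{C}$ belonging to the set $(r,r')$ and project it componentwise. By the definition of the product transition relation, each component is a genuine run of $\mathcal{B}$ (respectively $\mathcal{C}$) on the same underlying timed word, with $\nu_k\in\alpha_k$ and $\nu'_k\in\alpha'_k$ for every $k$; so the timed word witnesses membership in both $\mathcal{W}(r)$ and $\mathcal{W}(r')$.

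For the converse $\mathcal{W}(r)\cap\mathcal{W}(r')\subseteq \mathcal{W}(r,r')$, I would pick a timed word $(\overline{a},\overline{\tau})$ in the intersection together with runs $(\overline{s},\overline{\nu})$ of $\mathcal{B}$ and $(\overline{s'},\overline{\nu'})$ of $\mathcal{C}$ on that word, belonging respectively to the sets identified with $r$ and $r'$. Because both runs consume the same sequence of time-stamped events, their $k$-th transitions occur at the same absolute time $\tau_k$ with the same label $a_k$. Combining the two gives a candidate run $\bigl((\overline{s},\overline{s'}),(\overline{\nu},\overline{\nu'})\bigr)$ of $\mathcal{B}\times\mathcal{C}$; checking that each step is a transition in the product reduces to the two conjuncts (clock constraints and resets) being independently satisfied in $\mathcal{B}$ and in $\mathcal{C}$, which they are by assumption. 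This places the combined run in $(r,r')$ and hence the timed word in $\mathcal{W}(r,r')$.

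The routine but mildly tedious step is to be careful about the \emph{partial}-run variant: since we do not require $s_0\in S_0$ or $\nu_0\equiv 0$, the argument must be stated as a statement about arbitrary initial configurations rather than accepting runs, but the synchronous product handles this uniformly. I expect no genuine obstacle; the proposition is essentially the standard product-automaton identity $L(\mathcal{B})\cap L(\mathcal{C})=L(\mathcal{B}\times\mathcal{C})$ lifted to the level of region-indexed sets of runs, and the assumption $|r|=|r'|$ is the only place where one must be attentive.
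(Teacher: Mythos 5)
Your proposal is correct and takes essentially the same route as the paper's own proof: a double inclusion in which one direction projects a product run of $\mathcal{B}\times\mathcal{C}$ componentwise, and the other combines the two component runs on the common timed word (synchronized on labels and timestamps, with the clock sets kept disjoint) into a run belonging to $(r,r')$. The paper's argument is merely terser, leaving implicit the synchronization and partial-run details you spell out.
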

The proposition allows the following algorithm for the emptiness check
required in computing $d_{1}$
(Def.~\ref{def:Delta2ForTimedBoyerMoore}).  Firstly we distribute $\cap$
over $\bigcup$; then we need to check if
$\mathcal{W}(r)\cap\mathcal{W}(r'')\neq\emptyset$ for each $r''$. The
proposition reduces this
 to checking if 
$\mathcal{W}(r,r'')\neq\emptyset$, that is, if $(r,r'')$ is a
(legitimate) partial run of the region automaton 
 $R (\mathcal{A} \times \mathcal{A})$. The last check is obviously
 decidable since $R (\mathcal{A} \times \mathcal{A})$ is finite. For
 $d_{2}$ the situation is similar. 

We also note that the computation of $\Delta_{2}$
(Def.~\ref{def:Delta2ForTimedBoyerMoore}) can be accelerated by
memorizing the values $\min_{r \in
L'_s}\min\{d_1 (r),d_2 (r)\}$ for each $s$. 

Finally our Boyer-Moore type algorithm for timed pattern matching is 
Alg.~\ref{boyer-moore_alg}  in Appendix~\ref{appendix:timedBoyerMoore}. 
Its main differences from the naive one (Alg.~\ref{naive_alg}) are: 1)
initially we start with $i=|w|-m+1$ instead of $i=|w|$
(line~\ref{naive_alg:initialization} of Alg.~\ref{naive_alg}); and 2) 
we decrement $i$ by the skip value computed by $\Delta_{2}$, instead of
by $1$ (line~\ref{naive_alg:decrement_i} of Alg.~\ref{naive_alg}).


It is also possible to employ an analog of the 
skip value function $\Delta_1$ in
\S{}\ref{subsec:stringMatchingAndOriginalBoyerMoore}
\&~\ref{subsec:patternMatchingAndWatsonWatsonAlgorithm}. 
For $c\in\Sigma$ and $p\in \mathbb{Z}_{>0}$, we define
\begin{math}
   \Delta_1 (c,p) = 
  \min_{k > 0} \{k - p \mid k > m \text{ or }
  \exists (\overline{a},\overline{\tau}) \in \mathcal{W}(L').\, a_k = c \}
\end{math}. Here $m$ and $L'$ are
  from~(\ref{eq:mLPrimeForTimedBoyerMoore}). Then we can possibly skip more $i$'s
  using both $\Delta_{1}$ and $\Delta_{2}$; see
  Appendix~\ref{appendix:timedBoyerMoore} for details. In our
  implementation we do not use $\Delta_{1}$, though, following the
  (untimed) pattern matching algorithm in~\cite{Watson2003}. 
 Investigating the effect of additionally using $\Delta_{1}$ is future work.


One may think of the following alternative for pattern matching: we
first forget about time stamps, time constraints, etc.; the resulting
``relaxed'' untimed problem can be solved by the algorithm
in~\cite{Watson2003}
(\S{}\ref{subsec:patternMatchingAndWatsonWatsonAlgorithm}); and then 
we introduce the time constraints back and refine the interim result to the
correct one.  Our \emph{timed} Boyer-Moore algorithm has greater skip
values in general, however, because by using region automata
$R(\mathcal{A}),\noredundancy{R}(\mathcal{A})$ we also take time
constraints 
into account when computing skip values.
 

\marginpar{Ichiro added this remark, instead of adding a paragraph in
``Future Work''}
\begin{myremark}\label{rem:zoneAutom}
 It was suggested by multiple reviewers that our use of region
 automata be replaced with that
 of \emph{zone automata} (see e.g.~\cite{BehrmannBLP06}). This 
 can result in 
a much smaller automaton $R(\mathcal{A})$ for
 calculating skip values (cf.\ Def.~\ref{def:Delta2ForTimedBoyerMoore}
 and Case~2 of~\S{}\ref{222310_24Jan16}).
 More importantly, zone automata are insensitive to the time
 unit size---unlike region automata where the numbers $c_{x}$ in
 Def.~\ref{def:regionAutom} govern their size---a property desired in 
 actual deployment of timed pattern matching. This is a topic of our
 imminent
 future work.
\end{myremark}

 \section{Experiments}
 \label{222310_24Jan16}
 We implemented both of our naive offline algorithm and our
 Boyer-Moore type algorithm (without $\Delta_1$) in C++ \cite{TimedBoyerMooreSample}.
 We ran our experiments on MacBook Air Mid 2011 with Intel Core
 i7-2677M 1.80GHz CPU with 3.7GB RAM and
 Arch Linux (64-bit).
 Our programs were compiled with GCC 5.3.0 with optimization level O3.

 An execution of our  Boyer-Moore type algorithm consists of two phases: 
 in the first \emph{pre-processing} phase we compute the skip value function
 $\Delta_{2}$---to be precise the value
 $\min_{r \in L'_s}\min\{d_1 (r),d_2 (r)\}$ for each $s$, on which
 $\Delta_{2}$
 relies---and in the latter ``matching'' phase we actually compute the match set. 

 As input we used five test cases: each case consists of 
a timed automaton $\mathcal{A}$ and multiple timed words $w$ of varying
length $|w|$. 
 Cases~1 \&~4 are from a previous
 work~\cite{Ulus2014} on timed pattern matching; in Case~2 the timed
 automaton $\mathcal{A}$ is not expressible with a timed
 regular expression (TRE, see~\cite{Herrmann1999}
 and~\S{}\ref{221705_24Jan16}); and Cases~3 \&~5 are our original. In
 particular Case~5 comes from an automotive example.

Our principal interest is in the
 relationship between execution time and the length $|w|$ (i.e.\ the
 number of events), for both of the naive and Boyer-Moore algorithms. 
 For each test case we ran our programs 30 times; the presented
 execution time is the average. 
 We measured the execution time separately for the
 pre-processing phase and the (main) matching phase; in the figures we
 present the time for the latter.

 We present an overview of the results. Further
 details are found in 
 Appendix~\ref{appendix:detailedResultsOfOurExperiments}.



\begin{figure}[tbh]
 \begin{minipage}[b]{.5\textwidth}
  \centering
   \scalebox{0.7}{
 \begin{tikzpicture}[shorten >=1pt,node distance=2cm,on grid,auto]
 \node[state,initial] (s_0) {$s_0$};
 \node[state] (s_1) [right of=s_0] {$s_1$};
 \node[state] (s_2) [right of=s_1] {$s_2$};
 \node[state,accepting] (s_3) [right of=s_2]{$s_3$};

 \path[->] 
  (s_0) edge [above] node {$a,\mathbf{true}$} (s_1)
  (s_1) edge [above] node {$b,\mathbf{true}$} (s_2)
  (s_2) edge [above] node {$\$,\mathbf{true}$} (s_3);
 \end{tikzpicture}}

    \scalebox{0.50}{
   \begin{tikzpicture}[gnuplot]
\path (0.000,0.000) rectangle (12.500,8.750);
\gpcolor{color=gp lt color border}
\gpsetlinetype{gp lt border}
\gpsetdashtype{gp dt solid}
\gpsetlinewidth{1.00}
\draw[gp path] (1.136,0.985)--(1.316,0.985);
\draw[gp path] (11.947,0.985)--(11.767,0.985);
\node[gp node right] at (0.952,0.985) {$0$};
\draw[gp path] (1.136,1.807)--(1.316,1.807);
\draw[gp path] (11.947,1.807)--(11.767,1.807);
\node[gp node right] at (0.952,1.807) {$10$};
\draw[gp path] (1.136,2.629)--(1.316,2.629);
\draw[gp path] (11.947,2.629)--(11.767,2.629);
\node[gp node right] at (0.952,2.629) {$20$};
\draw[gp path] (1.136,3.450)--(1.316,3.450);
\draw[gp path] (11.947,3.450)--(11.767,3.450);
\node[gp node right] at (0.952,3.450) {$30$};
\draw[gp path] (1.136,4.272)--(1.316,4.272);
\draw[gp path] (11.947,4.272)--(11.767,4.272);
\node[gp node right] at (0.952,4.272) {$40$};
\draw[gp path] (1.136,5.094)--(1.316,5.094);
\draw[gp path] (11.947,5.094)--(11.767,5.094);
\node[gp node right] at (0.952,5.094) {$50$};
\draw[gp path] (1.136,5.916)--(1.316,5.916);
\draw[gp path] (11.947,5.916)--(11.767,5.916);
\node[gp node right] at (0.952,5.916) {$60$};
\draw[gp path] (1.136,6.737)--(1.316,6.737);
\draw[gp path] (11.947,6.737)--(11.767,6.737);
\node[gp node right] at (0.952,6.737) {$70$};
\draw[gp path] (1.136,7.559)--(1.316,7.559);
\draw[gp path] (11.947,7.559)--(11.767,7.559);
\node[gp node right] at (0.952,7.559) {$80$};
\draw[gp path] (1.136,8.381)--(1.316,8.381);
\draw[gp path] (11.947,8.381)--(11.767,8.381);
\node[gp node right] at (0.952,8.381) {$90$};
\draw[gp path] (1.136,0.985)--(1.136,1.165);
\draw[gp path] (1.136,8.381)--(1.136,8.201);
\node[gp node center] at (1.136,0.677) {$0$};
\draw[gp path] (2.938,0.985)--(2.938,1.165);
\draw[gp path] (2.938,8.381)--(2.938,8.201);
\node[gp node center] at (2.938,0.677) {$20$};
\draw[gp path] (4.740,0.985)--(4.740,1.165);
\draw[gp path] (4.740,8.381)--(4.740,8.201);
\node[gp node center] at (4.740,0.677) {$40$};
\draw[gp path] (6.542,0.985)--(6.542,1.165);
\draw[gp path] (6.542,8.381)--(6.542,8.201);
\node[gp node center] at (6.542,0.677) {$60$};
\draw[gp path] (8.343,0.985)--(8.343,1.165);
\draw[gp path] (8.343,8.381)--(8.343,8.201);
\node[gp node center] at (8.343,0.677) {$80$};
\draw[gp path] (10.145,0.985)--(10.145,1.165);
\draw[gp path] (10.145,8.381)--(10.145,8.201);
\node[gp node center] at (10.145,0.677) {$100$};
\draw[gp path] (11.947,0.985)--(11.947,1.165);
\draw[gp path] (11.947,8.381)--(11.947,8.201);
\node[gp node center] at (11.947,0.677) {$120$};
\draw[gp path] (1.136,8.381)--(1.136,0.985)--(11.947,0.985)--(11.947,8.381)--cycle;
\node[gp node center,rotate=-270] at (0.246,4.683) {Execution Time [ms]};
\node[gp node center] at (6.541,0.215) {Number of Events [$\times 10000$]};
\node[gp node right] at (10.479,8.047) {naive};
\gpcolor{rgb color={0.580,0.000,0.827}}
\draw[gp path] (10.663,8.047)--(11.579,8.047);
\draw[gp path] (1.136,0.985)--(1.136,0.986)--(1.137,0.986)--(1.137,0.988)--(1.139,0.989)%
  --(1.142,0.995)--(1.145,0.999)--(1.148,1.004)--(1.154,1.005)--(1.159,1.015)--(1.163,1.014)%
  --(1.172,1.017)--(1.181,1.031)--(1.182,1.046)--(1.190,1.075)--(1.199,1.085)--(1.208,1.071)%
  --(1.217,1.100)--(1.226,1.138)--(1.228,1.102)--(1.280,1.196)--(1.424,1.363)--(1.713,1.548)%
  --(2.037,1.808)--(2.289,1.936)--(2.938,2.463)--(3.442,2.785)--(3.839,3.288)--(4.740,3.857)%
  --(5.641,4.439)--(5.749,4.559)--(6.542,5.587)--(7.442,6.031)--(8.343,6.581)--(9.244,7.153)%
  --(10.145,7.795)--(10.361,7.894);
\gpsetpointsize{4.00}
\gppoint{gp mark 1}{(1.136,0.985)}
\gppoint{gp mark 1}{(1.136,0.986)}
\gppoint{gp mark 1}{(1.137,0.986)}
\gppoint{gp mark 1}{(1.137,0.988)}
\gppoint{gp mark 1}{(1.139,0.989)}
\gppoint{gp mark 1}{(1.142,0.995)}
\gppoint{gp mark 1}{(1.145,0.999)}
\gppoint{gp mark 1}{(1.148,1.004)}
\gppoint{gp mark 1}{(1.154,1.005)}
\gppoint{gp mark 1}{(1.159,1.015)}
\gppoint{gp mark 1}{(1.163,1.014)}
\gppoint{gp mark 1}{(1.172,1.017)}
\gppoint{gp mark 1}{(1.181,1.031)}
\gppoint{gp mark 1}{(1.182,1.046)}
\gppoint{gp mark 1}{(1.190,1.075)}
\gppoint{gp mark 1}{(1.199,1.085)}
\gppoint{gp mark 1}{(1.208,1.071)}
\gppoint{gp mark 1}{(1.217,1.100)}
\gppoint{gp mark 1}{(1.226,1.138)}
\gppoint{gp mark 1}{(1.228,1.102)}
\gppoint{gp mark 1}{(1.280,1.196)}
\gppoint{gp mark 1}{(1.424,1.363)}
\gppoint{gp mark 1}{(1.713,1.548)}
\gppoint{gp mark 1}{(2.037,1.808)}
\gppoint{gp mark 1}{(2.289,1.936)}
\gppoint{gp mark 1}{(2.938,2.463)}
\gppoint{gp mark 1}{(3.442,2.785)}
\gppoint{gp mark 1}{(3.839,3.288)}
\gppoint{gp mark 1}{(4.740,3.857)}
\gppoint{gp mark 1}{(5.641,4.439)}
\gppoint{gp mark 1}{(5.749,4.559)}
\gppoint{gp mark 1}{(6.542,5.587)}
\gppoint{gp mark 1}{(7.442,6.031)}
\gppoint{gp mark 1}{(8.343,6.581)}
\gppoint{gp mark 1}{(9.244,7.153)}
\gppoint{gp mark 1}{(10.145,7.795)}
\gppoint{gp mark 1}{(10.361,7.894)}
\gppoint{gp mark 1}{(11.121,8.047)}
\gpcolor{color=gp lt color border}
\node[gp node right] at (10.479,7.739) {Boyer-Moore (without pre-processing)};
\gpcolor{rgb color={0.000,0.620,0.451}}
\draw[gp path] (10.663,7.739)--(11.579,7.739);
\draw[gp path] (1.136,0.986)--(1.137,0.986)--(1.137,0.989)--(1.139,0.989)--(1.142,0.992)%
  --(1.145,0.997)--(1.148,1.000)--(1.154,1.010)--(1.159,1.009)--(1.163,1.011)--(1.172,1.020)%
  --(1.181,1.042)--(1.182,1.032)--(1.190,1.050)--(1.199,1.051)--(1.208,1.043)--(1.217,1.083)%
  --(1.226,1.069)--(1.228,1.074)--(1.280,1.169)--(1.424,1.284)--(1.713,1.545)--(2.037,1.715)%
  --(2.289,1.832)--(2.938,2.305)--(3.442,2.574)--(3.839,3.048)--(4.740,3.538)--(5.641,4.063)%
  --(5.749,4.075)--(6.542,4.968)--(7.442,5.474)--(8.343,5.933)--(9.244,6.433)--(10.145,6.895)%
  --(10.361,7.061);
\gppoint{gp mark 2}{(1.136,0.986)}
\gppoint{gp mark 2}{(1.136,0.986)}
\gppoint{gp mark 2}{(1.137,0.986)}
\gppoint{gp mark 2}{(1.137,0.989)}
\gppoint{gp mark 2}{(1.139,0.989)}
\gppoint{gp mark 2}{(1.142,0.992)}
\gppoint{gp mark 2}{(1.145,0.997)}
\gppoint{gp mark 2}{(1.148,1.000)}
\gppoint{gp mark 2}{(1.154,1.010)}
\gppoint{gp mark 2}{(1.159,1.009)}
\gppoint{gp mark 2}{(1.163,1.011)}
\gppoint{gp mark 2}{(1.172,1.020)}
\gppoint{gp mark 2}{(1.181,1.042)}
\gppoint{gp mark 2}{(1.182,1.032)}
\gppoint{gp mark 2}{(1.190,1.050)}
\gppoint{gp mark 2}{(1.199,1.051)}
\gppoint{gp mark 2}{(1.208,1.043)}
\gppoint{gp mark 2}{(1.217,1.083)}
\gppoint{gp mark 2}{(1.226,1.069)}
\gppoint{gp mark 2}{(1.228,1.074)}
\gppoint{gp mark 2}{(1.280,1.169)}
\gppoint{gp mark 2}{(1.424,1.284)}
\gppoint{gp mark 2}{(1.713,1.545)}
\gppoint{gp mark 2}{(2.037,1.715)}
\gppoint{gp mark 2}{(2.289,1.832)}
\gppoint{gp mark 2}{(2.938,2.305)}
\gppoint{gp mark 2}{(3.442,2.574)}
\gppoint{gp mark 2}{(3.839,3.048)}
\gppoint{gp mark 2}{(4.740,3.538)}
\gppoint{gp mark 2}{(5.641,4.063)}
\gppoint{gp mark 2}{(5.749,4.075)}
\gppoint{gp mark 2}{(6.542,4.968)}
\gppoint{gp mark 2}{(7.442,5.474)}
\gppoint{gp mark 2}{(8.343,5.933)}
\gppoint{gp mark 2}{(9.244,6.433)}
\gppoint{gp mark 2}{(10.145,6.895)}
\gppoint{gp mark 2}{(10.361,7.061)}
\gppoint{gp mark 2}{(11.121,7.739)}
\gpcolor{color=gp lt color border}
\node[gp node right] at (10.479,7.431) {Boyer-Moore (with pre-processing)};
\gpcolor{rgb color={1.000,0.000,0.000}}
\draw[gp path] (10.663,7.431)--(11.579,7.431);
\draw[gp path] (1.136,0.992)--(1.136,0.989)--(1.137,0.990)--(1.137,0.995)--(1.139,0.993)%
  --(1.142,0.996)--(1.145,1.003)--(1.148,1.004)--(1.154,1.016)--(1.159,1.013)--(1.163,1.016)%
  --(1.172,1.024)--(1.181,1.047)--(1.182,1.037)--(1.190,1.056)--(1.199,1.056)--(1.208,1.047)%
  --(1.217,1.089)--(1.226,1.073)--(1.228,1.078)--(1.280,1.176)--(1.424,1.290)--(1.713,1.551)%
  --(2.037,1.720)--(2.289,1.837)--(2.938,2.311)--(3.442,2.579)--(3.839,3.053)--(4.740,3.544)%
  --(5.641,4.068)--(5.749,4.081)--(6.542,4.973)--(7.442,5.479)--(8.343,5.938)--(9.244,6.438)%
  --(10.145,6.901)--(10.361,7.067);
\gppoint{gp mark 3}{(1.136,0.992)}
\gppoint{gp mark 3}{(1.136,0.989)}
\gppoint{gp mark 3}{(1.137,0.990)}
\gppoint{gp mark 3}{(1.137,0.995)}
\gppoint{gp mark 3}{(1.139,0.993)}
\gppoint{gp mark 3}{(1.142,0.996)}
\gppoint{gp mark 3}{(1.145,1.003)}
\gppoint{gp mark 3}{(1.148,1.004)}
\gppoint{gp mark 3}{(1.154,1.016)}
\gppoint{gp mark 3}{(1.159,1.013)}
\gppoint{gp mark 3}{(1.163,1.016)}
\gppoint{gp mark 3}{(1.172,1.024)}
\gppoint{gp mark 3}{(1.181,1.047)}
\gppoint{gp mark 3}{(1.182,1.037)}
\gppoint{gp mark 3}{(1.190,1.056)}
\gppoint{gp mark 3}{(1.199,1.056)}
\gppoint{gp mark 3}{(1.208,1.047)}
\gppoint{gp mark 3}{(1.217,1.089)}
\gppoint{gp mark 3}{(1.226,1.073)}
\gppoint{gp mark 3}{(1.228,1.078)}
\gppoint{gp mark 3}{(1.280,1.176)}
\gppoint{gp mark 3}{(1.424,1.290)}
\gppoint{gp mark 3}{(1.713,1.551)}
\gppoint{gp mark 3}{(2.037,1.720)}
\gppoint{gp mark 3}{(2.289,1.837)}
\gppoint{gp mark 3}{(2.938,2.311)}
\gppoint{gp mark 3}{(3.442,2.579)}
\gppoint{gp mark 3}{(3.839,3.053)}
\gppoint{gp mark 3}{(4.740,3.544)}
\gppoint{gp mark 3}{(5.641,4.068)}
\gppoint{gp mark 3}{(5.749,4.081)}
\gppoint{gp mark 3}{(6.542,4.973)}
\gppoint{gp mark 3}{(7.442,5.479)}
\gppoint{gp mark 3}{(8.343,5.938)}
\gppoint{gp mark 3}{(9.244,6.438)}
\gppoint{gp mark 3}{(10.145,6.901)}
\gppoint{gp mark 3}{(10.361,7.067)}
\gppoint{gp mark 3}{(11.121,7.431)}
\gpcolor{color=gp lt color border}
\draw[gp path] (1.136,8.381)--(1.136,0.985)--(11.947,0.985)--(11.947,8.381)--cycle;
\gpdefrectangularnode{gp plot 1}{\pgfpoint{1.136cm}{0.985cm}}{\pgfpoint{11.947cm}{8.381cm}}
\end{tikzpicture}
   }

 \caption{Case 1: $\mathcal{A}$ and execution time}
 \label{fig:case1}
\scalebox{0.57}{
 \begin{tikzpicture}[shorten >=1pt,node distance=2.5cm,on grid,auto] 
     \node[state,initial] (s_0)   {$s_0$}; 
     \node[state,node distance=2cm] (s_1) [right=of s_0] {$s_1$}; 
     \node[state] (s_2) [right=of s_1] {$s_2$};
     \node[state] (s_3) [right=of s_2] {$s_3$};
     \node[state,node distance=2.0cm] (s_4) [below=of s_3] {$s_4$}; 
     \node[state,node distance=3cm] (s_5) [left=of s_4] {$s_5$}; 
     \node[state,node distance=3cm] (s_6) [left=of s_5] {$s_6$};
     \node[state,accepting] (s_7) [left=of s_6] {$s_7$};
     \path[->] 
     (s_0) edge  [above] node {\begin{tabular}{c}
                                $a,1 < x$\\
                                $/x := 0$
                               \end{tabular}} (s_1)
     (s_1) edge  [above] node {\begin{tabular}{c}
                                $a,0 < x < 1$\\
                                $/x := 0$
                               \end{tabular}} (s_2)
     (s_2) edge  [above] node {\begin{tabular}{c}
                                $a,0 < x < 1$\\
                                $/x := 0$
                               \end{tabular}} (s_3)
     (s_3) edge  [right] node {\begin{tabular}{c}
                                $a,0 < x < 1$\\
                                $/x := 0$
                               \end{tabular}} (s_4)
     (s_4) edge  [below] node {\begin{tabular}{c}
                                $a,0 < x < 1$\\
                                $/x := 0$
                               \end{tabular}} (s_5)
     (s_5) edge  [below] node {\begin{tabular}{c}
                                $a,1 < x$\\
                                $/x := 0$
                               \end{tabular}} (s_6)
     (s_5) edge  [above, loop above] node {$a,0 < x < 1/x := 0$} (s_5)
     (s_6) edge  [above, loop above] node {$a,1< x/x := 0$} (s_6)
     (s_6) edge  [above] node {$\$,x = 1$} (s_7);
 \end{tikzpicture}}

    \scalebox{0.50}{
   \input{4-0622123306.tikz}
   }

 \caption{Case 3: $\mathcal{A}$ and execution time}
 \label{fig:case3}
 \end{minipage}
 \begin{minipage}[b]{.5\textwidth}
  \centering
       \scalebox{0.7}{
       \begin{tikzpicture}[shorten >=1pt,node distance=2.3cm,on grid,auto] 
        \node[state,initial] (s_0)   {$s_0$}; 
        \node[state] (s_1) [right=of s_0] {$s_1$}; 
        \node[state,node distance=2.3cm] (s_2) [right=of s_1] {$s_2$};
        \node[state,accepting,node distance=2.0cm] (s_3) [right=of s_2] {$s_3$};
        \path[->] 
        (s_0) edge  [loop above] node {$a,\mathbf{true}/ y:= 0$} (s_0)
        (s_0) edge node  {$a, x=1$} (s_1)
        (s_1) edge  [loop above] node {$a,\mathbf{true}$} (s_1)
        (s_1) edge node  {$a,y=1$} (s_2)
        (s_2) edge node  {$\$,\mathbf{true}$} (s_3);
       \end{tikzpicture}
       }

    \scalebox{0.50}{
   \begin{tikzpicture}[gnuplot]
\path (0.000,0.000) rectangle (12.500,8.750);
\gpcolor{color=gp lt color border}
\gpsetlinetype{gp lt border}
\gpsetdashtype{gp dt solid}
\gpsetlinewidth{1.00}
\draw[gp path] (1.504,0.985)--(1.684,0.985);
\draw[gp path] (11.947,0.985)--(11.767,0.985);
\node[gp node right] at (1.320,0.985) {$0$};
\draw[gp path] (1.504,1.807)--(1.684,1.807);
\draw[gp path] (11.947,1.807)--(11.767,1.807);
\node[gp node right] at (1.320,1.807) {$500$};
\draw[gp path] (1.504,2.629)--(1.684,2.629);
\draw[gp path] (11.947,2.629)--(11.767,2.629);
\node[gp node right] at (1.320,2.629) {$1000$};
\draw[gp path] (1.504,3.450)--(1.684,3.450);
\draw[gp path] (11.947,3.450)--(11.767,3.450);
\node[gp node right] at (1.320,3.450) {$1500$};
\draw[gp path] (1.504,4.272)--(1.684,4.272);
\draw[gp path] (11.947,4.272)--(11.767,4.272);
\node[gp node right] at (1.320,4.272) {$2000$};
\draw[gp path] (1.504,5.094)--(1.684,5.094);
\draw[gp path] (11.947,5.094)--(11.767,5.094);
\node[gp node right] at (1.320,5.094) {$2500$};
\draw[gp path] (1.504,5.916)--(1.684,5.916);
\draw[gp path] (11.947,5.916)--(11.767,5.916);
\node[gp node right] at (1.320,5.916) {$3000$};
\draw[gp path] (1.504,6.737)--(1.684,6.737);
\draw[gp path] (11.947,6.737)--(11.767,6.737);
\node[gp node right] at (1.320,6.737) {$3500$};
\draw[gp path] (1.504,7.559)--(1.684,7.559);
\draw[gp path] (11.947,7.559)--(11.767,7.559);
\node[gp node right] at (1.320,7.559) {$4000$};
\draw[gp path] (1.504,8.381)--(1.684,8.381);
\draw[gp path] (11.947,8.381)--(11.767,8.381);
\node[gp node right] at (1.320,8.381) {$4500$};
\draw[gp path] (1.504,0.985)--(1.504,1.165);
\draw[gp path] (1.504,8.381)--(1.504,8.201);
\node[gp node center] at (1.504,0.677) {$0$};
\draw[gp path] (3.245,0.985)--(3.245,1.165);
\draw[gp path] (3.245,8.381)--(3.245,8.201);
\node[gp node center] at (3.245,0.677) {$20$};
\draw[gp path] (4.985,0.985)--(4.985,1.165);
\draw[gp path] (4.985,8.381)--(4.985,8.201);
\node[gp node center] at (4.985,0.677) {$40$};
\draw[gp path] (6.726,0.985)--(6.726,1.165);
\draw[gp path] (6.726,8.381)--(6.726,8.201);
\node[gp node center] at (6.726,0.677) {$60$};
\draw[gp path] (8.466,0.985)--(8.466,1.165);
\draw[gp path] (8.466,8.381)--(8.466,8.201);
\node[gp node center] at (8.466,0.677) {$80$};
\draw[gp path] (10.207,0.985)--(10.207,1.165);
\draw[gp path] (10.207,8.381)--(10.207,8.201);
\node[gp node center] at (10.207,0.677) {$100$};
\draw[gp path] (11.947,0.985)--(11.947,1.165);
\draw[gp path] (11.947,8.381)--(11.947,8.201);
\node[gp node center] at (11.947,0.677) {$120$};
\draw[gp path] (1.504,8.381)--(1.504,0.985)--(11.947,0.985)--(11.947,8.381)--cycle;
\node[gp node center,rotate=-270] at (0.246,4.683) {Execution Time [ms]};
\node[gp node center] at (6.725,0.215) {Number of Events [$\times 100$]};
\node[gp node right] at (10.479,8.047) {naive};
\gpcolor{rgb color={0.580,0.000,0.827}}
\draw[gp path] (10.663,8.047)--(11.579,8.047);
\draw[gp path] (1.521,0.985)--(1.539,0.985)--(1.574,0.986)--(1.643,0.987)--(1.782,0.992)%
  --(2.061,1.012)--(2.374,1.096)--(2.618,1.130)--(3.245,1.266)--(3.732,1.394)--(4.115,1.554)%
  --(4.985,1.926)--(5.855,2.379)--(5.960,2.481)--(6.726,3.041)--(7.596,3.718)--(8.466,4.532)%
  --(9.336,5.453)--(10.207,6.492)--(10.415,6.809);
\gpsetpointsize{4.00}
\gppoint{gp mark 1}{(1.521,0.985)}
\gppoint{gp mark 1}{(1.539,0.985)}
\gppoint{gp mark 1}{(1.574,0.986)}
\gppoint{gp mark 1}{(1.643,0.987)}
\gppoint{gp mark 1}{(1.782,0.992)}
\gppoint{gp mark 1}{(2.061,1.012)}
\gppoint{gp mark 1}{(2.374,1.096)}
\gppoint{gp mark 1}{(2.618,1.130)}
\gppoint{gp mark 1}{(3.245,1.266)}
\gppoint{gp mark 1}{(3.732,1.394)}
\gppoint{gp mark 1}{(4.115,1.554)}
\gppoint{gp mark 1}{(4.985,1.926)}
\gppoint{gp mark 1}{(5.855,2.379)}
\gppoint{gp mark 1}{(5.960,2.481)}
\gppoint{gp mark 1}{(6.726,3.041)}
\gppoint{gp mark 1}{(7.596,3.718)}
\gppoint{gp mark 1}{(8.466,4.532)}
\gppoint{gp mark 1}{(9.336,5.453)}
\gppoint{gp mark 1}{(10.207,6.492)}
\gppoint{gp mark 1}{(10.415,6.809)}
\gppoint{gp mark 1}{(11.121,8.047)}
\gpcolor{color=gp lt color border}
\node[gp node right] at (10.479,7.739) {Boyer-Moore (without pre-processing)};
\gpcolor{rgb color={0.000,0.620,0.451}}
\draw[gp path] (10.663,7.739)--(11.579,7.739);
\draw[gp path] (1.521,0.985)--(1.539,0.986)--(1.574,0.986)--(1.643,0.987)--(1.782,0.993)%
  --(2.061,1.012)--(2.374,1.048)--(2.618,1.087)--(3.245,1.229)--(3.732,1.394)--(4.115,1.538)%
  --(4.985,1.958)--(5.855,2.507)--(5.960,2.639)--(6.726,3.224)--(7.596,4.013)--(8.466,4.944)%
  --(9.336,6.009)--(10.207,7.124)--(10.415,7.470);
\gppoint{gp mark 2}{(1.521,0.985)}
\gppoint{gp mark 2}{(1.539,0.986)}
\gppoint{gp mark 2}{(1.574,0.986)}
\gppoint{gp mark 2}{(1.643,0.987)}
\gppoint{gp mark 2}{(1.782,0.993)}
\gppoint{gp mark 2}{(2.061,1.012)}
\gppoint{gp mark 2}{(2.374,1.048)}
\gppoint{gp mark 2}{(2.618,1.087)}
\gppoint{gp mark 2}{(3.245,1.229)}
\gppoint{gp mark 2}{(3.732,1.394)}
\gppoint{gp mark 2}{(4.115,1.538)}
\gppoint{gp mark 2}{(4.985,1.958)}
\gppoint{gp mark 2}{(5.855,2.507)}
\gppoint{gp mark 2}{(5.960,2.639)}
\gppoint{gp mark 2}{(6.726,3.224)}
\gppoint{gp mark 2}{(7.596,4.013)}
\gppoint{gp mark 2}{(8.466,4.944)}
\gppoint{gp mark 2}{(9.336,6.009)}
\gppoint{gp mark 2}{(10.207,7.124)}
\gppoint{gp mark 2}{(10.415,7.470)}
\gppoint{gp mark 2}{(11.121,7.739)}
\gpcolor{color=gp lt color border}
\node[gp node right] at (10.479,7.431) {Boyer-Moore (with pre-processing)};
\gpcolor{rgb color={1.000,0.000,0.000}}
\draw[gp path] (10.663,7.431)--(11.579,7.431);
\draw[gp path] (1.521,1.247)--(1.539,1.210)--(1.574,1.207)--(1.643,1.201)--(1.782,1.211)%
  --(2.061,1.211)--(2.374,1.294)--(2.618,1.321)--(3.245,1.474)--(3.732,1.621)--(4.115,1.746)%
  --(4.985,2.211)--(5.855,2.749)--(5.960,2.848)--(6.726,3.476)--(7.596,4.228)--(8.466,5.170)%
  --(9.336,6.224)--(10.207,7.366)--(10.415,7.692);
\gppoint{gp mark 3}{(1.521,1.247)}
\gppoint{gp mark 3}{(1.539,1.210)}
\gppoint{gp mark 3}{(1.574,1.207)}
\gppoint{gp mark 3}{(1.643,1.201)}
\gppoint{gp mark 3}{(1.782,1.211)}
\gppoint{gp mark 3}{(2.061,1.211)}
\gppoint{gp mark 3}{(2.374,1.294)}
\gppoint{gp mark 3}{(2.618,1.321)}
\gppoint{gp mark 3}{(3.245,1.474)}
\gppoint{gp mark 3}{(3.732,1.621)}
\gppoint{gp mark 3}{(4.115,1.746)}
\gppoint{gp mark 3}{(4.985,2.211)}
\gppoint{gp mark 3}{(5.855,2.749)}
\gppoint{gp mark 3}{(5.960,2.848)}
\gppoint{gp mark 3}{(6.726,3.476)}
\gppoint{gp mark 3}{(7.596,4.228)}
\gppoint{gp mark 3}{(8.466,5.170)}
\gppoint{gp mark 3}{(9.336,6.224)}
\gppoint{gp mark 3}{(10.207,7.366)}
\gppoint{gp mark 3}{(10.415,7.692)}
\gppoint{gp mark 3}{(11.121,7.431)}
\gpcolor{color=gp lt color border}
\draw[gp path] (1.504,8.381)--(1.504,0.985)--(11.947,0.985)--(11.947,8.381)--cycle;
\gpdefrectangularnode{gp plot 1}{\pgfpoint{1.504cm}{0.985cm}}{\pgfpoint{11.947cm}{8.381cm}}
\end{tikzpicture}
   }

 \caption{Case 2: $\mathcal{A}$ and execution time}
 \label{fig:case2}
 \scalebox{0.57}{
 \begin{tikzpicture}[shorten >=1pt,node distance=2.5cm,on grid,auto] 
    \node[state,initial] (s_0)  {$s_0$}; 
    \node[state,node distance=2cm] (s_1) [right=of s_0] {$s_1$}; 
    \node[state] (s_2) [right=of s_1] {$s_2$};
    \node[state] (s_3) [right=of s_2] {$s_3$};
    \node[state,node distance=2cm] (s_4) [below=of s_3] {$s_4$}; 
    \node[state,node distance=2.5cm] (s_5) [left=of s_4] {$s_5$}; 
    \node[state,node distance=2cm] (s_6) [left=of s_5] {$s_6$};
    \node[state,accepting,node distance=2cm] (s_7) [left=of s_6] {$s_7$};
    \path[->] 
    (s_0) edge  [above] node {\begin{tabular}{c}
                               $\mathit{low},\mathbf{true}$\\
                               $/x := 0$
                              \end{tabular}} (s_1)
    (s_1) edge  [above] node {\begin{tabular}{c}
                               $\mathit{high},$\\
                               $0 < x < 1$
                              \end{tabular}} (s_2)
    (s_2) edge  [above] node {\begin{tabular}{c}
                               $\mathit{high},$\\
                               $0 < x < 1$
                              \end{tabular}} (s_3)
    (s_3) edge  [right] node {\begin{tabular}{c}
                               $\mathit{high},$\\
                               $0 < x < 1$
                              \end{tabular}} (s_4)
    (s_4) edge  [above] node {\begin{tabular}{c}
                               $\mathit{high},$\\
                               $0 < x < 1$
                              \end{tabular}} (s_5)
    (s_5) edge  [above] node {\begin{tabular}{c}
                               $\mathit{high},$\\
                               $1 < x$
                              \end{tabular}} (s_6)
    (s_5) edge  [above, loop above] node {$\mathit{high},\mathbf{true}$} (s_5)
    (s_6) edge  [above] node {$\$,\mathbf{true}$} (s_7);
 \end{tikzpicture}}

    \scalebox{0.50}{
   \begin{tikzpicture}[gnuplot]
\path (0.000,0.000) rectangle (12.500,8.750);
\gpcolor{color=gp lt color border}
\gpsetlinetype{gp lt border}
\gpsetdashtype{gp dt solid}
\gpsetlinewidth{1.00}
\draw[gp path] (1.320,0.985)--(1.500,0.985);
\draw[gp path] (11.947,0.985)--(11.767,0.985);
\node[gp node right] at (1.136,0.985) {$0$};
\draw[gp path] (1.320,2.218)--(1.500,2.218);
\draw[gp path] (11.947,2.218)--(11.767,2.218);
\node[gp node right] at (1.136,2.218) {$50$};
\draw[gp path] (1.320,3.450)--(1.500,3.450);
\draw[gp path] (11.947,3.450)--(11.767,3.450);
\node[gp node right] at (1.136,3.450) {$100$};
\draw[gp path] (1.320,4.683)--(1.500,4.683);
\draw[gp path] (11.947,4.683)--(11.767,4.683);
\node[gp node right] at (1.136,4.683) {$150$};
\draw[gp path] (1.320,5.916)--(1.500,5.916);
\draw[gp path] (11.947,5.916)--(11.767,5.916);
\node[gp node right] at (1.136,5.916) {$200$};
\draw[gp path] (1.320,7.148)--(1.500,7.148);
\draw[gp path] (11.947,7.148)--(11.767,7.148);
\node[gp node right] at (1.136,7.148) {$250$};
\draw[gp path] (1.320,8.381)--(1.500,8.381);
\draw[gp path] (11.947,8.381)--(11.767,8.381);
\node[gp node right] at (1.136,8.381) {$300$};
\draw[gp path] (1.320,0.985)--(1.320,1.165);
\draw[gp path] (1.320,8.381)--(1.320,8.201);
\node[gp node center] at (1.320,0.677) {$0$};
\draw[gp path] (2.383,0.985)--(2.383,1.165);
\draw[gp path] (2.383,8.381)--(2.383,8.201);
\node[gp node center] at (2.383,0.677) {$50$};
\draw[gp path] (3.445,0.985)--(3.445,1.165);
\draw[gp path] (3.445,8.381)--(3.445,8.201);
\node[gp node center] at (3.445,0.677) {$100$};
\draw[gp path] (4.508,0.985)--(4.508,1.165);
\draw[gp path] (4.508,8.381)--(4.508,8.201);
\node[gp node center] at (4.508,0.677) {$150$};
\draw[gp path] (5.571,0.985)--(5.571,1.165);
\draw[gp path] (5.571,8.381)--(5.571,8.201);
\node[gp node center] at (5.571,0.677) {$200$};
\draw[gp path] (6.634,0.985)--(6.634,1.165);
\draw[gp path] (6.634,8.381)--(6.634,8.201);
\node[gp node center] at (6.634,0.677) {$250$};
\draw[gp path] (7.696,0.985)--(7.696,1.165);
\draw[gp path] (7.696,8.381)--(7.696,8.201);
\node[gp node center] at (7.696,0.677) {$300$};
\draw[gp path] (8.759,0.985)--(8.759,1.165);
\draw[gp path] (8.759,8.381)--(8.759,8.201);
\node[gp node center] at (8.759,0.677) {$350$};
\draw[gp path] (9.822,0.985)--(9.822,1.165);
\draw[gp path] (9.822,8.381)--(9.822,8.201);
\node[gp node center] at (9.822,0.677) {$400$};
\draw[gp path] (10.884,0.985)--(10.884,1.165);
\draw[gp path] (10.884,8.381)--(10.884,8.201);
\node[gp node center] at (10.884,0.677) {$450$};
\draw[gp path] (11.947,0.985)--(11.947,1.165);
\draw[gp path] (11.947,8.381)--(11.947,8.201);
\node[gp node center] at (11.947,0.677) {$500$};
\draw[gp path] (1.320,8.381)--(1.320,0.985)--(11.947,0.985)--(11.947,8.381)--cycle;
\node[gp node center,rotate=-270] at (0.246,4.683) {Execution Time [ms]};
\node[gp node center] at (6.633,0.215) {Number of Events [$\times 10000$]};
\node[gp node right] at (10.479,8.047) {naive};
\gpcolor{rgb color={0.580,0.000,0.827}}
\draw[gp path] (10.663,8.047)--(11.579,8.047);
\draw[gp path] (1.836,1.349)--(2.355,1.720)--(2.876,2.086)--(3.915,2.828)--(5.210,3.812)%
  --(6.500,4.730)--(7.794,5.613)--(9.088,6.572)--(10.385,7.394)--(11.678,8.360);
\gpsetpointsize{4.00}
\gppoint{gp mark 1}{(1.836,1.349)}
\gppoint{gp mark 1}{(2.355,1.720)}
\gppoint{gp mark 1}{(2.876,2.086)}
\gppoint{gp mark 1}{(3.915,2.828)}
\gppoint{gp mark 1}{(5.210,3.812)}
\gppoint{gp mark 1}{(6.500,4.730)}
\gppoint{gp mark 1}{(7.794,5.613)}
\gppoint{gp mark 1}{(9.088,6.572)}
\gppoint{gp mark 1}{(10.385,7.394)}
\gppoint{gp mark 1}{(11.678,8.360)}
\gppoint{gp mark 1}{(11.121,8.047)}
\gpcolor{color=gp lt color border}
\node[gp node right] at (10.479,7.739) {Boyer-Moore (without pre-processing)};
\gpcolor{rgb color={0.000,0.620,0.451}}
\draw[gp path] (10.663,7.739)--(11.579,7.739);
\draw[gp path] (1.836,1.182)--(2.355,1.375)--(2.876,1.576)--(3.915,1.950)--(5.210,2.480)%
  --(6.500,2.906)--(7.794,3.392)--(9.088,3.867)--(10.385,4.377)--(11.678,4.892);
\gppoint{gp mark 2}{(1.836,1.182)}
\gppoint{gp mark 2}{(2.355,1.375)}
\gppoint{gp mark 2}{(2.876,1.576)}
\gppoint{gp mark 2}{(3.915,1.950)}
\gppoint{gp mark 2}{(5.210,2.480)}
\gppoint{gp mark 2}{(6.500,2.906)}
\gppoint{gp mark 2}{(7.794,3.392)}
\gppoint{gp mark 2}{(9.088,3.867)}
\gppoint{gp mark 2}{(10.385,4.377)}
\gppoint{gp mark 2}{(11.678,4.892)}
\gppoint{gp mark 2}{(11.121,7.739)}
\gpcolor{color=gp lt color border}
\node[gp node right] at (10.479,7.431) {Boyer-Moore (with pre-processing)};
\gpcolor{rgb color={1.000,0.000,0.000}}
\draw[gp path] (10.663,7.431)--(11.579,7.431);
\draw[gp path] (1.836,1.430)--(2.355,1.614)--(2.876,1.816)--(3.915,2.188)--(5.210,2.729)%
  --(6.500,3.147)--(7.794,3.647)--(9.088,4.110)--(10.385,4.625)--(11.678,5.140);
\gppoint{gp mark 3}{(1.836,1.430)}
\gppoint{gp mark 3}{(2.355,1.614)}
\gppoint{gp mark 3}{(2.876,1.816)}
\gppoint{gp mark 3}{(3.915,2.188)}
\gppoint{gp mark 3}{(5.210,2.729)}
\gppoint{gp mark 3}{(6.500,3.147)}
\gppoint{gp mark 3}{(7.794,3.647)}
\gppoint{gp mark 3}{(9.088,4.110)}
\gppoint{gp mark 3}{(10.385,4.625)}
\gppoint{gp mark 3}{(11.678,5.140)}
\gppoint{gp mark 3}{(11.121,7.431)}
\gpcolor{color=gp lt color border}
\draw[gp path] (1.320,8.381)--(1.320,0.985)--(11.947,0.985)--(11.947,8.381)--cycle;
\gpdefrectangularnode{gp plot 1}{\pgfpoint{1.320cm}{0.985cm}}{\pgfpoint{11.947cm}{8.381cm}}
\end{tikzpicture}
   }

 \caption{Case 5: $\mathcal{A}$ and execution time}
 \label{fig:case5}
 \end{minipage}
\end{figure}


\noindent
\textbf{Case 1: No Clock Constraints}\quad
In Fig.~\ref{fig:case1} we present a timed automaton $\mathcal{A}$ and
the execution time (excluding pre-processing) for $37$ timed words $w$ whose lengths range from
20 to 1,024,000. Each timed word $w$ is an alternation of $a,b\in
\Sigma$, and its time stamps are randomly generated according to a certain 
uniform distribution. 

The automaton $\mathcal{A}$ is without any clock constraints, so in this
case the problem is almost that of \emph{untimed} pattern matching.  The
Boyer-Moore algorithm outperforms the naive one; but the gap is
approximately 1/10 when $|w|$ is large enough, which is smaller than
what one would expect from the fact that $i$ is always decremented by 2.
This is because, as some combinatorial investigation would reveal, those
$i$'s which are skipped are for which we examine fewer $j$'s


The pre-processing phase (that relies only on $\mathcal{A}$) took 
$6.38 \cdot 10^{-2}$ms.\ on average.



\noindent
\textbf{Case 2: Beyond Expressivity of TREs}\quad
In Fig.~\ref{fig:case2} are a timed automaton $\mathcal{A}$---one
that is not expressible with TREs~\cite{Herrmann1999}---and
the execution time for $20$ timed words $w$
whose lengths range from 20 to 10,240. Each $w$ is a
repetition of $a \in \Sigma$, and its time stamps are randomly generated
according to the uniform distribution in the interval $(0,0.1)$.

One can easily see that the skip value is always $1$, so our Boyer-Moore
algorithm is slightly slower due to the overhead of repeatedly reading
the result of pre-processing.
 The naive algorithm (and hence the Boyer-Moore one too) exhibits non-linear increase in
Fig.~\ref{fig:case2}; this is because its worst-case complexity is
bounded by $|w||E|^{{|w|}+1}$ (where $|E|$ is the number of edges in
$\mathcal{A}$). See the proof of Thm.~\ref{thm:terminationAndCorrectnessOfNaiveAlg}
(Appendix~\ref{subsec:proofOfThmTerminationAndCorrectnessOfNaiveAlg}). 
The factor $|E|$ in the above complexity bound stems essentially from nondeterminism.


The pre-processing phase
took 
$1.39 \cdot 10^{2}$ms.\ on average.

\noindent
\textbf{Case 3: Accepting Runs Are Long}\quad
In Fig.~\ref{fig:case3} are a timed automaton $\mathcal{A}$ and
the execution time for $49$ timed words $w$
whose lengths range from 8,028 to 10,243,600.
Each  $w$ is randomly generated as follows: it is a repetition of 
 $a\in \Sigma$; $a$ is repeated according to the exponential
 distribution with a parameter $\lambda$; and we do so for a fixed duration
$\tau_{|\overline{\tau}|}$, generating a timed word of length $|w|$. See 
Table~\ref{131638_23Jan16} in
Appendix~\ref{appendix:detailedResultsOfOurExperiments}.


In the automaton  $\mathcal{A}$ the length $m$ of the shortest accepting
run is large; hence so are the skip values in the Boyer-Moore optimization.
(Specifically the skip value is $5$ if both $\tau_i - \tau_{i-1}$ and
$\tau_{i+1} - \tau_{i}$ are greater than $1$.) Indeed, as we see from
the figure, 
the Boyer-Moore algorithm outperforms the naive one roughly by twice.


The pre-processing phase
took 
$7.02$ms.\ on average. This is in practice negligible; recall that pre-processing is done only once when
$\mathcal{A}$ is given. 
    

\noindent
\textbf{Case 4: Region Automata are Big}\quad
Here   $\mathcal{A}$ is a translation of the TRE\\
 $\bigl\langle\bigl(\,\bigl(\langle\mathrm{p}\rangle_{(0,10]}\langle\mathrm{\neg
    p}\rangle_{(0,10]}\bigr)^*\land\bigl(\langle\mathrm{q}\rangle_{(0,10]}\langle\mathrm{\neg
    q}\rangle_{(0,10]}\bigr)^*\,\bigr)\$\bigr\rangle_{(0,80]}$. 
We executed our two algorithms for $12$ timed words $w$ whose lengths range from
1,934 to 31,935.
Each  $w$ is generated randomly as follows: it is the interleaving
combination of an alternation of $p,\lnot p$ and one of $q,\lnot q$; in
each alternation the time stamps are governed by the exponential
distribution with a parameter $\lambda$; and its duration $\tau_{|\tau|}$ is fixed.  See Table~\ref{162108_21Jan16}
in Appendix~\ref{appendix:detailedResultsOfOurExperiments}.


This $\mathcal{A}$ is bad for our Boyer-Moore type algorithm since
its region automaton $R(\mathcal{A})$ is very big. Specifically: the
numbers $c_{x}$ in Def.~\ref{def:regionAutom} are big (10 and 80) and we
have to have many states accordingly in $R(\mathcal{A})$---recall that
in  $\sim$ we care about the coincidence of 
integer part. 
Indeed, the construction of $\noredundancy{R} (\mathcal{A})$
took ca.\ 74s., and the construction of
$R(\mathcal{A} \times \mathcal{A})$ did not complete due to
RAM shortage. Therefore we couldn't complete pre-processing for Boyer-Moore.

 We note however that our naive algorithm 
 worked fine.
 See Table~\ref{162108_21Jan16}
in Appendix~\ref{appendix:detailedResultsOfOurExperiments}. 

\noindent
\textbf{Case 5: An Automotive Example}\quad
 This final example (Fig.~\ref{fig:case5}) is about anomaly detection of 
engines. The execution time is shown for  $10$ timed words $w$
whose lengths range from 242,808 to 4,873,207.
Each  $w$ is obtained as a discretized log of the simulation of the model
\texttt{sldemo\_enginewc.slx} in the Simulink Demo
palette~\cite{SimulinkGuide}: here the input of the model (desired
rpm) is generated randomly  according to the Gaussian distribution 
with $\mu=\text{2,000}$rpm and $\sigma^{2}=10^{6}$$\text{rpm}^{2}$;
we discretized the output of the model (engine  torque) into two statuses,
$\mathit{high}$ and $\mathit{low}$, with
the threshold of $40\mathrm{N \cdot m}$.

This test case is meant to be a practical example in automotive
applications---our original motivation for the current work. The
automaton $\mathcal{A}$ expresses: the engine torque is $\mathit{high}$
for more than $1$s. (the kind of anomaly we are interested in) and 
the log is not too sparse (which means the log is a credible one). 


Here the Boyer-Moore algorithm outperforms the naive one roughly by
twice. 
The pre-processing phase took 
$9.94$ms.\ on average.

\marginpar{Ichiro added this paragraph}
Lacking in the current section are: detailed comparison with the existing
implementations  (e.g.\ in~\cite{Ulus2014}, modulo  the word-signal
difference in Rem.~\ref{rem:signal}); and performance analysis when
the specification $\mathcal{A}$, instead of the input timed word $w$,
grows. We intend to address these issues in the coming extended version.

\auxproof{ \section{Conclusions}
 \label{sec:conclusions}

 The main contribution in this paper is the construction of our
 Boyer-Moore type algorithm for timed pattern matching. In this
 algorithm, we avoid some unnecessary matching using the information of
 the already read string. This algorithm is about twice faster than our
 naive algorithm for some problem instances.

 However, the space complexity is too large for some instances in our
 implementation of our Boyer-Moore type algorithm. It is because the
 number of states of region automata increases exponentially. In order to
 reduce the space complexity, we can construct the necessary part of the
 region automaton for matching on-the-fly instead of having the whole
 part of the region automaton. This will be similar to the trick to
 reduce the complexity of the emptiness problem of timed B\"{u}chi automaton
 to PSPACE in~\cite{Alur1994}.
}

\paragraph*{Acknowledgments}
Thanks are due to the anonymous referees for their careful reading and
expert comments.
The authors are supported by
Grant-in-Aid No.\ 15KT0012, JSPS; T.A.\ is supported by
Grant-in-Aid for JSPS Fellows.

\bibliographystyle{ichiro}
\bibliography{dblp_refs}

\begin{thebibliography}{10}

\bibitem{Alur1994}
R.~Alur and D.L. Dill.
\newblock A theory of timed automata.
\newblock \emph{Theor. Comput. Sci.}, 126(2):183--235, 1994.

\bibitem{EugeneAsarin}
E.~Asarin, P.~Caspi and O.~Maler.
\newblock A {Kleene} theorem for timed automata.
\newblock In \emph{Proceedings, 12th Annual {IEEE} Symposium on Logic in
  Computer Science, Warsaw, Poland, June 29 - July 2, 1997}, pp. 160--171.
  {IEEE} Computer Society, 1997.

\bibitem{Asarin2002}
E.~Asarin, P.~Caspi and O.~Maler.
\newblock Timed regular expressions.
\newblock \emph{J. {ACM}}, 49(2):172--206, 2002.

\bibitem{DBLP:conf/rv/2015}
E.~Bartocci and R.~Majumdar, editors.
\newblock \emph{Runtime Verification - 6th International Conference, {RV} 2015
  Vienna, Austria, September 22-25, 2015. Proceedings}, vol. 9333 of
  \emph{Lecture Notes in Computer Science}. Springer, 2015.

\bibitem{BehrmannBLP06}
G.~Behrmann, P.~Bouyer, K.G. Larsen and R.~Pel{\'{a}}nek.
\newblock Lower and upper bounds in zone-based abstractions of timed automata.
\newblock \emph{{STTT}}, 8(3):204--215, 2006.

\bibitem{DBLP:conf/rv/2014}
B.~Bonakdarpour and S.A. Smolka, editors.
\newblock \emph{Runtime Verification - 5th International Conference, {RV} 2014,
  Toronto, ON, Canada, September 22-25, 2014. Proceedings}, vol. 8734 of
  \emph{Lecture Notes in Computer Science}. Springer, 2014.

\bibitem{Boyer1977}
R.S. Boyer and J.S. Moore.
\newblock A fast string searching algorithm.
\newblock \emph{Commun. {ACM}}, 20(10):762--772, 1977.

\bibitem{DBLP:conf/rv/ColomboP12}
C.~Colombo and G.J. Pace.
\newblock Fast-forward runtime monitoring - an industrial case study.
\newblock In S.~Qadeer and S.~Tasiran, editors, \emph{Runtime Verification,
  Third International Conference, {RV} 2012, Istanbul, Turkey, September 25-28,
  2012, Revised Selected Papers}, vol. 7687 of \emph{Lecture Notes in Computer
  Science}, pp. 214--228. Springer, 2012.

\bibitem{DBLP:conf/rv/DeshmukhDGJJS15}
J.V. Deshmukh, A.~Donz{\'{e}}, S.~Ghosh, X.~Jin, G.~Juniwal and S.A. Seshia.
\newblock Robust online monitoring of signal temporal logic.
\newblock In Bartocci and Majumdar  \cite{DBLP:conf/rv/2015}, pp. 55--70.

\bibitem{DBLP:conf/rv/DokhanchiHF14}
A.~Dokhanchi, B.~Hoxha and G.E. Fainekos.
\newblock On-line monitoring for temporal logic robustness.
\newblock In Bonakdarpour and Smolka  \cite{DBLP:conf/rv/2014}, pp. 231--246.

\bibitem{DBLP:conf/cav/DonzeFM13}
A.~Donz{\'{e}}, T.~Ferr{\`{e}}re and O.~Maler.
\newblock Efficient robust monitoring for {STL}.
\newblock In N.~Sharygina and H.~Veith, editors, \emph{Computer Aided
  Verification - 25th International Conference, {CAV} 2013, Saint Petersburg,
  Russia, July 13-19, 2013. Proceedings}, vol. 8044 of \emph{Lecture Notes in
  Computer Science}, pp. 264--279. Springer, 2013.

\bibitem{Ferrere2015}
T.~Ferr{\`{e}}re, O.~Maler, D.~Nickovic and D.~Ulus.
\newblock Measuring with timed patterns.
\newblock In D.~Kroening and C.S. Pasareanu, editors, \emph{Computer Aided
  Verification - 27th International Conference, {CAV} 2015, San Francisco, CA,
  USA, July 18-24, 2015, Proceedings, Part {II}}, vol. 9207 of \emph{Lecture
  Notes in Computer Science}, pp. 322--337. Springer, 2015.

\bibitem{DBLP:conf/rv/GeistRS14}
J.~Geist, K.Y. Rozier and J.~Schumann.
\newblock Runtime observer pairs and bayesian network reasoners on-board fpgas:
  Flight-certifiable system health management for embedded systems.
\newblock In Bonakdarpour and Smolka  \cite{DBLP:conf/rv/2014}, pp. 215--230.

\bibitem{Herrmann1999}
P.~Herrmann.
\newblock Renaming is necessary in timed regular expressions.
\newblock In C.P. Rangan, V.~Raman and R.~Ramanujam, editors, \emph{Foundations
  of Software Technology and Theoretical Computer Science, 19th Conference,
  Chennai, India, December 13-15, 1999, Proceedings}, vol. 1738 of
  \emph{Lecture Notes in Computer Science}, pp. 47--59. Springer, 1999.

\bibitem{DBLP:conf/rv/HoOW14}
H.~Ho, J.~Ouaknine and J.~Worrell.
\newblock Online monitoring of metric temporal logic.
\newblock In Bonakdarpour and Smolka  \cite{DBLP:conf/rv/2014}, pp. 178--192.

\bibitem{DBLP:conf/rv/KaneCDK15}
A.~Kane, O.~Chowdhury, A.~Datta and P.~Koopman.
\newblock A case study on runtime monitoring of an autonomous research vehicle
  {(ARV)} system.
\newblock In Bartocci and Majumdar  \cite{DBLP:conf/rv/2015}, pp. 102--117.

\bibitem{Knuth1977}
D.E. Knuth, J.H.M. Jr. and V.R. Pratt.
\newblock Fast pattern matching in strings.
\newblock \emph{{SIAM} J. Comput.}, 6(2):323--350, 1977.

\bibitem{SimulinkGuide}
The MathWorks, Inc., Natick, MA, USA.
\newblock \emph{{Simulink User's Guide}}, 2015.

\bibitem{MooreExample}
{Boyer-Moore Fast String Searching Example}.
\newblock
  \url{http://www.cs.utexas.edu/users/moore/best-ideas/string-searching/fstrpos-example.html}.

\bibitem{DBLP:journals/corr/abs-cs-0702120}
J.~Ouaknine and J.~Worrell.
\newblock On the decidability and complexity of metric temporal logic over
  finite words.
\newblock \emph{Logical Methods in Computer Science}, 3(1), 2007.

\bibitem{PandyaS12}
P.K. Pandya and P.V. Suman.
\newblock An introduction to timed automata.
\newblock In \emph{Modern Applications of Automata Theory}, pp. 111--148. World
  Scientific, 2012.

\bibitem{Ulus2014}
D.~Ulus, T.~Ferr{\`{e}}re, E.~Asarin and O.~Maler.
\newblock Timed pattern matching.
\newblock In A.~Legay and M.~Bozga, editors, \emph{Formal Modeling and Analysis
  of Timed Systems - 12th International Conference, {FORMATS} 2014, Florence,
  Italy, September 8-10, 2014. Proceedings}, vol. 8711 of \emph{Lecture Notes
  in Computer Science}, pp. 222--236. Springer, 2014.

\bibitem{Ulus2016}
D.~Ulus, T.~Ferr{\`{e}}re, E.~Asarin and O.~Maler.
\newblock Online timed pattern matching using derivatives.
\newblock In M.~Chechik and J.~Raskin, editors, \emph{Tools and Algorithms for
  the Construction and Analysis of Systems - 22nd International Conference,
  {TACAS} 2016, Held as Part of the European Joint Conferences on Theory and
  Practice of Software, {ETAPS} 2016, Eindhoven, The Netherlands, April 2-8,
  2016, Proceedings}, vol. 9636 of \emph{Lecture Notes in Computer Science},
  pp. 736--751. Springer, 2016.

\bibitem{TimedBoyerMooreSample}
M.~Waga, T.~Akazaki and I.~Hasuo.
\newblock {Code that Accompanies "A Boyer-Moore Type Algorithm for Timed
  Pattern Matching."}.
\newblock \url{https://github.com/MasWag/timed-pattern-matching}.

\bibitem{Watson2003}
B.W. Watson and R.E. Watson.
\newblock A boyer-moore-style algorithm for regular expression pattern
  matching.
\newblock \emph{Sci. Comput. Program.}, 48(2-3):99--117, 2003.

\end{thebibliography}

\newpage
\appendix
\section{Skip Value Functions in the (Original) Boyer-Moore Algorithm}
\label{appendix:skipValueFunctions}
Here $a\in\Sigma$ is  a character, and $p\in\mathbb{Z}_{>0}$ is a positive integer.
\begin{align*} 
 \Delta_1 (
a,p) &=
  \min
  \{p - k\mid \text{$\pat (k) = 
  a$ or $k=0$}\}
 \\
 d_1 (p) &=
 \min\{
 n\in\mathbb{Z}_{>0}
  \mid 
  \pat
  (p+1-n,|\pat|-n) = \pat (p+1,|\pat|)
  \text{ or }
 n
  \geq |\pat|
 \}
 \\ 
 d_2 &= 
\min\bigl\{\, n\in\mathbb{Z}_{>0}\mid
 \pat
 (1,|\pat|-n) = \pat (n+1,|\pat|\,)
\,\bigr\}
\\ 
  \Delta_2 (p) &= \min\{d_1 (p), d_2\}
\end{align*}

\section{Skip Value Function in the Boyer-Moore Type Algorithm for
 Pattern Matching}
\label{appendix:skipValueFunctionForWatsonsAlgorithm}


The  definitions below follow those in~\cite{Watson2003}. 

Here $s$ is a state of the automaton 
$\mathcal{A}=(\Sigma,S,S_{0},E,F)$ that accepts the reversed language
$L^{\Rev}$ of the given pattern $L$; we let
$\mathcal{A}_{s}=(\Sigma,S,S_{0},E,\{s\})$ be the automaton where $s$ is
the only accepting state.
\begin{align*}
 m_{s}
 &= \min\{|w|\mid w\in L(\mathcal{A}_{s})\}
 \qquad\text{(the length of a shortest word that leads to $s$)}
 \\
 m
 &=
 \min_{s\in F}m_{s}
 \qquad\text{(the length of a shortest accepted word)}
\\
 L'
 &=
 \{w(1,m)^{\Rev}\mid w\in L(\mathcal{A})\}
 \qquad\text{(for $w'$ to be accepted we need $w'(1,m)\in L'$)}
 \\
 L'_{s}
 &=
 \{w(1,\min\{m_{s},m\})^{\Rev}\mid w\in L(\mathcal{A}_{s})\}
 \\
 &
 \qquad\text{(for $w'$ to lead to $s$ we need $w'(1,\min\{m_{s},m\})\in L'_{s}$)}
 \\
 d_1 (w) &= \min
  \{n\in \mathbb{Z}_{>0} \mid \Sigma^* w\Sigma^n\cap L' \neq
 \emptyset \}
 \\
 d_2 (w) &= \min 
 \{n\in \mathbb{Z}_{>0}\mid w\Sigma^n\cap \Sigma^*L' \neq
 \emptyset \}
 \\
  \Delta_2 (s) &= \min_{w \in L'_s} \min\{d_1 (w), d_2
 (w)\}
\end{align*}
For example Fig.~\ref{fig:tableForDelta2InPatternMatching} allows us to
conclude that $d_{2}(\mathrm{dc})=2$, and hence that
$\Delta_{2}(s_{3})=2$. 

Finally we define, for each $C\subseteq S$, 
\begin{displaymath}
 \Delta_2 (C) = \max_{s \in C} \Delta_2 (s)\enspace.
\end{displaymath}
\auxproof{Q: Why is this $\max$? I would suspect this must be $\min$. 

A: This is $\max$. Let $u$ be the exact word the
automaton reads, that is $u = w (i,j)^{\Rev}$. By definition of $C$, for
any $s \in C$, there is a run $s_0,s_1,\cdots,s$ over $u$. So we can
choose $s\in C$ arbitrarily to overapproximate $u$.
}

The other skip value function $\Delta_{1}$ is defined as follows. 
For $a\in \Sigma$ and $p\in \mathbb{Z}_{>0}$,
\begin{displaymath}
\Delta_1 (a,p) = \min 
\{
p
- k
\mid 
k = 0 \text{ or } \exists {w \in L'}.\, w (k) =
a
\}
 \enspace.
\end{displaymath}


\section{Further Details on Algorithm~\ref{naive_alg}}
\label{appendix:futherExplOfNaiveAlgorithm}
Here are some further explanations on Algorithm~\ref{naive_alg}. 
\begin{itemize}
 \item We separately compute: 1) those 
  ``immediately ending''  matching intervals 
such that $w|_{(t,t')}$ is of length $1$ (in which case only the
       terminal character $\$$ would occur); and 2) the others, i.e.\ 
  such that there exists $i\in[1,|w|\,]$ with $\tau_{i}\in (t,t')$. 
 \item 
 Lines~\ref{naive_alg:computingInitBegin}--\ref{naive_alg:computingInitEnd}
   compute a finite set $\Init$ of zones, such that $\bigcup\Init=\{(t,t')\mid \varepsilon|_{(t,t')}\in L
   (\mathcal{A})\}$. This is used as an overapproximation of 
 the ``immediately ending'' matching intervals. 
  Precisely: 
 $\mathcal{M}(w,\mathcal{A})\cap \{(t,t')\mid
  |w|_{(t,t')}|=1\} $ coincides with $(\bigcup \Init) \cap \{(t,t')\mid \forall i\in [1,|w|\,].\,
       \tau_{i}\not\in(t,t')\}$.
 
 \item For computing any other matching interval $(t,t')$---say
       $t\in[\tau_{i-1},\tau_{i})$ and $t'\in
       (\tau_{j},\tau_{j+1}]$---we use the 
       following set.
 \begin{multline*}
  \Conf (i,j) = \bigl\{ (s,\rho,T) \,\bigl|\bigr.\, \forall t_0 \in
  T.\, \exists (\overline{s},\overline{\nu}).\,  
  (\overline{s},\overline{\nu}) \text{ is a run  over } w (i,j) -
  t_0 ,
  \\
   s_{|\overline{s}|-1} = s \text{, and }
  \nu_{|\overline{\nu}|-1} = \eval (\rho,\tau_j,t_0)\bigr\}
 \end{multline*}
 Recall from~(\ref{eq:runOfTimedAutom}) in Def.~\ref{def:runOfTimedAutom} that $s_{|\overline{s}|-1}$ is
       the last element of a sequence $\overline{s}$; similarly for
       $\overline{\nu}$. The intuition of a ``configuration''
       $(s,\rho,T)\in \Conf (i,j)$ is: a nonempty interval $T$ (that is a
       subset of
       $[\tau_{i-1},\tau_{i})$) is the set of epoch times $t$ such that, 
       after reading
       the timed word $w(i,j)$, we end up 
 in a state $s\in S$ and have $\rho$ as the record of clock reset.
 In the algorithm
       we use $\CurrConf$ and $\PrevConf$  for computing $\Conf(i,j)$. 

  More specifically, we initialize $T$ to be the whole
       $[\tau_{i-1},\tau_{i})$ 
  (line~\ref{naive_alg:CurrConfInitialize}). We  gradually narrow it
       down, so that the relevant clock constraint is satisfied
       (line~\ref{naive_alg:update_start_interval}). 
   Here we exploit our definition of clock constraint---we do not allow
       $\lnot$ or $\lor$---to ensure that the new set $T'$ is indeed an
       interval.
  
On lines~\ref{naive_alg:insertTermBegin}--\ref{naive_alg:insertTermEnd} we try to insert the
 terminal character \$ in $(\tau_j,\tau_{j+1}]$. It is also easy to see
       that $\solConstr(T',T'',\rho',\delta')$  on
       line~\ref{naive_alg:end_of_accepting_state} is indeed a zone,
       due to the definition of a clock constraint $\delta'$. 
\end{itemize}

 \section{An Online Variant of Our (Naive) Timed Pattern Matching Algorithm}
 \label{appendix:onlinenNaiveAlgorithm}

 An online variant of our naive algorithm (Alg.~\ref{naive_alg}) is in
 Alg.~\ref{naive_alg_online}.
 Here, unlike in Alg.~\ref{naive_alg},
 we do not divide the match set into ``immediately ending'' ones and others.
 The bottom line in this online algorithm is to compute $\mathcal{M}
 (w,\mathcal{A}) \cap \{(t,t') \mid t' \in (\tau_{i},\tau_{i+1}]\}$
 for each $i$.
 This algorithm is ``online'' because after reading a prefix $w (1,n)$
 of $w$, we have the partial match set
 $\mathcal{M} (w,\mathcal{A}) \cap \{(t,t') \mid t' \in
 (\tau_{0},\tau_{n}]\}$ at the point $n$.
 This algorithm consists of two main parts.
 In the former part---the main loop---we conduct the following
 procedure for each $i \in [1,|w|]$.
 \begin{enumerate}
  \item Lines~\ref{online_alg:insertTermBegin}--\ref{online_alg:insertTermEnd}
   try to insert \$ in $(\tau_{i-1},\tau_{i}]$.
  \item
       Lines~\ref{online_alg:readLoopBegin}--\ref{online_alg:readLoopEnd}
       read $(a_i,\tau_i)$ and build the set $\bigcup_{k\in [1,i]}\Conf
       (k,i)$ of ``configurations''.
 \end{enumerate}
 In the latter part---post-processing---on
 lines~\ref{online_alg:insertTermRemainBegin}--\ref{online_alg:insertTermRemainEnd},
 we
 try to insert \$ in $(\tau_{|w|},\infty)$.

 \begin{algorithm}
  \caption{An online variant of our naive algorithm for timed pattern matching}
  \label{naive_alg_online}
  \scalebox{1.0}{
  \parbox{\textwidth}{
  \begin{algorithmic}[1]
  \Require A timed word $w = (\overline{a},\overline{\tau})$, and a timed
   automaton $\mathcal{A} = (\Sigma,S,S_0,C,E,F)$.
  \Ensure $\bigcup Z$ is the match set $\mathcal{M} (w,\mathcal{A})$ in
  Def.~\ref{TimedPatternMatching}.

   \State $\CurrConf \gets \emptyset;\; Z\gets\emptyset$
   \For{$i \gets 1\, \mathbf{to}\, |w|$}
   \Comment{Here $\CurrConf = \bigcup_{k\in [1,i-1]}\Conf (k,i-1)$.}
   \State $\CurrConf \gets \CurrConf \cup \{(s,\rhoEmpty,[\tau_{i-1},\tau_i)) \mid s \in S_0\}$

   \For{$(s,\rho,T) \in \CurrConf$}\label{online_alg:insertTermBegin}
   \For{$s_f \in F$} \label{naive_alg_online:start_of_accepting_state} 
   \For{$(s,s_f,\$,\lambda,\delta) \in E$} \label{naive_alg_online:loop_over_e_2}
   \State $T' \gets (\tau_{i-1},\tau_{i}]$
   \State $Z \gets Z \cup \solConstr (T,T',\rho,\delta)$
   \label{naive_alg_online:add_interval}
   \EndFor
   \EndFor
   \EndFor\label{online_alg:insertTermEnd}
   \Comment{Lines~\ref{online_alg:insertTermBegin}--\ref{online_alg:insertTermEnd}
   try to insert \$ in $(\tau_{i-1},\tau_{i}]$.}
   
   \State $(\PrevConf , \CurrConf) \gets (\CurrConf , \emptyset)$\par
   \For{$(s,\rho,T) \in \PrevConf$} \label{online_alg:readLoopBegin}
   \For{$(s,s',a_i,\lambda,\delta) \in E$} 
   \Comment{Read $(a_{i},\tau_{i})$.} \label{naive_alg_online:loop_over_e_1}
   \State $T' \gets \{t_0 \in T \mid \eval(\rho,\tau_i,t_0) \models \delta\}$
   \label{naive_alg_online:update_start_interval}     
   \State
   \Comment  {Narrow the interval $T$ to satisfy the clock constraint $\delta$.}
   \If{$T' \neq \emptyset$}
   \State $\rho' \gets \rho$

   \For{$x \in \lambda$}
   \State $\rho' \gets \reset(\rho',x,\tau_i)$ \label{naive_alg_online:reset_clock} \Comment
   {Reset the clock variables in $\lambda$.}
   \EndFor

   \State $\CurrConf \gets \CurrConf \cup (s',\rho',T')$

   \EndIf
   \EndFor
   \EndFor
   \EndFor   \label{online_alg:readLoopEnd}
   \State $\CurrConf \gets \CurrConf \cup \{(s,\rhoEmpty,[\tau_{|w|},\infty)) \mid s \in S_0\}$

   \For{$(s,\rho,T) \in \CurrConf$} \label{online_alg:insertTermRemainBegin}
   \For{$s_f \in F$}
   \For{$(s,s_f,\$,\lambda,\delta) \in E$}
   \State {$T' \gets (\tau_{|w|},\infty)$}
   \State $Z \gets Z \cup \solConstr (T,T',\rho,\delta)$
   \EndFor
   \EndFor
   \EndFor
   \label{online_alg:insertTermRemainEnd}
   \Comment{Lines~\ref{online_alg:insertTermRemainBegin}--\ref{online_alg:insertTermRemainEnd}
   try to insert \$ in $(\tau_{|w|},\infty)$.}
  \end{algorithmic}}}
 \end{algorithm}

\section{Our Boyer-Moore Type Algorithm for Timed Pattern Matching}
\label{appendix:timedBoyerMoore}

See Alg.~\ref{boyer-moore_alg}. 

Its main differences from the naive one (Alg.~\ref{naive_alg}) are: 1)
initially we start with $i=|w|-m+1$ 
(line~\ref{boyer-moore_alg:initialization})
instead of $i=|w|$
(line~\ref{naive_alg:initialization} of Alg.~\ref{naive_alg}); and 2) 
we decrement $i$ by the skip value computed by $\Delta_{2}$
(line~\ref{boyer-moore_alg:decrement_i}), instead of
by $1$ (line~\ref{naive_alg:decrement_i} of Alg.~\ref{naive_alg}).

Compared to Alg.~\ref{naive_alg} we have added
lines~\ref{boyer-moore_alg:additionalForLoopStart}--\ref{boyer-moore_alg:additionalForLoopEnd},
too. This is needed to take care of matching intervals in which no event occurs.

When one wishes to incorporate the other skip value function
$\Delta_{1}$ too, line~\ref{boyer-moore_alg:decrement_i} of
Alg.~\ref{boyer-moore_alg} should be replaced by
\begin{math}
i \leftarrow i-
 \max\{\Delta_1 (a_j,j-i+1),\Delta_2
 (\CurrConf)\}
\end{math}.
\begin{algorithm}
  \caption{Our Boyer-Moore type algorithm for timed pattern matching}
  \label{boyer-moore_alg}
  \scalebox{0.95}{
  \parbox{\textwidth}{
  \begin{algorithmic}[1]
   \Require A timed word $w = (\overline{a},\overline{\tau})$, and a
   timed automaton $\mathcal{A} = (\Sigma,S,S_0,C,E,F)$.
  \Ensure $\bigcup Z$ is the match set $\mathcal{M} (w,\mathcal{A})$ in
   Def.~\ref{TimedPatternMatching}.
   \State $i \gets |w| - m + 1$;\; $\CurrConf \gets \emptyset$;\;
   $\Init\gets\emptyset$;\; $Z\gets\emptyset$
   \label{boyer-moore_alg:initialization}
   \For{$s \in S_0$} \label{boyer-moore_alg:computingInitBegin}
   \For{$s_f \in F$} 
   \For{$(s,s_f,\$,\lambda,\delta) \in E$} 
   \State $\Init \gets \Init \cup \solConstr
   ([0,\infty),(0,\infty),\rhoEmpty,\delta)$
   \label{boyer-moore_alg:computingInitEnd}
   \EndFor
   \EndFor
   \EndFor

   \For{$k = 1$ \textbf{to} $|w|$} \label{boyer-moore_alg:additionalForLoopStart}
   \State $Z \gets Z \cup \{(T_0 \cap
   [\tau_{k-1},\tau_{k}),T_f \cap (\tau_{k-1},\tau_{k}],T_{\Delta})  \mid
   (T_0,T_f,T_\Delta) \in \Init\}$
   \label{boyer-moore_alg:additionalForLoopEnd}
   \EndFor

   \State $Z \gets Z \cup \{(T_0 \cap
   [\tau_{|w|},\infty),T_f \cap (\tau_{|w|},\infty),T_{\Delta}) \mid
   (T_0,T_f,T_\Delta) \in \Init\}$

   \While{$i > 0$}
   \State $j \gets i;$
   \State $\CurrConf \gets \{(s,\rhoEmpty,[\tau_{i-1},\tau_i)) \mid s \in S_0\}$

   \While{$\CurrConf \neq \emptyset \land j \leq |w|$}
   \State $(\PrevConf , \CurrConf) \gets (\CurrConf , \emptyset)$
   \For{$(s,\rho,T) \in \PrevConf$}
   \For{$(s,s',a_j,\lambda,\delta) \in E$} \label{boyer-moore_alg:loop_over_e_1}
   \State $T' \gets \{t_0 \in T \mid
   \delta(\eval(\rho,\tau_j,t_0))\}$
   \label{boyer-moore_alg:update_start_interval} \Comment  {Update available
   start interval.}
   \If{$T' \neq \emptyset$}
   \State $\rho' \gets \rho$

   \For{$x \in \lambda$}
   \State $\rho' \gets \reset(\rho',x,\tau_j)$ \label{boyer-moore_alg:reset_clock} \Comment
   {Reset clock variables in $\lambda$.}
   \EndFor

   \State $\CurrConf \gets \CurrConf \cup (s',\rho',T')$
   \EndIf
   \EndFor
   \EndFor
   
   \For{$(s,\rho,T) \in \CurrConf$}
   \For{$s_f \in F$} \label{boyer-moore_alg:start_of_accepting_state} 
   \For{$(s',s_f,\$,\lambda',\delta') \in E$} \label{boyer-moore_alg:loop_over_e_2}
   \If {$j = |w|$}
   \State {$T'' \gets (\tau_j,\infty)$}
   \Else
   \State $T'' \gets (\tau_j,\tau_{j+1}]$
   \EndIf
   \State $Z \gets Z \cup \solConstr (T',T'',\rho',\delta')$
   \Comment{Solve the linear inequalities.}  
   \EndFor
   \EndFor \label{boyer-moore_alg:end_of_accepting_state}

   \EndFor

   \State $j \gets j + 1$
   \EndWhile
   \If {$j \leq |w|$}
   \State {$\CurrConf \gets \PrevConf$}
   \EndIf
   \State $i \gets i - \Delta_2 (\CurrConf)$ \label{boyer-moore_alg:decrement_i}
   \EndWhile
  \end{algorithmic}}}
 \end{algorithm}

\section{Omitted Proofs}

\subsection{Proof of Thm.~\ref{thm:terminationAndCorrectnessOfNaiveAlg}}
\label{subsec:proofOfThmTerminationAndCorrectnessOfNaiveAlg}
\begin{proof}
For termination, 
it suffices to observe that 
 line~\ref{naive_alg:add_interval} of Alg.~\ref{naive_alg} is  executed no more
than $|w||E|^{{|w|}+1}$ times. 

 For correctness,  assume first that  there is a zone $(T_0,T_f,T_{\Delta}) \in Z$ such that
 $(t,t') \in (T_0,T_f,T_{\Delta})$.
 For $(T_0,T_f,T_{\Delta})$, let $\overline{e}$ be the sequence of
 transitions of $\mathcal{A}$ visited by the algorithm
 (line~\ref{naive_alg:readWordBegin}). 
 This $\overline{e}$ obviously yields an accepting run of $\mathcal{A}$
 over
 $w|_{(t,t')}$: line~\ref{naive_alg:update_start_interval} ensures that
 all the clock constraints in $\overline{e}$ are satisfied; and
 lines~\ref{naive_alg:insertTermBegin}--\ref{naive_alg:insertTermEnd}
 ensure that the run is accepting.


 Conversely, assume that there is an accepting run $(\overline{s},\overline{\nu})$ of
 $\mathcal{A}$ over $w|_{(t,t')}$. We extract a sequence  $\overline{e}$
 of transitions from the run  $(\overline{s},\overline{\nu})$; and 
 let $(T_0,T_f,T_{\Delta})$ be the zone that is added to $Z$ in the
 execution of Alg.~\ref{naive_alg} along the sequence  $\overline{e}$
 (cf.\ line~\ref{naive_alg:readWordBegin}). Then it is straightforward to see
 that $(t,t')\in (T_0,T_f,T_{\Delta})$. 
 \qed
\end{proof}

\subsection{Proof of Thm.~\ref{thm:Delta2_is_skipvalue}}
\label{appendix:thmproof}

We consider a timed automaton
$\mathcal{A} = (\Sigma \amalg \{\$\},S,S_0,C,E,F)$
as the input of timed pattern matching.

\begin{mylemma}
 For any run $r = (\overline{s},\overline{\alpha})$ of the region
 automata $R (\mathcal{A})$, the following equation holds.
 \[
 \pref \bigl( \mathcal{W}(r)\bigr) = \mathcal{W} \bigl( \pref (r) \bigr)
 \]
\end{mylemma}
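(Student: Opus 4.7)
\begin{myproof}[Proof plan]
The plan is to prove the two inclusions separately, with the forward one being essentially bookkeeping and the backward one requiring an extension argument on concrete runs that respects the given partial run.

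For the inclusion $\pref(\mathcal{W}(r)) \subseteq \mathcal{W}(\pref(r))$, I would pick an arbitrary $w \in \pref(\mathcal{W}(r))$, so there exists $w' \in \mathcal{W}(r)$ with $w$ a prefix of $w'$. By definition of $\mathcal{W}(r)$ (Def.~\ref{def:mathcalW}), there is a run $(\overline{s},\overline{\nu})$ of $\mathcal{A}$ over $w'$ with $\nu_k \in \alpha_k$ for all $k$. Truncating this run to the length matching $w$ yields a partial run of $\mathcal{A}$ over $w$; moreover, the truncated sequences $(\overline{s},\overline{\alpha})$ restricted to the same length is a partial run in $\pref(r)$ that witnesses $w \in \mathcal{W}(\pref(r))$.

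The harder direction is $\mathcal{W}(\pref(r)) \subseteq \pref(\mathcal{W}(r))$. Fix $w \in \mathcal{W}(\pref(r))$; then there is some $r' \in \pref(r)$, say $r' = (\overline{s},\overline{\alpha})(0,k)$, together with a partial run $(\overline{s}',\overline{\nu}')$ of $\mathcal{A}$ over $w$ realizing $r'$ (so $\nu'_j \in \alpha_j$ for $j \leq k$). I would extend $w$ step by step along the remaining part of $r$: for each $j = k, k+1, \dotsc, |r|-1$, I use the region automaton transition $((s_j,\alpha_j),(s_{j+1},\alpha_{j+1}),a_{j+1}) \in \noredundancy{E}$ (which by Def.~\ref{def:regionAutom} guarantees the existence of some time-successor $\alpha''$ of $\alpha_j$, a transition $(s_j,s_{j+1},a_{j+1},\lambda,\delta) \in E$ of $\mathcal{A}$ satisfied on $\alpha''$, and corresponding concrete values) to pick an actual delay $t_{j+1} > 0$ and a clock interpretation $\nu'_{j+1} \in \alpha_{j+1}$ consistent with the already-chosen $\nu'_j$. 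Appending the pair $(a_{j+1},\tau_{|w|}+t_{k+1}+\dotsb+t_{j+1})$ to the timed word and $(s_{j+1},\nu'_{j+1})$ to the run extends the witness by one step, and after $|r|-1-k$ such extensions we obtain a timed word $w' \in \mathcal{W}(r)$ with $w$ as a prefix.

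The main obstacle is making the step-wise extension precise: one has to check that the concrete $t_{j+1}$ and $\nu'_{j+1}$ produced from a region-automaton transition can indeed be chosen compatibly with the concrete $\nu'_j$ already in hand (not merely with some arbitrary element of $\alpha_j$). This follows from the standard property of the equivalence $\sim$ used in Def.~\ref{def:regionAutom}: if $\alpha''$ is a time-successor of $\alpha_j$, then for every $\nu \in \alpha_j$ there is a delay $t>0$ with $\nu+t \in \alpha''$; and the reset operation on $\nu+t$ lands in the same region regardless of the particular representative. I would isolate this as a small auxiliary claim (essentially a well-known fact about region automata) and apply it inductively, yielding the desired extension and completing the backward inclusion.
\end{myproof}
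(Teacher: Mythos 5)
Your proposal is correct and follows essentially the same route as the paper's proof: truncation of a concrete run for $\pref(\mathcal{W}(r)) \subseteq \mathcal{W}(\pref(r))$, and extension of the concrete run along the remainder of $r$ for the converse. The only difference is one of detail: the paper simply asserts the existence of the extending clock interpretations $\overline{\nu'}$, whereas you spell out the step-wise justification via the time-successor property of Def.~\ref{def:regionAutom}, which is precisely the argument the paper leaves implicit.
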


\begin{proof}
 Let $w \in \pref (\mathcal{W} (r))$ and $w'$ be a timed word such that
 $w \cdot w' \in \mathcal{W} (r)$.
 Let $(\overline{s},\overline{\nu}) \in
 (\overline{s},\overline{\alpha})$ be a run over $w \cdot w'$.
 Since $(\overline{s} (0,|w|),\overline{\nu} (0,|w|)) \in
 (\overline{s} (0,|w|),\overline{\alpha} (0,|w|))$ is a run over
 $w$ over $\mathcal{A}$, $w$ is a member of $\mathcal{W} (\pref (r))$.

 Let $w \in \mathcal{W} (\pref (r))$ and $n$ be an index of $r$ such
 that $w \in \mathcal{W} (\overline{s} (0,n),\overline{\alpha} (0,n))$.
 Let $(\overline{s} (0,n),\overline{\nu})$ be a run of $\mathcal{A}$
 over $w$ such that for any $0 \leq i \leq n$, $\overline{\nu}_i \in
 \overline{\alpha}_i$ holds. 
 Since $(\overline{s},\overline{\alpha})$ is a run of $R
 (\mathcal{A})$, there is a sequence of interpretations
 $\overline{\nu'}$ such that $(\overline{s},\overline{\nu} \cdot
 \overline{\nu'}) \in (\overline{s},\overline{\alpha})$.
 Since $(\overline{s},\overline{\nu} \cdot \overline{\nu'})$ is a run
 of $\mathcal{A}$, there is a timed word $w'$ such that
 $(\overline{s},\overline{\nu} \cdot \overline{\nu'})$ is a run over $w
 \cdot w'$.
 Thus, $w$ is a member of $\pref (\mathcal{W} (r))$.
\end{proof}

\begin{mylemma}
 Let $r$ and $r'$ be runs of $R (\mathcal{A})$.
 For any integer $1 \leq n < |r'|$, the following property holds.
 
 \[
 (\Sigma^n \times (\mathbb{R}_{>0})^n) \cdot \mathcal{W} (r) \cap \mathcal{W}
 (r') \neq \emptyset
 \Rightarrow \mathcal{W} (r) \cap \mathcal{W} (r'(n,|r'|)) \neq \emptyset
 \]
\end{mylemma}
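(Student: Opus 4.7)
The plan is to reuse the ``tail'' timed word $w$ supplied by the left-hand side as the witness for the right-hand side. First, from the hypothesis I would pick $u \in \Sigma^n \times (\mathbb{R}_{>0})^n$ and $w \in \mathcal{W}(r)$ whose non-absorbing concatenation $u \cdot w$ lies in $\mathcal{W}(r')$. Unfolding Def.~\ref{def:mathcalW}, this gives a run $(\overline{s'}, \overline{\nu'})$ of $\mathcal{A}$ over $u \cdot w$ with $\nu'_i \in \alpha'_i$ for each $i$, where $r' = (\overline{s'}, \overline{\alpha'})$.

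Next I would cut this run at index $n$ and consider the tail $\bigl(\overline{s'}(n, n + |w|),\, \overline{\nu'}(n, n + |w|)\bigr)$. The claim to verify is that this tail is a legitimate partial run of $\mathcal{A}$ over the standalone timed word $w$ (with its own epoch $0$), and moreover one that lives inside the region-indexed partial run $r'(n, |r'|)$. This verification is entirely mechanical: (i) the transitions taken in the tail are inherited edges of $\mathcal{A}$ whose clock guards are satisfied by inheritance; (ii) the elapsed times between consecutive steps of the tail coincide with the increments between consecutive timestamps of $w$, since by the definition of non-absorbing concatenation $u \cdot w = u \circ (w + \tau^{u}_{n})$ the gaps are preserved on indices $n, n{+}1, \ldots, n{+}|w|$; and (iii) membership of each $\nu'_{n+i}$ in $\alpha'_{n+i}$ is immediate from the run being a witness to $u \cdot w \in \mathcal{W}(r')$.

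By the extended definition of $\mathcal{W}$ for partial runs (Def.~\ref{def:mathcalW}), clauses (i)--(iii) together yield $w \in \mathcal{W}(r'(n, |r'|))$. Since $w \in \mathcal{W}(r)$ by choice, the intersection $\mathcal{W}(r) \cap \mathcal{W}(r'(n, |r'|))$ contains $w$ and is therefore non-empty.

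There is no deep obstacle here; the proof is essentially a bookkeeping argument. The only subtlety worth flagging is the translation between the absolute time frame of $u \cdot w$ (in which the tail's transitions occur at absolute times $\tau^{u}_{n} + \tau^{w}_{1}, \tau^{u}_{n} + \tau^{w}_{2}, \ldots$) and the reset-to-zero frame of $w$ viewed as a standalone timed word; but since both clock interpretations and guard satisfaction depend only on \emph{differences} of timestamps, this time-shift is invisible, and the tail's clock interpretations agree exactly with those of the intended partial run on $w$.
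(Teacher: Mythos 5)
Your proposal is correct and follows essentially the same route as the paper's proof: both cut the witnessing run of $\mathcal{A}$ over the concatenated word at index $n$ and observe that the tail, being shift-invariant in its clock interpretations and guard satisfaction, is a partial run over $w$ lying in $r'(n,|r'|)$, so that $w$ (equivalently, the $-\tau_n$-shifted tail of the concatenated word) witnesses the non-emptiness of $\mathcal{W}(r)\cap\mathcal{W}\bigl(r'(n,|r'|)\bigr)$. The paper states this more tersely, while you spell out the bookkeeping (in particular the epoch translation), but the decomposition and the key observation are identical.
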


\begin{proof}
 Assume there is a timed word $(\overline{a},\overline{\tau})$ over
 $(\Sigma \amalg \{\$\})$ such that $(\overline{a},\overline{\tau})
 \in (\Sigma \times (\mathbb{R}_{>0}))^n \cdot \mathcal{W} (r) \cap \mathcal{W}
 (r')$.
 By definition of non-absorbing concatenations, $(\overline{a} (n+1,|\overline{a}|),\overline{\tau}
 (n+1,|\overline{\tau}|)) - \tau_{n} \in \mathcal{W} (r)$ holds.
 Since $(\overline{a},\overline{\tau}) \in \mathcal{W} (r')$,
 $(\overline{a}(n+1,|\overline{a}|),\overline{\tau}(n+1,|\overline{\tau}|)) - \tau_n \in \mathcal{W}
 (r'(n,|r'|))$ holds.
 Thus $(\overline{a} (n+1,|\overline{a}|),\overline{\tau} (n+1,|\overline{\tau}|)) - \tau_{n}$ is an
 element of $\mathcal{W} (r) \cap \mathcal{W} (r'(n,|r'|))$.

\end{proof}

We define the \emph{pre-accepted language} $L_{-\$} (\mathcal{A}) = \{w
(1,|w| - 1) \mid w \in L (\mathcal{A})\}$. Since we consider timed
pattern matching, the removed event of $L_{-\$}$ is \$. The language
$L_{-\$}$ represent the pattern without the terminate character.
By definition of $L'$, $L_{-\$} (\mathcal{A}) \subseteq \bigcup_{r \in L'} \mathcal{W} (r)
\cdot (\Sigma \times \mathbb{R}_{>0})^*$ holds.
For each $s \in S$, we define the language of words leading to $s$ as
$L_s = L ((\Sigma \amalg \{\$\},S,S_0,C,E,\{s\}))$.

\begin{mylemma}
 For any $s \in S \setminus F$, we have the following property.
 \[
 L_s \subseteq \bigcup_{r \in L'_s} \mathcal{W} (r) \cdot (\Sigma \times \mathbb{R}_{>0})^*
 \]
\end{mylemma}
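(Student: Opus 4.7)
The plan is to unpack the definitions directly. Given $w \in L_s$, I will construct an explicit $r \in L'_s$ and a non-absorbing decomposition $w = w_1 \cdot w_2$ such that $w_1 \in \mathcal{W}(r)$ and $w_2 \in (\Sigma \times \mathbb{R}_{>0})^*$.

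First I would invoke the standing assumption (every $\$$-transition enters an accepting state, and accepting states have no outgoing transitions) together with $s \notin F$ to conclude that any run of $\mathcal{A}_s$ ending at $s$ is $\$$-free, so $w \in (\Sigma \times \mathbb{R}_{>0})^*$. Next, pick any accepting run $(\bar{s}, \bar{\nu})$ of $\mathcal{A}_s$ over $w$ and form its pointwise quotient $(\bar{s}, \bar{\alpha})$ in $R(\mathcal{A})$ with $\alpha_i = [\nu_i]$. Because the run starts at $(s_0, [0])$ with $s_0 \in S_0$, every $(s_i, \alpha_i)$ is reachable from $\noredundancy{S_{0}}$, so this is actually a run of $\noredundancy{R}(\mathcal{A})$ (Def.~\ref{def:regionAutomWithoutRedundancy}).

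Then, writing $k := \min\{m, m_s\}$, I would observe that $(\bar{s}, \bar{\alpha}) \in \Run_{\noredundancy{R}(\mathcal{A})}(\beta_0, \beta)$ with $\beta_0 = (s_0, \alpha_0) \in \noredundancy{S_{0}}$ and $\beta = (s, \alpha_{|\bar{s}|-1}) \in \noredundancy{R}(s)$, hence by minimality of $m_s$ we have $|\bar{s}| \geq m_s \geq k$, so the $k$-state prefix $r := (\bar{s}, \bar{\alpha})(0, k-1)$ is defined and lies in $L'_s$ by~(\ref{eq:mLPrimeForTimedBoyerMoore}). Splitting $w$ non-absorbingly into its first $k-1$ events $w_1 := (\bar{a}(1,k-1), \bar{\tau}(1,k-1))$ and the appropriately shifted remainder $w_2$ (well-formed because $\bar{\tau}$ is strictly increasing, cf.\ Def.~\ref{def:timedWord}), the length-$k$ prefix of the original run is a run of $\mathcal{A}$ over $w_1$ whose clock interpretations lie in the regions $\alpha_0, \ldots, \alpha_{k-1}$, yielding $w_1 \in \mathcal{W}(r)$ by Def.~\ref{def:mathcalW}. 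Combined with $w_2 \in (\Sigma \times \mathbb{R}_{>0})^*$ this gives the desired inclusion.

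The one real obstacle is the index bookkeeping: in the paper's conventions $m$ counts events in a shortest accepted word while $m_s$ counts states in a shortest run to $s$, so the $k$-state prefix $r(0, k-1)$ corresponds to $k-1$ events, and the inequality $|\bar{s}| \geq k$ needed for the prefix to be defined must be read with care. Once these lengths are aligned the verification is purely by unpacking definitions. The hypothesis $s \notin F$ is used only to exclude $\$$ from $w$ so that the tail $w_2$ stays inside $(\Sigma \times \mathbb{R}_{>0})^*$ rather than $((\Sigma \amalg \{\$\}) \times \mathbb{R}_{>0})^*$.
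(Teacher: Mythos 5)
Your proof is correct and takes essentially the same route as the paper's: quotient the run over $w$ pointwise into the reachable region automaton $\noredundancy{R}(\mathcal{A})$, note that its $\min\{m,m_s\}$-state prefix lies in $L'_s$ by~(\ref{eq:mLPrimeForTimedBoyerMoore}), and split $w$ accordingly. If anything you are more explicit than the paper, whose proof works at the level of state sequences and leaves implicit both the word-level decomposition and the $\$$-freeness of $w$ --- the one place where $s \notin F$ is actually needed, which you correctly identify.
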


\begin{proof}
 Let $w \in L_s$, and $(\overline{s},\overline{\nu})$ be a
 corresponding run of $w$ satisfying $s_{|\overline{s}|-1} = s$.
 We have
 \begin{align*}
  (\overline{s},[\overline{\nu}]) 
  &\in \{r \mid \beta_0 \in \noredundancy{S}_0, \beta
  \in \noredundancy{R} (s), r \in \mathrm{Run}_{\noredundancy{R}
  (\mathcal{A})} (\beta_0,\beta)\}\\
  & \subseteq \{r (0,\min\{m,m'_s\} - 1)\mid \beta_0 \in
  \noredundancy{S}_0, \beta \in \noredundancy{R} (s), r \in
  \mathrm{Run}_{\noredundancy{R} (\mathcal{A})} (\beta_0,\beta)\} \cdot
  (\noredundancy{S})^*\\
  & = L'_s \cdot (\noredundancy{S})^*
 \end{align*}
 where $[\overline{\nu}] = [\nu_0],[\nu_1],\cdots,[\nu_{|\overline{\nu}|-1}]$.
 Thus, $w \in \mathcal{W} (\overline{s},[\overline{\nu}]) \subseteq
 \bigcup_{r \in L'_s} \mathcal{W} (r) \cdot
 (\Sigma \times \mathbb{R}_{>0})^*$ holds.
\end{proof}

For any $i,j$ such that $1 \leq i < j \leq |w|$, 
for any $(s,\rho,T) \in \Conf (i,j)$, we have $s\not\in F$.
The proof of the theorem.~\ref{thm:Delta2_is_skipvalue} is as follows.
%

 \begin{proof}
  \begin{align*}
   & \exists t \in [\tau_{i-n-1},\tau_{i-n}).\, \exists t' \in
   (t,\infty).\,(t,t') \in \mathcal{M} (w,\mathcal{A})\\ 
   &\Leftrightarrow \exists t \in [\tau_{i-n-1},\tau_{i-n}).\, \exists
   t' \in (t,\infty).\, w|_{(t,t')} \in L(\mathcal{A})\\ 
   &\Rightarrow \exists t \in [\tau_{i-n-1},\tau_{i-n}).\, \exists t'
   \in (t,\infty).\, \exists k \in [i-n,|w|].\, (w (i-n,k) - t) \circ
   (\$,t') \in L(\mathcal{A})\\
   &\Rightarrow \exists t \in [\tau_{i-n-1},\tau_{i-n}).\, \exists k \in
   [i-n,|w|].\, (w (i-n,k) - t) \in L_{-\$} (\mathcal{A})\\
   &\Rightarrow \exists t \in [\tau_{i-n-1},\tau_{i-n}).\, (w
   (i-n,|w|) - t) \in L_{- \$} (\mathcal{A}) \cdot (\Sigma \times
   \mathbb{R}_{>0})^*\\
   &\Rightarrow (\Sigma \times \mathbb{R}_{>0})^n \cdot (w (i,|w|) -
   \tau_{i-1}) \cap L_{- \$} (\mathcal{A}) \cdot (\Sigma \times
   \mathbb{R}_{>0})^* \neq \emptyset\\
   &\Rightarrow (\Sigma \times \mathbb{R}_{>0})^n \cdot (w (i,j) -
   \tau_{i-1}) \cdot (\Sigma \times \mathbb{R}_{>0})^* \cap L_{- \$}
   (\mathcal{A}) \cdot (\Sigma \times \mathbb{R}_{>0})^* \neq \emptyset\\ 
   &\Rightarrow\forall (s,\rho,T) \in \Conf (i,j).\, (\Sigma
   \times \mathbb{R}_{>0})^n \cdot L_s \cdot (\Sigma \times \mathbb{R}_{>0})^*
   \cap L_{- \$} (\mathcal{A}) \cdot (\Sigma \times \mathbb{R}_{>0})^* \neq
   \emptyset\\ 
   &\Rightarrow\forall (s,\rho,T) \in \Conf (i,j).\, \bigl( \bigcup_{r \in
   L'_s} (\Sigma \times \mathbb{R}_{>0})^n \cdot \mathcal{W}(r) \cdot
   (\Sigma \times \mathbb{R}_{>0})^* \bigr) \cap
   \bigl(\bigcup_{r' \in L'} \mathcal{W} (r') \cdot
   (\Sigma \times \mathbb{R}_{>0})^*\bigr) \neq \emptyset\\   
   &\Leftrightarrow \forall (s,\rho,T) \in \Conf(i,j).\, \bigl(
   \bigcup_{r \in L'_s} (\Sigma \times \mathbb{R}_{>0})^n \cdot \mathcal{W}(r)
   \cdot (\Sigma \times \mathbb{R}_{>0})^*\bigr)  \cap
   \bigcup_{r' \in L'} \mathcal{W}(r') \neq \emptyset \\
   &\lor \bigl( \bigcup_{r \in L'_s} (\Sigma \times \mathbb{R}_{>0})^n
   \cdot \mathcal{W}(r) \bigr) \cap
   \bigl( \bigcup_{r' \in L'} \mathcal{W}(r') \cdot 
   (\Sigma \times \mathbb{R}_{>0})^* \bigr) \neq \emptyset\\
   &\Leftrightarrow \forall (s,\rho,T) \in \Conf(i,j).\,
   \bigl( \bigcup_{r \in L'_s}
   (\Sigma \times \mathbb{R}_{>0})^n \cdot \mathcal{W}(r)\bigr) \cap
   \bigl(
   \bigcup_{r' \in L'} \bigcup_{r'' \in \pref(r')} \mathcal{W}(r'')
   \bigr)
   \neq \emptyset \\
   &\lor 
   \bigl(
   \bigcup_{r \in L'_s} \pref((\Sigma \times \mathbb{R}_{>0})^n \cdot \mathcal{W}(r))
   \bigr)
   \cap
   \bigcup_{r' \in L'} \mathcal{W}(r')
   \neq \emptyset
  \end{align*}

 Thus,
  \begin{align*}
   &\Opt (i) = \min \{n \in \mathbb{Z}_{>0} \mid \exists t \in
   [\tau_{i-n-1},\tau_{i-n}), t' \in (t,\infty).\,(t,t') \in \mathcal{M}
   (w,\mathcal{A})\}\\
   &\ge\min\bigl\{ n \in \mathbb{Z}_{>0}\,\bigl|\bigr.\,
   \forall (s,\rho,T) \in \Conf(i,j).\\
   &\bigl( \bigcup_{r \in L'_s}
   (\Sigma \times \mathbb{R}_{>0})^n \cdot \mathcal{W}(r)\bigr) \cap
   \bigl(
   \bigcup_{r' \in L'} \bigcup_{r'' \in \pref(r')} \mathcal{W}(r'')
   \bigr)
   \neq \emptyset \\
   &\lor 
   \bigl(
   \bigcup_{r \in L'_s} \pref((\Sigma \times \mathbb{R}_{>0})^n \cdot \mathcal{W}(r))
   \bigr)
   \cap \bigcup_{r' \in L'} \mathcal{W}(r')
   \neq \emptyset\bigr\}\\
   &\text{\{Such $n$ is not more than $m$\}}\\
   &=\min\bigl\{ n \in \mathbb{Z}_{>0}\,\bigl|\bigr.\,
   \forall (s,\rho,T) \in \Conf(i,j).\\
   &\bigl( \bigcup_{r \in L'_s}
   (\Sigma \times \mathbb{R}_{>0})^n \cdot \mathcal{W}(r)\bigr) \cap
   \bigl(
   \bigcup_{r' \in L'} \bigcup_{r'' \in \pref(r')} \mathcal{W}(r'')
   \bigr)
   \neq \emptyset \\
   &\lor 
   \bigl(
   \bigcup_{r \in L'_s} (\Sigma \times \mathbb{R}_{>0})^n \cdot \pref(\mathcal{W}(r))
   \bigr)
   \cap
   \bigcup_{r' \in L'} \mathcal{W}(r')
   \neq \emptyset\bigr\}\\
   &=\min\Bigl\{ n \in \mathbb{Z}_{>0}\,\bigl|\bigr.\,
   \forall (s,\rho,T) \in \Conf(i,j).\\
   &\bigl( \bigcup_{r \in L'_s}
   (\Sigma \times \mathbb{R}_{>0})^n \cdot \mathcal{W}(r)\bigr) \cap
   \bigl(
   \bigcup_{r' \in L'} \bigcup_{r'' \in \pref(r')} \mathcal{W}(r'')
   \bigr)
   \neq \emptyset \\
   &\lor 
   \bigl(
   \bigcup_{r \in L'_s} \bigcup_{r'' \in \pref (r)}(\Sigma \times \mathbb{R}_{>0})^n \cdot \mathcal{W}(r'')
   \bigr)
   \cap
   \bigcup_{r' \in L'} \mathcal{W}(r')
   \neq \emptyset\Bigr\}\\  
   &\text{\{$\Conf (i,j)$ is finite.\}}\\
   &=\max_{(s,\rho,T) \in \Conf (i,j)}
   \min\Bigl\{ n \in \mathbb{Z}_{>0}\,\bigl|\bigr.\\
   &\bigl( \bigcup_{r \in L'_s}
   (\Sigma \times \mathbb{R}_{>0})^n \cdot \mathcal{W}(r)\bigr) \cap
   \bigl(
   \bigcup_{r' \in L'} \bigcup_{r'' \in \pref(r')} \mathcal{W}(r'')
   \bigr)
   \neq \emptyset \\
   &\lor 
   \bigl(
   \bigcup_{r \in L'_s} \bigcup_{r'' \in \pref (r)}(\Sigma \times \mathbb{R}_{>0})^n \cdot \mathcal{W}(r'')
   \bigr)
   \cap
   \bigcup_{r' \in L'} \mathcal{W}(r')
   \neq \emptyset\Bigr\}\\
   &\ge\max_{(s,\rho,T) \in \Conf (i,j)}
   \min\bigl\{ n \in \mathbb{Z}_{>0}\,\bigl|\bigr.\\
   &\bigl( \bigcup_{r \in L'_s}
   \mathcal{W}(r)\bigr) \cap
   \bigl(
   \bigcup_{r' \in L'}
   \bigcup_{r'' \in \pref(r'(n,|r'|))}
   \mathcal{W}(r'')
   \bigr)
   \neq \emptyset \\
   &\lor 
   \bigl(
   \bigcup_{r \in L'_s} \bigcup_{r'' \in \pref (r)} \mathcal{W}(r'')
   \bigr)
   \cap
   \bigl(
   \bigcup_{r' \in L'} \mathcal{W}(r'(n,|r'|))
   \bigr)
   \neq \emptyset\bigr\}\\
   &=\max_{(s,\rho,T) \in \Conf (i,j)} \min_{r \in L'_s}\min\bigl\{
   \min_{r' \in L'} \min \bigl\{ n \in \mathbb{Z}_{>0} \mid
   \mathcal{W}(r) \cap
   \bigl(\,\bigcup_{r''\in \pref \bigl(r'(n,|r'|)\bigr)}\mathcal{W}(r'') \,\bigr)
   \neq \emptyset\,\bigr\},\\
   &\min_{r' \in L'} \min \bigl\{ n \in \mathbb{Z}_{>0} \mid
   \bigl(\bigcup_{r''\in \pref(r)}\mathcal{W}(r'')\bigr)
   \cap \mathcal{W}\bigl(r'(n,|r'|)\bigr) \neq \emptyset\,\bigr\}\bigr\}\\
   &=\max_{(s,\rho,T) \in \Conf (i,j)} \min_{r \in
   L'_s}\min\{d_1 (r),d_2 (r)\}\\
   &=\Delta_2(\Conf (i,j))
  \end{align*}
 \end{proof}

\subsection{Proof of Prop.~\ref{prop:productAutomAndPartialRun}}
\begin{proof}
 Assume there is a timed word $(\overline{a},\overline{\tau})$ such that $(\overline{a},\overline{\tau})
 \in \mathcal{W} (r) \cap \mathcal{W} (r')$.
 Let $(\overline{s},\overline{\nu}) \in r$ be a partial run of
 $\mathcal{A}$ over $(\overline{a},\overline{\tau})$ and $(\overline{s}',\overline{\nu}')
 \in r'$ be a partial run of $\mathcal{A}'$ over $(\overline{a},\overline{\tau})$.
 Since $((\overline{s},\overline{s}')
 ,(\overline{\nu},\overline{\nu}')) \in (r,r')$ is a partial run over
 $(\overline{a},\overline{\tau})$ of $\mathcal{A} \times \mathcal{A}'$, $(\overline{a},\overline{\tau})$
 is a member of $\mathcal{W} ((r,r'))$.
 
 Assume there is a timed word $(\overline{a},\overline{\tau})$ such that $(\overline{a},\overline{\tau})
 \in \mathcal{W} ((r,r'))$.
 Let $((\overline{s},\overline{s}'),(\overline{\nu},\overline{\nu}'))
 \in (r,r')$ be a partial run over $(\overline{a},\overline{\tau})$ of
 $\mathcal{A} \times \mathcal{A}'$ where $\overline{\nu}$ is a clock
 interpretation over $C$ and $\overline{\nu}'$ is a clock interpretation
 over $C'$.
 Since $(\overline{s},\overline{\nu}) \in r$ is a partial run over
 $(\overline{a},\overline{\tau})$ of $\mathcal{A}$ and $(\overline{s}',\overline{\nu}')
 \in r'$ is a partial run over $(\overline{a},\overline{\tau})$ over $\mathcal{A}'$,
 $(\overline{a},\overline{\tau}) \in \mathcal{W} (r) \cap \mathcal{W} (r')$. 
\end{proof}

\section{Detailed Results of The Experiments}
\label{appendix:detailedResultsOfOurExperiments}

The detailed results of our experiments are in the following tables.

 \begin{table}[ht]
    \centering
    \caption{Execution time in Case~1.}
    \label{121717_21Jan16}
    \scalebox{0.80}{  
    \drawexectable{0-0622112729.dat}
    }
\end{table}
\begin{table}[ht]
 \centering
 \caption{Execution time in Case~2.}
 \label{141859_16Feb16}
 \scalebox{1.0}{  
 \drawexectable{2-0622112729.dat}
 }  
\end{table}

     \begin{table}[ht]
      \centering
      \caption{Execution time in Case~3.}
      \label{131638_23Jan16}
      \scalebox{0.90}{
      \pgfplotstabletypeset[
      multicolumn names, 
      display columns/0/.style={
      fixed,fixed zerofill,precision=2,
      column name=$\lambda$, 
      },
      display columns/1/.style={
      sci,
      sci zerofill,
      column name=$\tau_{|\overline{\tau}|}$},
      display columns/2/.style={
      fixed,fixed zerofill,precision=0,
      column name=$|w|$},
      display columns/3/.style={
      fixed,fixed zerofill,precision=2,
      column name=Naive Time\,(ms.)},
      display columns/4/.style={
      fixed,fixed zerofill,precision=2,
      column name=BM Time\,(ms.)},
      display columns/5/.style={
      sci,
      sci zerofill,
      column name=BM + Preproc. Time\,(ms.)},
      display columns/6/.style={
      column name=$|Z|$},
      every head row/.style={
      before row={\toprule}, 
      after row={\midrule} 
      },
      every last row/.style={after row=\bottomrule}, 
      ]{4-0622123306.dat} 
      }
     \end{table}

\begin{table}[ht]
   \centering
   \caption{Execution time in Case~4.}
   \label{162108_21Jan16}
   \scalebox{1.00}{
   \pgfplotstabletypeset[
   sci,
   sci zerofill,
   multicolumn names, 
   display columns/0/.style={
   column name=$\lambda$, 
   fixed,fixed zerofill,precision=2,
   },
   display columns/1/.style={
   column name=$\tau_{|\overline{\tau}|}$},
   display columns/2/.style={
   fixed,fixed zerofill,precision=0,
   column name=$|w|$},
   display columns/3/.style={
   fixed,fixed zerofill,precision=2,
   column name=Naive Time\,(ms.)},
   display columns/4/.style={
   fixed,fixed zerofill,precision=0,
   column name=$|Z|$},
   fixed,fixed zerofill,precision=0,
   every head row/.style={
   before row={\toprule}, 
   after row={\midrule} 
   },
   every last row/.style={after row=\bottomrule}, 
   ]{3-0622123253.dat} 
   }
\end{table}
  
\begin{table}[h]
   \centering
   \caption{Execution time in Case~5.}
   \label{201607_13Feb16}
   \scalebox{1.0}{
   \drawexectable{6-0622112729.dat}
   }
\end{table}

\auxproof{
\section{Check list}
\begin{itemize}
 \item Technical question: What does the additional expressivity
       of timed automata (over
       TREs) buy us?
       \begin{itemize}
	\item Fig.~12 of the thesis
	\item A possible scenario: two anonymous
	      processes sending
	      messages with the interval of
	      1 second (they execute the
	      same action because they are anonymous)
	\item We will not emphasize this point, unless we come up 
	      with a very nice example e.g.\ through discussions with
	      the Toyota people
	\item To-do: \emph{formal arguments}. Let 
	      \begin{align*}
	       L_{0}
	       \;=\;
	       &
	       \bigl\{\,
	       (aa\dotsc a, \tau_{1}\tau_{2}\dotsc \tau_{2m})
	       \,\bigr|\,
	       m\in\mathbb{N}, 
	       	       \tau_{1}\in[0,1],
	       \tau_{2}\in[\tau_{1},1],
	       \\
	       \bigr.\,
	       &\qquad
	       \tau_{3}=\tau_{1}+1,
	       \tau_{4}=\tau_{2}+1,
	       \dotsc,
       	       \tau_{2m-1}=\tau_{2m-3}+1,
       	       \tau_{2m}=\tau_{2m-2}+1
	       \,\bigr\}
	      \end{align*}
	      \begin{itemize}
	       \item Construct a timed automaton $\mathcal{A}_{0}$ such
		     that $L(\mathcal{A}_{0})=L_{0}$. 
	       \item Show that there does not exist any TRE $r$ such
		     that
		     $L(r)=L_{0}$. 
	      \end{itemize}
       \end{itemize}
 \item Main contribution: Boyer-Moore
       \begin{itemize}
	\item Twice as fast. How good is it? Three possible answers.
	      \begin{itemize}
	       \item Find industrial case study and identify the size of
		     typical problem instances. 
	       \item Scan RV papers and see how they justify their
		     results.
	       \item Ask industry people. 
	      \end{itemize}
	\item Reviewed articles:
	      \begin{itemize}
	       \item There are not too many papers in RV that emphasize
		     efficiency, especially for off-line algorithms
	       \item \cite{DBLP:conf/rv/ColomboP12} gives us a good
		     sense on the scale of real-world monitoring
		     problems, but not directly related to the current
		     work. 
	      \end{itemize}
       \end{itemize}
 \item Why timed automata in place of TREs? Two reasons:
       \begin{itemize}
	\item Using TAs allows us to use the Boyer-Moore type optimization
	\item Additionally: TAs are more expressive (although the value
	      of the additional expressivity is not very clear yet)
       \end{itemize}
 \item Comparison with works on timed pattern matching with TREs
       \begin{itemize}
	\item Three papers. Explain what each of them does. Compare with
	      our current work.
	      \begin{itemize}
	       \item FORMATS'14: The first algorithm for the timed
                     pattern matching.  It is offline and inductive.
	       \item CAV'15: Application of timed pattern matching for
                     measuring performance evaluation. The pattern TRE
                     represents the segments to measure. 
	       \item TACAS'16: An online algorithm for the timed pattern
                     matching based on the derivative similar to the
                     Brzozowski derivative.
	      \end{itemize}
       \end{itemize}
 \item Some additional results:
       \begin{itemize}
	\item Theorem: the result of timed pattern matching is described by a
       finite union of zones
	\item The online version of the naive algorithm
       \end{itemize}
 \item Sections
       \begin{itemize}
	\item Introduction
	      \begin{itemize}
	       \item Give the feeling of how the BM algorithm works with a
			    small example
	      \end{itemize}
	\item Preliminaries      
	      \begin{itemize}
	       \item Timed Pattern Matching
		     \begin{itemize}
		      \item Timed Words
			    \begin{itemize}
			     \item Comparison with signals
			    \end{itemize}
		      \item Timed Automata
			    \begin{itemize}
			     \item Briefly mention TREs, and that TAs are more expressive
			    \end{itemize}
		     \end{itemize}
	       \item The Boyer-Moore algorithm
		     \begin{itemize}
		      \item Generalized Boyer-Moore, not formally but
			    intuitively, preferably in parallel to the
			    example in the introduction
		     \end{itemize}
	      \end{itemize}       
	\item The Naive Algorithm
	      \begin{itemize}
	       \item Theorem: the answer is representable
	       \item Its online variant
	      \end{itemize}
	\item A Timed Boyer-Moore Algorithm
	\item Experiments
       \end{itemize}
 \item Algorithms
       \begin{itemize}
	\item The original Boyer-Moore algorithm (for string matching)
	\item The ``generalized'' Boyer-Moore type algorithm
	      in~\cite{Watson2003} for pattern matching (where a pattern
	      is given by an NFA)
	      \begin{itemize}
	       \item In short: the Boyer-Moore type algorithm
		     in~\cite{Watson2003} for pattern matching
	      \end{itemize}
	\item Our naive algorithm for timed pattern matching
	\item A online variant of our naive  algorithm for timed pattern
	      matching
	      \begin{itemize}
	       \item In short: our online (naive) algorithm for timed pattern
	      matching
	      \end{itemize}
	\item Our Boyer-Moore type algorithm for timed pattern matching
       \end{itemize}
\end{itemize}
}

\auxproof{
\section{Our Notations}
\begin{itemize}
 \item {\color{red}\xmark} $\mathbb{N}$ and $\mathbb{Z_{+}}$.
 \item {\color{dgreen}\cmark} $\mathbb{Z}$, $\mathbb{Z}_{>0}$, and $\mathbb{Z}_{\geq
       0}$.
 \item {\color{dgreen}\cmark}   $\overline{\sigma} =
       \sigma_1,\sigma_2,\cdots,\sigma_n$, and $\sigma \in \Sigma$.
 \item {\color{red}\xmark} $\overline{a} =
       a_1,a_2,\cdots,a_n$, and $a \in \Sigma$.
\end{itemize}
}

\end{document}